\theoremstyle{plain}
\newtheorem{theorem}{Theorem}[section]
\newtheorem{theorem*}{Theorem}
\newtheorem{corollary}[theorem]{Corollary}
\newtheorem{lemma}[theorem]{Lemma}
\newtheorem{proposition}[theorem]{Proposition}
\theoremstyle{definition}
\newtheorem{definition}[theorem]{Definition}
\newtheorem{remark}[theorem]{Remark}
\newtheorem*{remark*}{Remark}
\newcommand{\proofstep}[1]{%
  \par
  \addvspace{\medskipamount}
  \textit{#1\@addpunct{.}}\enspace\ignorespaces
}
\begin{document}

\title{Non-trivial bundles and Algebraic Classical Field Theory}
\author[R. Brunetti]{Romeo Brunetti}
\address{Dipartimento di Matematica, Universit\`a di Trento, 38123 Povo (TN), Italy}
\email{romeo.brunetti@unitn.it}
\author[A. Moro]{Andrea Moro}
\address{Dipartimento di Matematica, Universit\`a di Trento, 38123 Povo (TN), Italy}
\email{andrea.moro-1@unitn.it}
\maketitle

\begin{abstract}
\noindent Inspired by the recent algebraic approach to classical field theory, we propose a more general setting based on the manifold of smooth sections of a non-trivial fiber bundle. Central is the notion of observables over such sections, \textit{i.e.} appropriate smooth functions on them. The kinematics will be further specified by means of the Peierls brackets, which in turn are defined via the causal propagators of linearized field equations. We shall compare the formalism we use with the more traditional ones.
\end{abstract}

\tableofcontents




\section{Introduction}

The aim of the present paper is to generalize the recent treatment of relativistic classical field theory \cite{acftstructure}, seen as a Lagrangian theory based on (non linear) functionals over the infinite dimensional configuration space, to the case where the latter is made of sections of general bundles.

There are few mathematical treatments of classical field theories, all finding inspirations and drawing ideas from two main sources, the Hamiltonian and Lagrangian formalism of classical mechanics. If one intends also to include relativistic phenomena then there remain essentially only two rigorous frameworks, both emphasizing the geometric viewpoint: the multisymplectic approach (see \cite{gotay1998momentum}, \cite{gotay2004momentum}, \cite{forger2005covariant}) and another related to the formal theory of partial differential equations (see \cite{fati}, \cite{krupka}). They have several points in common and there is now a highly developed formalism leading to rigorous calculus of variations. 

Physicists look at Lagrangian classical field theories with more interest. This clearly amounts to developing a formalism in which one treats the intrinsic infinite dimensional degrees of freedom of the configuration spaces. Both previously cited frameworks use ingenuous ideas to avoid the direct treatment of infinite dimensional situations.  However, there exists another treatment of classical mechanics which emphasises more the algebraic and the analytic structures and is intrinsically infinite dimensional, which is named after the pioneering works of von Neumann (\cite{neumann1932operatorenmethode}) and Koopman (\cite{koopman1931hamiltonian}) and works directly in Hilbert spaces. If one is willing to generalise this last setting to field theories, one finds immediately an insurmountable difficulty namely, a result by Eells and Elworthy (see \cite{ely}, \cite{ely2}) constrains a configuration space, viewed as a second countable Hilbert manifold, to be smoothly embedded into its ambient space, i.e., it is just an open subset of the Hilbert space. Hence, we need to bypass this fact of life and find a clever replacement.

This task was done in the last decade, in which another treatment was developed that drew inspirations from perturbative quantum field theories in the algebraic fashion \cite{renormalizationgroup} and which is closer in spirit to the von Neumann-Koopman formalism. Based on these ideas, and moreover using as inputs also some crucial notions belonging to microlocal analysis, one of the authors (RB) in collaboration with Klaus Fredenhagen and Pedro Lauridsen Ribeiro \cite{acftstructure}, have formalised the new treatment for the case of scalar fields on globally hyperbolic spacetimes. It emphasises more the observables' point of view and deals directly with the configuration space as an infinite dimensional manifold but modelled now over locally convex spaces. This new approach gives a strong structural feedback to quantum field theories in the functional-algebraic formalism using deformation quantization, which is especially clear in a recent treatment given by the authors \cite{morophd,brunettimoro2}. In particular, we shall define a class of functionals named  ``microcausal functionals'' that are extremely important objects in perturbation theory for interacting quantum field theories since they are the image of local functionals under the chronological products. To work efficiently with such functionals one relies heavily on the clarifications given in the last thirty years about the most appropriate calculus on locally convex spaces (see, \textit{e.g.}, \cite{mb,kriegl1997convenient,glockneeb}).  

As advertised above, it is one of the main aims of the present paper to generalise such treatment to the more complicated situation in which fields are sections of fiber bundles. There are plenty of examples in the physics literature dealing with such structures, one for all being the case of nonlinear $\sigma$-models (wave maps in the mathematicians' language). At first sight the idea looks straightforward to implement, however it contains some subtleties whose treatment needs a certain degree of care. Indeed, in our general setting, images of the fields are never linear spaces and moreover, the global configuration space has only a manifold structure. This forces us to generalize many notions like the support of functionals, or the central notion of locality/additivity, over configuration space, which can be given in two different formulations: a global formulation that uses the notion of relative support already used in \cite{gravbrunetti} and a local one that uses the notion of charts over configuration space seen as a infinite dimensional manifold. It is gratifying that both notions give equivalent results, as shown \textit{e.g.}\ in Proposition~\ref{prop_1_additivity}. 

We summarize the content of this article.

Section~$2$ is devoted to the geometrical tools used in the rest of the article: we introduce the classical geometrical formalism based on jet manifolds and then the infinite dimensional formalism. 

Section~$3$ focuses on the definition of observables, their support, and the introduction to various classes of observables depending on their regularity. In particular two of this classes admit a ultralocal characterization, i.e. local in the sense of the manifold structure of the space of sections. In the end we introduce the notion of \textit{generalized Lagrangian}, essentially showing that each Lagrangian in the standard geometric approach is a Lagrangian in the algebraic approach as well. We then discuss how linearized field equations are derived from generalized Lagrangians. A crucial point here is the way we characterize microlocal functionals in Proposition \ref{porop_1_muloc_charachterization}, by applying the nonlinear version of Peetre theorem.

Section~$4$ begins with some preliminaries  about normal hyperbolicity and normally hyperbolic operators. Here, generalized Lagrangians of second order which are normally hyperbolic, as it is in the case of wave maps, play a major role. We then show the existence of the causal propagator which in turn is used in Definition \ref{def_1_Peierls} to define the Poisson bracket on the class of \textit{microlocal functionals}. Then we enlarge the domain of the bracket to the previously recalled \textit{microcausal functionals}, defined by requiring a specific form of the wave front set of their derivatives. Finally, Theorems \ref{thm_1_mucaus_1}, \ref{thm_1_peierls_closedness} and \ref{thm_1_jacobi} establish the Poisson $*$-algebra of microcausal functionals. 

Section~$5$ presents results that culminate in Theorem \ref{thm_1_mucaus_top}, which establishes that microcausal functionals can be given the topology of a nuclear locally convex space. Furthermore Propositions \ref{prop_1_C-infty_ring} and \ref{prop_1_mu_caus_top_prop} give additional properties concerning this space and its topology. We conclude the section by defining the on-shell ideal with respect to the Lagrangian generating the Peierls bracket and the associated Poisson $*$-algebraic ideal. 

In Section~$6$ we elucidate the previous results by showing how to adapt them to the case of wave maps. In particular we shall write the expression for the causal propagator in $4$-dimensional spacetime.  

Finally, in the Appendix we give details on the topology of the space of sections $\Gamma^{\infty}(M\leftarrow B)$ and its manifold structure, gathering a number of results that cannot be found in a single reference. We hope that this may make the paper more enticing for newcomers in the field.

\section{Geometrical setting} 

\subsection{Preliminaries}

Let $M$ be a smooth $m$-dimensional manifold, suppose that there is a smooth section $g$ of $T^*M \vee T^*M \rightarrow M )$, where $\vee$ denotes the symmetric tensor product, such that its signature is $(-,+,\ldots,+)$; then $(M,g)$ is called a Lorentz manifold and $g$ its Lorentzian metric. If we take local coordinates $(U,\lbrace x^{\mu} \rbrace)$ and denote $d\sigma(x)\doteq dx^1 \wedge \ldots \wedge dx^m$, then $d\mu_g(x)=\sqrt{\vert g(x) \vert} d\sigma(x)$ is the canonical volume element of $M$, where as usual we denote by $g(x)$ the determinant of $g$ calculated at $x \in M$. Any Lorentzian metric $g$ induces the so called musical isomorphisms $g^{\sharp}:TM \rightarrow T^*M:(x,v)\mapsto (x,g_{\mu \nu}v^{\mu} dx^{\nu})$, $g^{\sharp}:T^*M \rightarrow TM:(x,\alpha)\mapsto (x,g^{\mu \nu}\alpha_{\mu} \partial_{\nu})$, where $\lbrace dx^{\mu} \rbrace_{\mu=1,\ldots,m}$ is the standard basis of $T_x^*M$ in local coordinates $(U,\lbrace x^{\mu} \rbrace)$, and $\lbrace \partial_{\mu} \rbrace_{\mu=1,\ldots,m}$ the dual basis of $T_xM$. 

Given any Lorentzian manifold $(M,g)$, a non zero tangent vector $v_x \in T_xM$ is \textit{timelike} if $g_x(v_x,v_x)<0$, \textit{spacelike} if $g_x(v_x,v_x)>0$, \textit{lightlike} if $g_x(v_x,v_x)=0$; similarly a curve $\gamma:\mathbb{R}\rightarrow M: t \mapsto \gamma(t)$ is called \textit{timelike} (resp. \textit{lightlike}, resp. \textit{spacelike}) if at each $t \in \mathbb{R}$ its tangent vector is \textit{timelike} (resp. \textit{lightlike}, resp. \textit{spacelike}), a curve that is either timelike or lightlike is called \textit{causal}. We denote the cone of timelike vectors tangent to $x\in M$ by $V_g(x)$. A Lorentzian manifold admits a \textit{time orientation} if there is a global timelike vector field $T$, then timelike vectors $v\in T_xM$ that are in the same connected component of $T(x)$ inside the light cone, are called \textit{future directed}. When an orientation is present we can consistently split, for each $x\in M$, the set $V_g(x)$ into two disconnected components $V^+_x(x)\cup V^-_g(x)$ calling them respectively the sets of \textit{future directed} and \textit{past directed} tangent vectors at $x$. Given $x,y\in M$, we say that $x\ll y$ if there is a future directed timelike curve joining $x$ to $y$, and $x \leq y$ if there is a causal curve joining $x$ to $y$. We denote $I^{+}_M(x)=\lbrace y \in M : x \ll y\rbrace$, $I^{-}_M(x)=\lbrace y \in M : x \gg y\rbrace$, $J^{+}_M(x)=\lbrace y \in M : x \leq y\rbrace$, $J^{-}_M(x)=\lbrace y \in M : x \geq y\rbrace$ and call them respectively the \textit{chronological future}, \textit{chronological past}, \textit{causal future}, \textit{causal past} of $x$. $(M,g)$ is said to be \textit{globally hyperbolic} if $M$ is causal, i.e. there are not closed causal curves on $M$ and the sets $J_M(x,y)\doteq J_M^+(x)\cap J_M^-(y)$ are compact for all $x,y \in M$. Equivalently, $M$ is globally hyperbolic if there is a smooth map $\tau:M \rightarrow \mathbb{R}$ called \textit{temporal function} such that its level sets, $\Sigma_t$ are Cauchy hypersurfaces, that is every inextensible causal curve intersects $\Sigma_t$ exactly once. A notable consequence is that any globally hyperbolic manifold $M$ has the form $\Sigma \times \mathbb{R}$ for some, hence any, Cauchy hypersurface $\Sigma$. We point to \cite{semiriem} for details on the geometric structure and to \cite{geroch1970domain}, \cite{bernal2003smooth, bernal2005smoothness} for details on globally hyperbolic manifolds.\\

 A \textit{fiber bundle} is a quadruple $(B,\pi,M,F)$, where $B$, $M$, $F$ are smooth manifold called respectively the \textit{bundle}, the \textit{base} and the \textit{typical fiber}, such that:
\begin{enumerate}
    \item[$(i)$] $\pi:B \rightarrow M$ is a smooth surjective submersion;
    \item[$(ii)$] there exists an open covering of the base manifold $M$, $\lbrace U_{\alpha} \rbrace_{\alpha \in A}$ admitting, for each $\alpha \in A$, diffeomorphisms $t_{\alpha}: \pi^{-1}(U_{\alpha})\rightarrow U_{\alpha}\times F$, called trivializations, which are fiber respecting \textit{i.e.} $\mathrm{pr}_1 \circ t_{\alpha} = \pi|_{\pi^{-1}(U_{\alpha})}$.
\end{enumerate}
Given $U_{\alpha}$, $U_{\beta}$ subsets of $M$, we can define the \textit{transition mappings} $g_{\alpha\beta}: U_{\alpha\beta}\times F \to F :(x,y) \mapsto g_{\alpha\beta}(x,y)\doteq \mathrm{pr}_2 \circ t_{\alpha}\circ t_{\beta}^{-1}(x,y)$. We remark that $\mathrm{pr}_1 \circ t_{\alpha} = \pi|_{\pi^{-1}(U_{\alpha})}$ implies $\mathrm{pr}_1 = \pi \circ t_{\alpha}^{-1}$.\\

Using trivialization it is possible to construct charts of $B$ via those of $M$ and $F$. We call those \textit{fibered coordinates} and we denote them by $(x^{\mu},y^i)$ with the understanding that Greek indices denote the base coordinates and Latin indices the fiber coordinates. Given two fiber bundles $(B_i,\pi_i,M_i,F_i)$, $i=1,2$, we define a \textit{fibered morphism} as a pair $(\Phi,\phi)$, where $\Phi:B_1 \rightarrow B_2$, $\phi:M_1 \rightarrow M_2$ are smooth mappings, such that $\pi_2 \circ \Phi= \phi \circ \pi_1$. We denote by 
\begin{equation}\label{eq_1_space_of_sections}
	\Gamma^{\infty}(M\leftarrow  B)=\lbrace \varphi:M \rightarrow B, \space\ \mathrm{smooth} \space\ : \pi \circ \varphi =id_M \rbrace
\end{equation}
the space of \textit{sections} of the bundle. \\

\textit{Vector bundles} are a particular kind of fiber bundle whose standard fibers are vector spaces and their transition mappings act on fibers as transformations of the general Lie group associated to the standard fibers. We denote coordinates of these bundles by $\lbrace x^{\mu}, v^i\rbrace$. Suppose that $(E,\pi,M,V)$, $(F,\rho,M,W)$ be vector bundles over the same base manifold, then it is possible to construct a third vector bundle $(E\otimes F,\pi\otimes \rho,M,V\otimes W)$ called the tensor product bundle whose standard fiber is the tensor product of the standard fibers of the starting bundles. For example, $k$-forms over $M$ are sections of the vector bundle $(\Lambda_k(M),\tau_{\Lambda},M,\Lambda^k\mathbb R^m)$.\\

In the sequel we will use particular vector fields of $B$: they are called \textit{vertical vector fields} and belong to $\mathfrak{X}_{\mathrm{vert}}(B)=\lbrace X \in \Gamma^{\infty}(B \leftarrow TB) : T \pi(X)=0 \rbrace \subset \mathfrak X (B) $, where $\pi: B \rightarrow M$ is the bundle projection and $T$ denotes the tangent functor. We will denote by $\Phi^{X}_t : B \rightarrow B$ the flow of any vector field on $B$, and assume in the rest of this work that the parameter $t$ varies in an appropriate interval which has been maximally extended. Note that if $X \in \mathfrak{X}_{\mathrm{vert}}(B)$, then $\Phi^{X}_t$ is a fibered morphism whose base projection is the identity over $M$. Vertical vector fields can be seen as a sections of the \textit{vertical vector bundle}, $(VB\doteq \mathrm{ker}(T\pi),\tau_V,B)$ which is easily seen to carry a vector bundle structure over $B$. Another construction that we shall often use is that of \textit{pullback bundle}: given a fiber bundle $B$ and a smooth map $\psi: M\to N$, we can describe another bundle over $N$ with the same typical fibers as the original fiber bundle and with total space defined by $\psi^\ast B\doteq\{(n,b)\in N\times B :\ \psi(n)=\pi(b)\}$ and projection $\psi^\ast\pi \doteq \mathrm{pr}_1|_{\psi^\ast B}$. In particular, we call $\psi^*B$ the \emph{pullback bundle of $B$ along $\psi$}.\\

Another important notion necessary to the geometric framework of classical field theories are jet bundles. Heuristically they geometrically formalize PDEs. For general references see \cite[Chapter IV, $\S \ 12$]{kolar2013natural} or \cite{jet}. Rather then giving the most general definition, we simply recall the bundle case. Given any fiber bundle $(B,\pi,M,F)$, two sections, $\varphi_1$, $\varphi_2$ are $k$th-order equivalent in $x \in M$, which we write, $\varphi_1 \sim_{x}^k \varphi_2$ if for all $f \in C^{\infty}(B)$, $\gamma \in C^{\infty}(\mathbb{R},M)$ having $\gamma(0)=x$, the Taylor expansions at $0$ of order $k$ of $f\circ \varphi_1 \circ \gamma$ and $f\circ \varphi_2 \circ \gamma$ coincide. The relation $ \sim_{x}^k $ becomes an equivalence relation and we denote by $j^k_x\varphi$ the equivalence class with respect to $\varphi$. Letting $J^k_xB \doteq \Gamma^{\infty}_x(M\leftarrow B)/\sim_{x}^k$, where $\Gamma^{\infty}_x(M\leftarrow B)$ are the germs of local sections of $B$ defined on a neighborhood of $x$, the $k$th order \textit{jet bundle} is then 
$$
	J^kB \doteq \bigsqcup_{x\in M} J^k_xB\ . 
$$
The latter inherits the structure of a fiber bundle with base either $M$, $B$ or any $J^lB$ with $l<k$. If $\lbrace x^{\mu},y^j \rbrace$ are fibered coordinates on $B$, then we induce fibered coordinates $ \lbrace x^{\mu},y^j,y^j_{\mu},\ldots,y^j_{\mu_1 \ldots \mu_k} \rbrace$ on $J^kB$ where Greek indices are understood to be symmetric. The latter coordinates embody the geometric notion of PDEs. The family $\lbrace (J^rB,\pi^r) \rbrace_{r \in \mathbb{N}}$ with $\pi^r:J^rB\rightarrow M$ allows an inverse limit $(J^{\infty}B,\pi^{\infty},\mathbb{R}^{\infty})$ called the \textit{infinite jet bundle} over $M$, it can be seen as fiber bundle whose standard fiber $\mathbb{R}^{\infty}$ is a Fréchet topological vector space. Its sections denoted by $j^{\infty}\varphi$ are called infinite jet prolongations.\\

Given a vector bundle $E \to M$, we define its space of distributional sections $\Gamma^{-\infty}(M\leftarrow E)$ as the strong topological dual of $\Gamma^{\infty}_c(M\leftarrow E)$ equipped with the standard limit Fréchet topology. Notice that, if we denote by $E'\to M$ the dual bundle of $E\to M$, given $\mu\in\Gamma^{\infty}\big(M\leftarrow E'\otimes \Lambda_m(M)\big)$, we can define
$$
    X \in \Gamma^{\infty}(M\leftarrow E) \mapsto \int_M \langle X, \mu\rangle (x)\in \mathbb{R}\ .
$$
Thus $\Gamma^{\infty}\big(M\leftarrow E'\otimes \Lambda_m(M)\big)$ embeds into $\Gamma^{-\infty}(M\leftarrow E)$. Let $U \subseteq M$ be open and $s \in  \Gamma^{-\infty}(M\leftarrow E) $, the \textit{restriction of} $s$ \textit{to} $U$ is the distributional section
$$ 
   s|_{U}(X)= s(X), \ \ X \in \Gamma^{\infty}_{c}(U\leftarrow E|_U)\ .
$$
The \textit{support} of $s$ is the set
$$
    \mathrm{supp}(s)= \bigcap_{ \substack {A \subset M \mathrm{closed} \\ \left. s\right|_{M \backslash A}=0}}A\ .
$$

We remark that playing with partitions of unity it is possible to endow $ \Gamma^{-\infty}(M\leftarrow E) $ with the structure of a \textit{fine sheaf}. This in particular implies the principle of localization: a distributional section is the zero section if and only if for every point $x\in M$ there is an open neighborhood $U \ni x$ such that $s|_U=0$. If we denote by $\Gamma^{-\infty}_c(M \leftarrow E)$ the space of \textit{compactly supported distributional sections}, then one can show that it is isomorphic to the space of continuous linear mappings $s :\Gamma^{\infty}(M\leftarrow E) \to \mathbb{R}$, \textit{i.e.} the dual of the Fréchet space $\Gamma^{\infty}(M\leftarrow E)$.

In the sequel, we will usually employ distributional sections in $\Gamma^{-\infty}(M\leftarrow \varphi^{*}VB)$, where $VB \to B$ is the vertical bundle of $B$ and $\varphi : M \to B$ a section of the bundle $B \to M$.

To estimate singularities of distributional sections we shall use the notion of wave front set: let $\pi:E\to M$ be a vector bundle and let $\{(\pi^{^-1}(U_{\alpha}),t_{\alpha})\}_{\alpha}$ be a family of trivializations of $E$. If $s\in \Gamma^{-\infty}_c(M\leftarrow E)$, then $(t_{\alpha})_*s=(s^1,\ldots,s^k)$ where $k$ is the dimension of the fiber of $E$ and each $s^i\in \mathcal{D}'(M)$. Then we set
\begin{equation}\label{eq_WF_distributional_sections}
    \mathrm{WF}(s)\doteq\bigcup_{i=1}^k \mathrm{WF}(s^i)\ .
\end{equation}
The above definition does not depend on the chosen trivialization for diffeomorphisms. This entails that we can straightforwardly generalize the results of \cite[Chapter 8]{hormanderI} to distributional sections in $\Gamma^{-\infty}_c(M\leftarrow E)$.

\subsection{Topology and geometry of field configurations}

We here give a synthetic exposition concerning the topology and manifold structure of $\Gamma^{\infty}(M\leftarrow B)$. For further details, we defer to the Appendix and the references given therein. We stress that for a generic bundle the space of global smooth sections might be empty (\textit{e.g.} for nontrivial principal bundles), therefore we assume that our bundles do possess them. Indeed that is the case whenever we are considering trivial bundles, vector bundles, or bundles of geometric objects such as natural bundles (see \textit{e.g.}\ \cite[$\S$ 14]{kolar2013natural}).

Let $M$, $N$ be Hausdorff topological spaces and let $C(M,N)$ the space of continuous mappings between the two spaces. The \textit{compact open} topology $\tau_{CO}$ or CO-topology is the topology generated by a basis whose elements have the form 
$$
    N(K,V)=\{ \varphi \in C(M,N) : \varphi(K)\subset V\}\ , 
$$
where $K\subset M$ is a compact subset and $V\subset N $ is open. The \textit{wholly open topology} or $WO$-topology is the one generated by the subbasis
$$
    \lbrace \varphi \in C(M,N) : f(M) \subseteq U \rbrace
$$ 
where $U\subset N$ is open. The \textit{graph topology} or $WO^0$-topology on $C(M,N)$ is generated by requiring that
$$
        G:C(M,N) \ni \varphi \mapsto G_{\varphi} \in \big( C(M,M\times N),\tau_{\mathrm{WO}}\big)\ ,
$$
with $G_{\varphi} : M \to M \times N, \quad x \mapsto (x, \varphi(x)) $ being the graph mapping, is an embedding. This topology is Hausdorff. Finally, the \textit{Whitney} $C^k$ \textit{topology}, or $WO^k$-topology is defined by requiring 
$$
	j^k :C^{\infty}(M,N) \rightarrow \left( C(M,J^k(M \times N)),WO-\mathrm{topology}\right)
$$
to be an embedding. When $k=\infty$ we call the latter \textit{Whitney topology} or $WO^{\infty}$-topology. If $M$ is second countable and finite dimensional, $N$ is metrizable, given an exhaustion of compact subsets $\{K_n\}_{n\in\mathbb{N}}$ in $M$ and a sequence of natural numbers $\{k_n\}_{n\in \mathbb{N}}\nearrow \infty $, then a basis of open subset is given by subsets of the form
$$
	W(K_n,U_n)=\left\{ \varphi \in C^{\infty}(M,N) : j^{k_n}\varphi(M\backslash K_n)\subset U_n \right\}
$$
where each $U_n\subset J^{k_n}(M,N)$ is an open subset. This topology enjoys several properties collected in Propositions \ref{prop_1_WO^k_properties}, \ref{prop_1_WO^0-convergence} and Corollary \ref{coro_1_WO^0-curves}. We here stress two facts:
\begin{itemize}
    \item[$(i)$] the convergence of a sequence $\varphi_n$ to $\varphi\in C^{\infty}(M,N)$ can be characterized as follows: there is a compact subset, $K$ for which $\varphi_n|_{M \backslash K}=\varphi|_{M \backslash K}$, and $j^k\varphi_n \to j^k\varphi$ uniformly on $K$ for all $k \in \mathbb N$.
    \item[$(ii)$] continuous curves $\gamma : \mathbb R \to C^{\infty}(M,N)$ have the property that for each compact interval $I\subset \mathbb R$, there is a compact subset $K$ of $M$, for which $\gamma(t_1)|_{M \backslash K} \equiv \gamma(t_2)|_{M \backslash K}$ for all $t_1, \ t_2 \in I$.
\end{itemize}

Let $\varphi, \psi \in C^{\infty}(M,N)$, we define the \textit{relative support} of $\psi$ with respect to $\varphi$ as the subset
\begin{equation}\label{eq_1_rel_support}
    \mathrm{supp}_{\varphi}(\psi)\doteq \{ x \in M : \ \varphi(x) \neq \psi(x) \}\ .
\end{equation}
We say that $\varphi \sim \psi$ whenever $\mathrm{supp}_{\varphi}(\psi)$ is compact. The \emph{refined Whitney topology} is the coarsest topology on $C^{\infty}(M,N)$ finer than the $WO^{\infty}$-topology and for which the subsets 
$$
    \mathcal{V}_{\varphi}\doteq \lbrace \psi \in C^{\infty}(M,N): \psi \sim \varphi \rbrace
$$ 
are open. This is easily achieved by adding to the basis of open subsets defined above, the family generated by finite intersections between the elements $\mathcal{V}_{\varphi}$ and $W(K_n,U_n)$. We equip $\Gamma^{\infty}(M\leftarrow B)$ with the subspace topology of $C^{\infty}(M,B)$ with the refined Whitney topology; notice that $\varphi \in \Gamma^{\infty}(M\leftarrow B)  \subset C^{\infty}(M,B)$ if and only if $\pi_{*}(\varphi)=\pi\circ \varphi = \mathrm{id}_M$, by Proposition 7.1 in \cite{michor2011manifolds}, $\pi_{*}$ is continuous, so the equation $\pi_{*}(\cdot) =\mathrm{id}_M $ defines a closed subset in $C^{\infty}(M,B)$ with the refined Whitney topology.

The smooth structure of the manifold $C^{\infty}(M,N)$ is the modelled on the locally convex spaces $\Gamma^{\infty}_c(M\leftarrow \varphi^*TN)$, with Bastiani calculus (see Definition \ref{def_A_Bastiani_smooth_map} for the precise notion).  

The manifold structure of $C^{\infty}(M,N)$ is generated by \textit{ultralocal charts} $\{ \mathcal{U}_{\varphi}, u_{\varphi}\}$ as follows: 

\begin{definition}
Let $\exp$ be any Riemannian exponential\footnote{In the proof of Theorem \ref{thm_1_mfd_mappings} we show that the induced smooth structure does not depend on the chosen Riemannian exponential.} on $N$  and denote by $\widetilde U \subset N \times N$ the neighborhood of the diagonal where $\exp : \widetilde V \subset TN \to N \times N$ is a diffeomorphism; then, an \textit{ultralocal chart} $\{ \mathcal{U}_{\varphi}, u_{\varphi}\}$, is determined by the following choices
$$
    \mathcal{U}_{\varphi} = \{ \psi \in \mathcal{V}_{\varphi} : \ (\varphi,\psi)(M)\subset \widetilde{U} \}
$$
and
\begin{equation}\label{eq_1_chart_mapping_2}
\begin{aligned}
    u_{\varphi}: & \  \mathcal{U}_{\varphi} \ni \psi \mapsto u_{\varphi}(\psi)\in \Gamma^{\infty}_c(M\leftarrow \varphi^*TN)    \\
    u_{\varphi}(\psi) : \  & M \ni x \mapsto  u_\varphi(\psi)(x)=\exp^{-1} (\varphi(x),\psi(x))\simeq \big(\varphi(x),\exp^{-1}_{\varphi(x)}(\psi(x))\big)\ .
\end{aligned}
\end{equation}
\end{definition}

Notice that by Theorem \ref{thm_A_Bastiani smooth_pushforward}, compositions $u_{\varphi} \circ u_{\psi}^{-1}$ of charts mappings are Bastiani smooth.

\begin{remark}
    We stress that a similar manifold structure can be obtained by using a different notion of calculus: \textit{convenient calculus}. In extreme synthesis, a mapping $f$ between locally convex spaces $X, \ Y$ is conveniently smooth if and only if $f_{*}(C^{\infty}(\mathbb R, X)\subset C^{\infty}(\mathbb R,Y)$. The charts of the manifold structure remain the same, however the topology will be finer. For details see \textit{e.g.} \cite[Chapter I and $\S$ 42]{kriegl1997convenient}. However, it has been shown (see $\S$2 and Proposition 2.2 of \cite{glockner2005discontinuous}) that not all conveniently smooth mappings can be continuous, thereby not allowing the use of Schwartz kernel theorem for the derivatives of (conveniently) smooth functions $C^{\infty}(M,N) \to \mathbb R$. Motivated by this need we shall adopt the notion of Bastiani calculus. However, remind that for Fr\'echet spaces the two calculi coincide.
\end{remark}

This differential structure of $\Gamma^{\infty}(M \leftarrow B)$ is modeled on the locally convex spaces $\Gamma^{\infty}_c(M\leftarrow \varphi^*VB)$ induced as a submanifold of $C^{\infty}(M,B)$ with ultralocal charts $\{ \mathcal{U}_{\varphi}, u_{\varphi}\}$, $\mathcal{U}_{\varphi} = \{ \psi \in \mathcal{V}_{\varphi} : \ (\varphi,\psi)(M)\subset \widetilde{U} \}$,

\begin{equation}\label{eq_1_chart_mapping_non_trivial}
\begin{aligned}
    u_{\varphi}: \ & \  \mathcal{U}_{\varphi} \ni \psi \mapsto u_{\varphi}(\psi)\in \Gamma^{\infty}_c(M\leftarrow \varphi^*VB)    \\
    u_{\varphi}(\psi) : \  & M \ni x \mapsto  u_\varphi(\psi)(x)=\widetilde\exp^{-1} (\varphi(x),\psi(x))\simeq \big(\varphi(x),\widetilde\exp^{-1}_{\varphi(x)}(\psi(x))\big)\ ;
\end{aligned}
\end{equation}
where $\widetilde\exp$ is a suitably modified version of some Riemannian exponential on $B$ and $\widetilde U \subset B \times B$ the neighborhood of the diagonal where the latter mapping becomes a diffeomorphism. 

The kinematic tangent space $T_{\varphi}\Gamma^{\infty}(M\leftarrow B)$ at point $\varphi$ is canonically isomorphic to $ \Gamma^{\infty}_c(M\leftarrow \varphi^{*}VB)$. The kinematic tangent bundle $(T\Gamma^{\infty}(M\leftarrow B),\tau_{\Gamma}, \Gamma^{\infty}(M\leftarrow B))$ is therefore defined in analogy with the finite dimensional case, and carries a canonical infinite dimensional bundle structure with trivializations
$$
	t_{\varphi}: \tau^{-1}_{\Gamma}(\mathcal{U}_{\varphi})\rightarrow \mathcal{U}_{\varphi} \times \Gamma^{\infty}_c(M\leftarrow \varphi^{*}VB)\ .
$$
Through $t_{\varphi}$ we can identify points of $T\Gamma^{\infty}(M\leftarrow B)$ with pairs $(\varphi,\vec{X}_{\varphi})$. Notice that an element of $T_{\varphi}\Gamma^{\infty}(M\leftarrow B)$, can equivalently be seen as a section of the (finite dimensional) vector bundle $\Gamma^{\infty}_c(M\leftarrow \varphi^{*}VB)$. When using the latter interpretation, we will write the section in local coordinates as $\vec{X}_{\varphi}(x)=\vec{X}^i(x)\partial_i\big\vert_{\varphi(x)}$. Vector fields for $\Gamma^{\infty}(M\leftarrow B)$ are therefore Bastiani smooth sections $\Vec{X} : \varphi \to X_{\varphi}$ of the tangent bundle $T\Gamma^{\infty}(M\leftarrow B) \to \Gamma^{\infty}(M\leftarrow B)$.   
We will use Roman letters, e.g. $(s,\vec{u}, \dots)$ to denote distributional sections in $\Gamma^{-\infty}(M\leftarrow \varphi^{*}VB)$.

Finally, we recall that we can always choose a connection on $T\Gamma^{\infty}(M\leftarrow B)$, that is a fiber respecting splitting of $TT\Gamma^{\infty}(M\leftarrow B)$ into  horizontal and vertical part. This induces a notion of covariant derivative which is intrinsic and helps provide a chart independent classification for certain classes of functionals (\textit{c.f.} Definition \ref{def_1_func_classes}). Such a connection, as detailed in \eqref{eq_1_def_connection} and \eqref{eq_1_connection_induced}, can be always induced via a connection $\Gamma$ on the fiber $F$ of $B$ by setting 
$$
    \Gamma_{\varphi}(\vec{X},\vec{Y})(x)\doteq \widetilde{\Gamma}_{jk}^i({\varphi}(x))\vec{X}_{\varphi}^j(x)\vec{Y}_{\varphi}^k(x) \partial_i\vert_{{\varphi}(x)}\ .   
$$


\section{Observables}

By \textit{functional} we mean a smooth mapping 
$$
	F: \mathcal{U}\subset \Gamma^{\infty}(M\leftarrow B) \to \mathbb{R}\ ,
$$
where $\mathcal{U}$ is an open set in the $CO$-topology generated by \eqref{eq_1_CO_open}. Since smoothness is tested on ultralocal charts, a functional $F$ is smooth if and only if, given any ultralocal atlas $\lbrace \mathcal{U}_{\varphi}, u_{\varphi} \rbrace_{\varphi \in \mathcal{U}}$ its localization
\begin{equation}\label{eq_1_localization_of_F}
	F_{\varphi} \doteq F \circ u_{\varphi}^{-1} : \Gamma^{\infty}_c(M\leftarrow \varphi^*VB) \rightarrow \mathbb{R}\ ,
\end{equation}
is smooth for all $\varphi \in \mathcal{U}$ in the sense of Definition \ref{def_A_Bastiani_smooth_map}. 

The first notion we introduce is the spacetime support of a functional. The idea is to follow the definition of support given in \cite{acftstructure}, and account for the lack of linear structure on the fibers of the configuration bundle $B$.

\begin{definition}\label{def_1_func_support}
Let $F$ be a functional over $\mathcal{U}$, $CO$-open, then its support is the closure in $M$ of the subset $x\in M$ such that for all $V \subset M$ open neighborhood of $x$, there is $ \varphi \in \mathcal{U}$, $\vec{X}_{\varphi} \in \Gamma^{\infty}_c(M\leftarrow \varphi^{*}VB)$ having $\mathrm{supp}(\vec{X}_{\varphi})\subset V$, for which $F_{\varphi}(\vec{X}_{\varphi})\neq F_{\varphi}(0)$. The set of functionals over $\mathcal{U}$ with compact spacetime support will be denoted by $\mathcal{F}_c(B,\mathcal{U})$ and its elements called \emph{observables}.
\end{definition}

Let us display some examples of functionals. Given $\alpha \in C^{\infty}( B,\mathbb{R})$, consider
\begin{equation}\label{eq_1_generic_func}
	F_{\alpha}:\Gamma^{\infty}(M\leftarrow B) \rightarrow \mathbb{R}: \varphi \mapsto F_{\alpha}(\varphi) \doteq  \left		     \lbrace \begin{array}{lr}
      \frac{1}{1+\sup_{M}(\alpha(\varphi))} & \alpha(\varphi) \space\ \text{bounded },\\
       0 & \text{otherwise }.
             \end{array} \right. 
\end{equation}
If $f\in C^{\infty}_c(M)$ and $\lambda \in \Omega_m(J^rB)$ define
\begin{equation} \label{eq_1_gen_Lag_1}
	\mathcal{L}_{f,\lambda} :\Gamma^{\infty}(M\leftarrow B) \rightarrow \mathbb{R}:  \varphi \mapsto \mathcal{L}_{f,\lambda}(\varphi) \doteq  \int_M f(x) j^r\varphi^*\lambda(x) \mathrm{d}\mu_g(x)\ .
\end{equation}
On the other hand if $f, \lambda$ are as above and $\chi:\mathbb{R}\rightarrow \mathbb{R}$ with $0\leq \chi \leq 1$, $\chi(t)=1 \space\ \forall  \vert t \vert \leq 1/2$ and $\chi(t)=0 \space\ \forall  \vert t \vert \geq 1/2$ define
\begin{equation}\label{eq_1_reg_func}
	G_{f,\lambda,\chi}:\Gamma^{\infty}(M\leftarrow B) \rightarrow \mathbb{R}:  \varphi \mapsto G_{f,\lambda,\chi}(\varphi) \doteq e^{1-\chi\big( (\mathcal{L}_{f,\lambda}(\varphi))^2\big)}\ .
\end{equation}
\vspace{0.1cm}

We can endow $\mathcal{F}_c(B,\mathcal{U})$ with the following operations
\begin{equation}
    (F,G) \mapsto (F+G)(\varphi)\doteq F(\varphi)+G(\varphi)\ ;
\end{equation}
\begin{equation}
    (\alpha \in \mathbb{R}, F) \mapsto (\alpha F)(\varphi) \doteq \alpha F(\varphi)\ ;
\end{equation}
\begin{equation}\label{eq_1_classical_algebra_product}
    (F,G) \mapsto (F \cdot G)(\varphi) \doteq F(\varphi)G(\varphi)\ ;
\end{equation}
\begin{equation}
    F \mapsto F^{*}, \ F^{*}(\varphi) \doteq {F(\varphi)}\footnote{In adherence to standard classical field theory, we use real functionals, which makes involution a trivial operation; we remark though that one could repeat \textit{mutatis mutandis} everything with $\mathbb{R}$ replaced by $\mathbb{C}$, then involution becomes $F^{*}(\varphi) \doteq \overline{F(\varphi)}\in \mathbb C$ and the algebraic content is that of a $*$-algebra with unity.}.
\end{equation}

It can be shown that those operation preserve the compactness of the support, turning $\mathcal{F}_c(B,\mathcal{U})$ into a \textit{commutative algebra with unit} where the unit element is given by $\varphi \mapsto 1\in \mathbb{R}$. That involution and scalar multiplication are support preserving is trivial, to see that for multiplication and sum we use

\begin{lemma}
Let $F$, $G$ be functionals over $\mathcal{U} \subset \Gamma^{\infty}(M\leftarrow B)$ $CO$-open subset, then
\begin{itemize}
\item[$(i)$] $\mathrm{supp}(F+G)\subset \mathrm{supp}(F) \cup \mathrm{supp}(G)$\ ,
\item[$(ii)$] $\mathrm{supp}(F\cdot G)\subset \mathrm{supp}(F) \cup \mathrm{supp}(G)$\ .
\end{itemize}
\end{lemma}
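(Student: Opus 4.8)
The plan is to prove both inclusions by working with the complement of the right-hand side and showing it is contained in the complement of the left-hand side. Fix a point $x \in M$ that lies outside $\mathrm{supp}(F) \cup \mathrm{supp}(G)$. By definition of the support, there exist open neighborhoods $V_F$ and $V_G$ of $x$ such that for every configuration $\varphi$ in the domain and every compactly supported vertical field $\vec{X}_\varphi$ with $\mathrm{supp}(\vec{X}_\varphi) \subset V_F$ one has $F_\varphi(\vec{X}_\varphi) = F_\varphi(0)$, and likewise for $G$ with neighborhood $V_G$. Set $V = V_F \cap V_G$, still an open neighborhood of $x$.

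Now take any $\varphi \in \mathcal{U}$ and any $\vec{X}_\varphi \in \Gamma^\infty_c(M \leftarrow \varphi^*VB)$ with $\mathrm{supp}(\vec{X}_\varphi) \subset V$. Since $V \subset V_F$ and $V \subset V_G$, we get simultaneously $F_\varphi(\vec{X}_\varphi) = F_\varphi(0)$ and $G_\varphi(\vec{X}_\varphi) = G_\varphi(0)$. For statement $(i)$, observe that in the same chart $(\mathcal{U}_\varphi, u_\varphi)$ the localization of $F + G$ is simply $(F+G)_\varphi = F_\varphi + G_\varphi$, because $u_\varphi^{-1}$ is a common chart inverse and addition of functionals is pointwise; hence $(F+G)_\varphi(\vec{X}_\varphi) = F_\varphi(\vec{X}_\varphi) + G_\varphi(\vec{X}_\varphi) = F_\varphi(0) + G_\varphi(0) = (F+G)_\varphi(0)$. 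This shows $x \notin \mathrm{supp}(F+G)$, and since $\mathrm{supp}(F+G)$ is by definition a closure, taking complements and closures preserves the inclusion, giving $(i)$. For statement $(ii)$ the argument is identical with the product: $(F \cdot G)_\varphi = F_\varphi \cdot G_\varphi$ pointwise, so $(F\cdot G)_\varphi(\vec{X}_\varphi) = F_\varphi(\vec{X}_\varphi)\, G_\varphi(\vec{X}_\varphi) = F_\varphi(0)\, G_\varphi(0) = (F\cdot G)_\varphi(0)$, whence $x \notin \mathrm{supp}(F \cdot G)$.

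One small point that requires a word of care is that Definition \ref{defsupport} quantifies over \emph{all} choices of chart $\varphi$ and all fields $\vec{X}_\varphi$, so when combining the conditions for $F$ and $G$ one must use the \emph{same} $\varphi$ and $\vec{X}_\varphi$ in both; this is exactly why intersecting the two neighborhoods works, and why the localizations of $F+G$ and $F \cdot G$ must be taken in the same $\Gamma^\infty$-local chart as those of $F$ and $G$ — which is legitimate since the support condition is chart-independent (smoothness and the defining equality $F_\varphi(\vec{X}_\varphi) = F_\varphi(0)$ transform consistently under the $C^\infty_B$ transition maps). I do not expect a genuine obstacle here; the only thing to be slightly careful about is the bookkeeping with closures, namely that $\mathrm{supp}(F)$ is defined as the closure of the set of ``bad'' points, so one should phrase the argument as: the set of points admitting a good neighborhood $V$ is open and contains the complement of the closed set $\mathrm{supp}(F) \cup \mathrm{supp}(G)$, hence its complement — a closed set containing $\mathrm{supp}(F+G)$'s generating set — contains the closure, i.e. $\mathrm{supp}(F+G)$ itself. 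The same remark applies verbatim to $(ii)$.
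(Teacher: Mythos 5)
Your proof is correct and follows essentially the same route as the paper: take $x$ outside $\mathrm{supp}(F)\cup\mathrm{supp}(G)$, intersect the two good neighborhoods, and use that the localizations satisfy $(F+G)_\varphi=F_\varphi+G_\varphi$ and $(F\cdot G)_\varphi=F_\varphi\cdot G_\varphi$ in a common chart to conclude $x$ lies outside the generating set of $\mathrm{supp}(F+G)$ and $\mathrm{supp}(F\cdot G)$. Your extra remarks on the closure bookkeeping and on using the same $\varphi$, $\vec{X}_\varphi$ for both functionals only make explicit what the paper leaves implicit.
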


Notice that the more restrictive version of $(ii)$ with the intersection of domains does not hold in general, this can be checked by taking the constant functional $G(\varphi) \equiv 1 \in \mathbb R \space\ \forall \varphi \in \mathcal{U}$, then $\mathrm{supp}(G)=\emptyset$ while $\mathrm{supp}(F+G)$, $\mathrm{supp}(F\cdot G)=\mathrm{supp}(F)$.

\begin{proof}
Suppose that $x \notin \mathrm{supp}(F) \cup \mathrm{supp}(G) $, then there is an open neighborhood $V$ of $x$ such that for any $X \in \Gamma^{\infty}_c(M\leftarrow VB)$ with $\mathrm{supp}(X)\subset V$, and any $\varphi \in \mathcal{U}$ we have $(F+G)_{\varphi}(\vec{X}_{\varphi})=F_{\varphi}(\vec{X}_{\varphi})+G_{\varphi}(\vec{X}_{\varphi})=F_{\varphi}(0)+G_{\varphi}(0)$, so $x \notin \mathrm{supp}(F+G)$. The other follows analogously.
\end{proof}


\begin{definition}\label{def_1_func_differentiability}
Let $\mathcal{U}$ be $CO$-open, a functional $F\in \mathcal{F}_c(B,\mathcal{U})$ is differentiable of order $k$ at $\varphi \in \mathcal{U}$ if for all $0 \leq j \leq k$ the functionals $d^jF_{\varphi}[0]:  \otimes^j\left(\Gamma^{\infty}_c(M\leftarrow \varphi^*VB)\right) \rightarrow \mathbb{R}: (\vec{X}_1,\ldots, \vec{X}_j) \mapsto d^jF_{\varphi}[0](\vec{X}_1,\ldots, \vec{X}_j)$ are linear and continuous with 
$$
\begin{aligned}
	d^jF_{\varphi}[u_{\varphi}(\varphi)](\vec{X}_1,\ldots, \vec{X}_j)\doteq &\left.\frac{d^j}{dt_1 \ldots dt_j}\right|_{t_1= \ldots=t_j=0}F_{\varphi}(t_1\vec{X}_1+ \cdots+ t_j \vec{X}_j) \\ &= \left\langle F^{(j)}_{\varphi}[0],\vec{X}_1 \otimes \cdots \otimes \vec{X}_j\right\rangle\ .
\end{aligned}
$$
If $F$ is differentiable of order $k$ at each $\varphi \in \mathcal{U}$ we say that $F$ is differentiable of order $k$ in $\mathcal{U}$. Whenever $F$ is differentiable of order $k$ in $\mathcal{U}$ for all $k \in \mathbb{N}$ we say that $F$ is smooth and denote the set of smooth functionals as $\mathcal{F}_0(B,\mathcal{U})$.
\end{definition}

Since $F$ is Bastiani smooth, $d^jF_{\varphi}[0]$ is continuous, therefore $d^jF_{\varphi}[0] \in \Gamma^{-\infty}(M^j\leftarrow \boxtimes^j \varphi^*VB) $ and by Schwartz kernel theorem, we can represent it as an integral kernel $F^{(j)}_{\varphi}[0]$ which we write as
\begin{equation}\label{eq_1_kernel_notation}
    \left\langle F^{(j)}_{\varphi}[0],\vec{X}_1 \otimes \ldots \otimes \vec{X}_j\right\rangle = \int_{M^j} f^{(j)}_{\varphi}[0]_{i_1 \cdots i_j}(x_1,\ldots,x_j) \vec X_1^{i_1}(x_1) \cdots \vec X_j^{i_j}(x_j) d\mu_g(x_1,\ldots,x_j)\ ,
\end{equation}
where $\vec X_p=\vec X_p^{i_p}\frac{\partial}{\partial y^{i_p}}\Big|_{y=\varphi(x)} \in T_{\varphi}\Gamma^{\infty}(M\leftarrow B)$. We stress that repeated indices denote summation of vector components as usual with Einstein notation.

When $F$ is smooth the condition of Definition \ref{def_1_func_differentiability} is independent from the chart we use to evaluate the B differential: suppose we take charts $(\mathcal{U}_{\varphi}, u_{\varphi})$, $(\mathcal{U}_{\psi}, u_{\psi})$ with $\varphi \in \mathcal{U}_{\psi}$, then by Faà di Bruno's formula
\begin{align}\label{eq_1_gamma_loc_change}
\begin{aligned}
	&d^jF_{\psi}[u_{\psi}(\varphi)](\vec{X}_1,\ldots, \vec{X}_j) 
	\\
	& \quad = \sum_{\pi \in \mathscr{P}(\{1,\ldots,j\})} F_{\varphi}^{|\pi|}[0]\left( d^{\vert I_1 \vert}u_{\varphi \psi}[u_{\psi}(\varphi)]\Big(\bigotimes_{i\in I_1}\vec{X}_{i}\Big), \ldots,  d^{\vert I_{|\pi|} \vert}u_{\varphi \psi}[u_{\psi}(\varphi)]\Big(\bigotimes_{i'\in I_{|\pi|}}\vec{X}_{i'}\Big) \right)\ ,
\end{aligned}
\end{align}
where $\pi$ is a partition of $\{1,\ldots,j\}$ into $|\pi|$ smaller subsets $I_1,\ldots,I_{|\pi|}$ and we denote by $u_{\varphi \psi}$ the transition function $u_{\varphi}\circ u_{\psi}^{-1}$. We immediately see that the right hand side is Bastiani smooth by the smoothness of the transition function, therefore the left hand side ought to be Bastiani smooth as well. Incidentally the same kind of reasoning shows Definition \ref{def_1_func_differentiability} is independent from the ultralocal atlas used for practical calculations. 

Although this is enough to ensure Bastiani differentiability, in the sequel we shall introduce a connection on the bundle $T\Gamma^{\infty}(M\leftarrow B) \rightarrow \Gamma^{\infty}(M\leftarrow B) $ 
so that \eqref{eq_1_gamma_loc_change} can be written as an equivalence between two single terms involving the covariant derivatives. In particular, as explained in \eqref{eq_1_def_connection} and \eqref{eq_1_connection_induced}, we will choose a smooth connection ${\Gamma}$ on the typical fiber $F$ of the bundle $B$, the latter will induce a linear connection $\varphi^*{\Gamma}$ on the vector bundle $M\leftarrow \varphi^*VB$, and, in turn, a connection on $T\Gamma^{\infty}(M\leftarrow B)$
\begin{equation}\label{eq_1_conn}
\begin{aligned}
    \Gamma_{\varphi} &:  \Gamma^{\infty}_c(M\leftarrow \varphi^{*}VB)\times  \Gamma^{\infty}_c(M\leftarrow \varphi^{*}VB)
    \to  \Gamma^{\infty}_c(M\leftarrow \varphi^{*}VB)\\
    &\quad (\vec{X}_{\varphi},\vec{Y}_{\varphi})\mapsto  \Gamma_{\varphi}(\vec{X}_{\varphi},\vec{Y}_{\varphi}),\\
    &\quad\Gamma_{\varphi}(\vec{X},\vec{Y})(x)= \Gamma(\varphi(x))^i_{jk}\vec{X}_{\varphi}^j(x)\vec{Y}_{\varphi}^k(x)\partial_i\big|_{\varphi(x)}\ ;
\end{aligned}
\end{equation}
where $\vec{X}^j\partial_j\big|_{\varphi}$, $\vec{Y}_{\varphi}^k\partial_k\big|_{\varphi}$ are the expressions in local coordinates of $\vec{X}_{\varphi}$, $\vec{Y}_{\varphi}\in \Gamma^{\infty}_c(M\leftarrow \varphi^{*}VB)$. Armed with \eqref{eq_1_conn} we can define the notion of \textit{covariant differential} recursively setting
\begin{align}\label{eq_1_cov_der}
\begin{aligned}
	\nabla^{1}F_{\varphi}[0](\vec{X})  &\doteq dF_{\varphi}(\vec{X}) ,\\
	\nabla^{n}F_{\varphi}[0](\vec{X}_1,\ldots, \vec{X}_n)  &\doteq  F^{(n)}_{\varphi}(\vec{X}_1,\ldots, \vec{X}_n)  \\
	&\quad + \sum_{j=1}^n \frac{1}{n!} \sum_{\sigma \in \mathcal{P}(n)} \nabla^{n-1}F_{\varphi} (\Gamma_{\varphi} (\vec{X}_{\sigma(j)}, \vec{X}_{\sigma(n)}), \vec{X}_{\sigma(1)}, \ldots, \widehat{\vec{X}_{\sigma(j)}}, \ldots \vec{X}_{\sigma(n-1)})\ ,
\end{aligned}
\end{align}
where $\mathcal{P}(n)$ denotes the set of permutations of $n$ elements. In this way we can intrinsically extend properties of iterated derivatives, which are ultralocal chart dependent, to the whole manifold. The price we pay is that, a priori, the property might depend on the connection chosen.

\begin{lemma}\label{lemma_1_support_property}
Let $\mathcal{U}$ be a $CO$-open subset such that for all $\varphi\in \Gamma^{\infty}(M\leftarrow B)$, $u_{\varphi}(\mathcal{U}\cap \mathcal{U}_{\varphi})$ is convex. If $F:\mathcal{U} \rightarrow \mathbb{R}$ a differentiable functional of order one, then
$$
	\mathrm{supp}(F)= \overline{\bigcup_{\varphi \in \mathcal{U}} \mathrm{supp}\left(F^{(1)}_{\varphi}[0]\right)}\ .
$$
\end{lemma}

\begin{proof}
Suppose that $x \in \mathrm{supp}(F)$, then, for all open neighborhoods $V$ of $x$ there is $\varphi \in \mathcal{U}$ and $\vec{X}_{\varphi} \in \Gamma^{\infty}_c(M\leftarrow \varphi^{*}VB)$ with $\mathrm{supp}(\vec{X}) \subset V$ having $F_{\varphi}(\vec{X}_{\varphi}) \neq F_{\varphi}(0)$, using the convexity of $u_{\varphi}(\mathcal{U}\cap \mathcal{U}_{\varphi})$ and the fundamental theorem of calculus we obtain
$$
	F_{\varphi}( \vec{X}_{\varphi}) - F_{\varphi}(0)= \int_0^1 F_{\varphi}^{(1)}[\lambda  \vec{X}_{\varphi}]( \vec{X}_{\varphi}) \mathrm{d}\lambda \neq 0\ .
$$
Thus there is some $\lambda_0 \in (0,1)$ for which the integrand is not zero, setting $\psi=u_{\varphi}^{-1}(\lambda_0  \vec{X}_{\varphi} )$, we obtain 
$$
	dF_{\psi}[0]\left(d^1u_{\varphi \psi}[\lambda_0 \vec{X}_{\varphi}]( \vec{X}_{\varphi})\right)\neq 0\ .
$$
On the other hand, if $x \in \mathrm{supp}\left( F^{(1)}_{\varphi}[0] \right)$ for some $\varphi \in \mathcal{U}$, then there is $\vec{X}_{\varphi} \in \Gamma^{\infty}_c(M\leftarrow \varphi^*VB)$ having $\vec{X}_{\varphi}(x)\neq \vec{0}$ for which $dF_{\varphi}[0](\vec{X}_{\varphi}) \neq 0$, as a result, choosing $\epsilon$ small enough,
$$
	 F_{\varphi}(\epsilon \vec{X}_{\varphi}) = F_{\varphi}(0)+ \int_0^{\epsilon} F_{\varphi}^{(1)}[\lambda \vec{X}_{\varphi}]( \vec{X}_{\varphi}) \mathrm{d}\lambda \neq F_{\varphi}(0)\ .
$$
\end{proof}


\begin{definition}\label{def_1_func_classes}
Let $\mathcal{U}$ be $CO$-open. We select the following classes of $\mathcal{F}_c(B,\mathcal{U})$.
\begin{itemize}
    \item[$(i)$] \textbf{Regular Functionals}: the subset of functionals $F \in \mathcal{F}_c(B,\mathcal{U})$ such that for each $\varphi\in \mathcal{U}$, the integral kernel 
    $$
	    \nabla^{k}F_{\varphi}[0](\vec{X}_1,\ldots,\vec{X}_k) = \int_{M^k} \nabla^{k} f_{\varphi}[0](x_1,\ldots,x_k)\vec{X}_1(x_1)\cdots\vec{X}_k(x_k) d\mu_{g}(x_1,\ldots,x_k)
    $$ 
     associated to $\nabla^{k}F_{\varphi}[0]$, has $\nabla^{k} f_{\varphi}[0]\in \Gamma^{\infty}_c\big(M^k\leftarrow \boxtimes^k \big(\varphi^{*}VB' \big)\big)$; we denote this set by $\mathcal{F}_{\mathrm{reg}}(B,\mathcal{U})$.

    \item[$(ii)$] \textbf{Local Functionals}: the subset of functionals $F \in \mathcal{F}_c(B,\mathcal{U})$ such that for each $\varphi\in \mathcal{U}$, $\mathrm{supp}\big(\nabla^{2}F_{\varphi}[0]\big) \subset \triangle_2(M)=\{(x,x)\in M\times M\}$; we denote this set by $\mathcal{F}_{loc}(B,\mathcal{U})$.
    
    \item[$(iii)$] \textbf{Microlocal Functionals}: the subset of functionals $F \in \mathcal{F}_{loc}(B,\mathcal{U})$ such that the integral kernel associated to $\nabla^{1}F_{\varphi}[0]\equiv dF_{\varphi}[0]$ has $f^{(1)}_{\varphi}[0] \in \Gamma^{\infty}\big(M\leftarrow (\varphi^{*}VB)' \big)$ for each $\varphi\in \mathcal{U}$; we denote this set by $\mathcal{F}_{\mu loc}(B,\mathcal{U})$.
\end{itemize}
\end{definition}

Using the Schwartz kernel theorem, we can equivalently define microlocal functionals by requiring $\{ F \in \mathcal{F}_{loc}(B,\mathcal{U}) \space\ \mathrm{:} \  \mathrm{WF}\big(F^{(1)}_{\varphi}[0]\big)=\emptyset \ \forall \varphi \in \mathcal{U}\}$. Other authors also add further requirements: for example, in \cite{BDGR14} microlocal functionals have the additional property that given any $\varphi\in \mathcal{U}$ there exists an open neighborhood $\mathcal{V}\ni \varphi$ in which $f^{(1)}_{\varphi'}[0]\in \Gamma^{\infty}\left(M\leftarrow (\varphi^{*}VB)' \right)$ depends on the $k$th order jet of $\varphi'$ for all $\varphi'\in \mathcal{V}$ and some $k\in \mathbb{N}$. We choose to give a somewhat more general description which however will turn out to be almost equivalent by Proposition \ref{porop_1_muloc_charachterization}. Finally we stress that the definition of local functionals together with Lemma \ref{lemma_1_support_property} shows that that $\mathrm{supp}\big(\nabla^{k}F_{\varphi}[0]\big) \subset \Delta_k(M) \equiv \{(x,\ldots,x) \in M^k: \ x\in M\}$ for each $k\in \mathbb{N}$.\\

As remarked earlier, writing differentials with a connection does yield a  definition which is independent from the ultralocal chart chosen to perform the calculations, however, we have to check that Definition \ref{def_1_func_classes} is independent from the chosen connection. 

\begin{lemma}
    Suppose that $\Phi$, $\widehat{\Phi}$ are two connections on $T\Gamma^{\infty}(M\leftarrow B)$ according to \eqref{eq_1_def_connection}. Then the definition of regular (resp. local, microlocal) functionals do not depend on the chosen connection.
\end{lemma}
\begin{proof}
Denote by $\nabla$, $\widehat{\nabla}$ the covariant derivatives induced by $\Phi$, $\widehat{\Phi}$ respectively. If $F \in \mathcal{F}_c(B,\mathcal{U})$ is local with respect to the second connection,
$$
	\left( \nabla^{2} F_{\varphi} - \widehat{\nabla}^{2}F_{\varphi} \right) [0] (\vec{X}_1,\vec{X}_2) = dF_{\varphi}[0] \left( \Gamma_{\varphi}(\vec{X}_1,\vec{X}_2) - \widehat{\Gamma}_{\varphi}(\vec{X}_1,\vec{X}_2) \right)\ .
$$
Due to linearity of the connection in both arguments, when the two sections $\vec{X}_1$, $\vec{X}_2$ have disjoint support the resulting vector field is identically zero, so that by linearity of $dF_{\varphi}[0](\cdot)$ the expression is zero and locality is preserved. Moreover, since $\nabla^{1}F_{\varphi}[0]\equiv dF_{\varphi}[0]$, we immediately obtain that microlocality is independent as well. Regular functionals do not depend on the connection used to perform calculations either: this is easily seen by induction. The case $k=1$ is trivial, for arbitrary $k$ one simply notes that $\left( \nabla^{k} F_{\varphi} - \widehat{\nabla}^{k}F_{\varphi} \right) [0] (\ldots)$ depends on terms of order $l \leq k-1$ and applies the induction hypothesis. 
\end{proof}

We stress that in particular cases, such as when $B= M \times \mathbb{R}$, $TC^{\infty}(M) \simeq C^{\infty}(M) \times C^{\infty}_c(M)$, we are allowed to choose a trivial connection, in which case the differential and the covariant derivative coincide. It is also possible to formulate Definition \ref{def_1_func_classes} in terms of differentials instead of covariant derivatives, then the above arguments can be used backward to show that regular and local functionals are intrinsic.\\

If we go back to the examples of functionals given earlier we find that \eqref{eq_1_generic_func} does not belong to any class, while \eqref{eq_1_reg_func} is a regular functional that however fails to be local. If $D\subset M$ is a compact subset and $\chi_D$ its characteristic function then 
$$
    \varphi \mapsto \mathcal{L}_{\chi_D,\lambda}(\varphi) \doteq  \int_M \chi_D(x)\lambda(j^r\varphi)(x)\mathrm{d}\mu_g(x)\ .
$$
is a local functional which however, is not microlocal due to the possible singularities localized in the boundary of $D$. Finally we claim that \eqref{eq_1_gen_Lag_1} is a microlocal functional. To see it, let us consider a particular example where $r=1$,
$$
	\mathcal{L}_{f,\lambda}(\varphi)=\int_M f \cdot j^1 \varphi^{*}\lambda= \int_M f(x) \lambda(j^1\varphi)(x) d\mu_g(x)
$$
taking the first derivative and integrating by parts yields
\begin{equation}\label{eq_1_derivative_lag_wavemaps}
    d\mathcal{L}_{f,\lambda,\varphi}[0](\vec{X}_{\varphi})=\int_{M}f(x)\bigg\{\frac{\partial\lambda}{\partial y^i}-d_{\mu}\bigg( \frac{\partial \lambda}{\partial y^i_{\mu}} \bigg)   \bigg\}(x) X_{\varphi}^i(x) \mathrm{d}\mu_g(x)\ ,
\end{equation}
setting 
\begin{equation}
	\lambda_{f,\varphi}^{(1)}[0](x) \doteq f(x)\left\lbrace  \frac{\partial \lambda}{\partial y^i}-d_{\mu}\left( \frac{\partial \lambda}{\partial y^i_{\mu}} \right)   \right\rbrace(x) dy^i \wedge \mathrm{d}\mu_g(x)\ ,
\end{equation}
we see that the integral kernel of the first derivative in $\varphi$ of \eqref{eq_1_gen_Lag_1}, $\lambda_{f,\varphi}^{(1)}[0]$, belongs to $\Gamma^{\infty}\left(M\leftarrow \varphi^{*}VB' \otimes \Lambda_m(M)\right) $. For generic orders $r\neq 1$, multiple integration by parts will yield the desired result, for details on those calculations see \cite[Chapter 6]{fati}. This last example is important because it shows that functionals obtained by integration of pull-backs of $m$-forms $\lambda$ are microlocal. One could ask whether the converse can hold, \textit{i.e.} if all microlocal functionals have this form; this matter will be further analysed in Proposition \ref{porop_1_muloc_charachterization}.

\begin{definition} \label{def_1_additivity}
Let $\mathcal{U}$ be $CO$-open, a functional $F\in \mathcal{F}_c(B,\mathcal{U})$ is called:
\begin{itemize}
\item[$(i)$]$\varphi_0$-additive if for all $\varphi_1, \ \varphi_{-1} \in \mathcal{U}_{\varphi_0}\cap\, \mathcal{U}$ having $\mathrm{supp}_{\varphi_0}(\varphi_1) \cap \mathrm{supp}_{\varphi_0}(\varphi_{-1}) = \emptyset$, setting $\vec{X}_j=u_{\varphi_0}(\varphi_j)$, $j=1,-1$, we have\footnote{For \eqref{eq_1_loc_additivity} to be defined, one should additionally require $\vec{X}_1+\vec{X}_{-1}\in u_{\varphi_0}(\mathcal{U}_{\varphi_0}\cap\mathcal{U})$. This can be done \textit{e.g.} by restriction of the chart open convex $u_{\varphi_0}(\mathcal{U}_{\varphi_0})\subset \Gamma^{\infty}_c\left(M\leftarrow \varphi^{*}VB\right) $ to a subset $u_{\varphi_0}(\mathcal{W}_{\varphi_0})$ which exists by continuity of the additivity in topological vector spaces and satisfies $u_{\varphi_0}(\mathcal{W}_{\varphi_0}) + u_{\varphi_0}(\mathcal{W}_{\varphi_0}) \subset u_{\varphi_0}(\mathcal{U}_{\varphi_0})$.}
\begin{equation}\label{eq_1_loc_additivity}
	F_{\varphi_0}(\vec{X}_1 {+} \vec{X}_{-1})= F_{\varphi_0}(\vec{X}_1) - F_{\varphi_0}(0) + F_{\varphi_0}(\vec{X}_{-1})\ .
\end{equation}
\item[$(ii)$] additive if for all $\varphi_j\in \mathcal{U}$, $j=1,0,-1$, with $\mathrm{supp}_{\varphi_0}(\varphi_1)\cap \mathrm{supp}_{\varphi_0}(\varphi_{-1})=\emptyset$, setting 
$$ \varphi=
\begin{cases}
	\varphi_1 \ \ \ \mathrm{in} \space\ \mathrm{supp}_{\varphi_0}(\varphi_{-1})^c\\
\varphi_{-1} \hspace{0.17cm}  \mathrm{in} \space\ \mathrm{supp}_{\varphi_0}(\varphi_1)^c
\end{cases}
$$
we have 
\begin{equation}\label{eq_1_glob_additivity}
	F(\varphi)=F(\varphi_1)+F(\varphi_0)-F(\varphi_{-1})\ .
\end{equation}
\end{itemize}
\end{definition}
We remark that $(ii)$ is equivalent to the definition of additivity present in \cite{gravbrunetti}. Before the proof of the equivalence of those two relations, we prove a technical lemma.

\begin{lemma} \label{lemma_1_interpolation_of_sections}
Let $\varphi_1$, $\varphi_0$, $\varphi_{-1} \in \Gamma^{\infty}(M\leftarrow B)$ have $\mathrm{supp}_{\varphi_0}(\varphi_1)\cap \mathrm{supp}_{\varphi_0}(\varphi_{-1})=\emptyset$, then there exist $n\in \mathbb{N}$, a finite family of sections  
\begin{equation}\label{eq_1_interpolation_0}
	\left\{\varphi_{(k,l)}\right\}_{k,l \in \{1,\ldots,n\}}
\end{equation}
for which the following conditions holds:
\begin{itemize}
\item[$(a)$] For each $k$, $l\in \mathbb{N}$
\begin{equation}\label{eq_1_intepolation_2}
	\left\{\varphi_{(k-1,l-1)},\varphi_{(k,l-1)},\varphi_{(k-1,l)},\varphi_{(k+1,l)},\varphi_{(k,l+1)},\varphi_{(k+1,l+1)} \right\}\in \mathcal{U}_{\varphi_{(k,l)}}\ ,
\end{equation}
\item[$(b)$]  Moreover for each $k$, $l\in \mathbb{N}$ we can define elements $\vec{X}_{k}$, $\vec{Y}_{l} \in \Gamma^{\infty}_c(M\leftarrow \cdot^{*}VB)$, where 
\begin{equation}\label{eq_+_vector}
	\vec{X}_{k} \doteq \widetilde\exp^{-1}_{\varphi_{(k-1,l)}}\left( \varphi_{(k,l)}\right)\ ,
\end{equation}
\begin{equation}\label{eq_-_vector}
	\vec{Y}_{l} \doteq \widetilde\exp^{-1}_{\varphi_{(k,l-1)}}\left( \varphi_{(k,l)} \right)\ ;
\end{equation}
whose exponential flows generate all the above sections:
$$
	\varphi_1=\widetilde\exp\big(\vec{X}_{n}\big)\circ\cdots \circ \widetilde\exp\big(\vec{X}_{1}\big) \circ \varphi_0\equiv \varphi_{(n,0)}\ ;
$$
$$
	\varphi_{-1}=\widetilde\exp\big(\vec{Y}_{n}\big)\circ\cdots \circ \widetilde\exp\big(\vec{Y}_{1}\big)\circ \varphi_0\equiv \varphi_{(0,n)}\ ;
$$
and
$$
    \varphi= \widetilde\exp\big(\vec{X}_{n}\big)\circ \cdots \circ \widetilde\exp\big(\vec{X}_{1}\big) \circ \widetilde\exp\big(\vec{Y}_{n}\big)\circ \cdots \circ \widetilde\exp\big(\vec{Y}_{1}\big)\circ  \varphi_0\ .
$$
\end{itemize}
\end{lemma}
\begin{proof}
We are considering $\varphi_0$ as a background section, then application of a number of exponential flows of the above fields will generate new sections interpolating between $\varphi_0$ and $\varphi,\varphi_1,\varphi_{-1}$, such that each section in the interpolation procedure has the adjacent sections in the same chart as described in \eqref{eq_1_intepolation_2}. For a pair of generic sections, this is not trivial; however, due to the requirement of mutual compact support between sections, our case is special. Due to the relative compact support of $\varphi_1, \ \varphi_{-1} $ with respect to 
$\varphi_0$, let $K$ be any compact containing the compact subsets  $\mathrm{supp}_{\varphi_0}(\varphi_{1}),\  \mathrm{supp}_{\varphi_0}(\varphi_{-1}) $. Since $B$ is itself a paracompact manifold, it admits an exhaustion by compact subsets and a Riemannian metric compatible with the fibered structure. The exponential mapping $\exp$ of this metric will have a positive injective radius throughout any compact subset of $B$. Thus let $H$ be any compact subset of B containing the bounded subset 
$$
    \Big\{b\in \pi^{-1}(K)\subset B : \sup_{x\in K}d(\varphi_0(x),b) < 2\max\big(\sup_{x\in K} d(\varphi_0(x),\varphi_1(x)), \sup_{x\in K} d(\varphi_0(x),\varphi_{-1}(x))\big)\Big\}\ ,
$$
where $d$ is the distance induced by the metric chosen. Let $\delta>0$ be the injective radius of the metric on the compact $H$. If $r=\max\big(\sup_{x\in K} d(\varphi_0(x),\varphi_1(x)), \sup_{x\in K} d(\varphi_0(x),\varphi_{-1}(x))\big)$ there will be some finite $n \in \mathbb{N}$ such that $n\delta < r <(n+1)\delta$, and thus we can select a finite family of sections $\left\{\varphi_{(k,l)}\right\}_{k,l =1,\ldots,n}$ interpolating between $\varphi_0=\varphi_{(0,0)}$ and $\varphi_1=\varphi_{(n,0)}$, $\varphi_{-1}=\varphi_{(0,n)}$, $\varphi=\varphi_{(n,n)}$ such that 
$$
	(|k-k'|-1)\frac{\delta}{2}+(|l-l'|-1)\frac{\delta}{2}<\sup_{x\in K}d\left(\varphi_{(k,l)}(x),\varphi_{(k',l')}(x)\right)< (|k-k'|)\frac{\delta}{2}+(|l-l'|)\frac{\delta}{2},
$$
This property ensures that we are interpolating in the right direction that is, as $k$ (resp. $l$) grows, new sections are nearer to $\varphi_1$ (resp. $\varphi_{-1})$ and further away from $\varphi_0$. Eventually modifying $\exp$ to $\widetilde\exp$ as done in \eqref{eq_1_tilde_exp}, set
$$
    \vec{X}_{(k,l)} \doteq \widetilde\exp^{-1}_{\varphi_{(k-1,l)}}\left( \varphi_{(k,l)}\right)\ ,
$$
$$
	\vec{Y}_{(k,l)} \doteq \widetilde\exp^{-1}_{\varphi_{(k,l-1)}}\left( \varphi_{(k,l)} \right)\ .
$$
We claim that those are the vector fields interpolating between sections. They are always well defined because, by construction, we choose adjacent sections to be separated by a distance where $\widetilde\exp$ is still a diffeomorphism. Due to the mutual disjoint support of $\varphi_1$ and $\varphi_{-1}$, we can identify $\vec{X}_{(k,l)}$ (resp. $\vec{Y}_{(k,l)}$) with each other $\vec{X}_{(k,l')}$ (resp. $\vec{Y}_{(k',l)}$) for all $\leq l, \ l' \leq n$ (resp. $\leq k, \ k' \leq n$), therefore it is justified to use one index to denote the vector fields as done in \eqref{eq_+_vector} and \eqref{eq_-_vector}. Moreover, for each $k$, $l\in \mathbb{N}$, we have 
$$
    \widetilde\exp\big(\vec{X}_{k}\big)\circ \widetilde\exp\big(\vec{Y}_{l}\big)=\widetilde\exp\big(\vec{Y}_{l}\big)\circ \widetilde\exp\big(\vec{X}_{k}\big)\ ;
$$
which provides a uniquely defined section $\varphi$.
\end{proof}

\begin{proposition} \label{prop_1_additivity}
Let $F \in \mathcal{F}_c(B,\mathcal{U})$ then the following statements are equivalent:
\begin{itemize}
\item[$(i)$] $F$ is additive; 
\item[$(ii)$] $F$ is $\varphi_0$-additive for all $\varphi_0 \in \mathcal{U}$;
\item[$(iii)$] $F \in \mathcal{F}_{\mathrm{loc}}(B,\mathcal{U)}$.
\end{itemize}
\end{proposition}
\begin{proof}
Let us start proving the equivalence between $(i)$ and $(ii)$.\\
$(i) \Rightarrow (ii) $ If $\varphi_j \in \mathcal{U}\cap\mathcal{U}_{\varphi_0}$ with $j=1,0,-1$ are as in $(ii)$ above, take $\vec{X}_j$ such that $u_{\varphi_0}^{-1}(\vec{X}_j)=\varphi_j$. Writing \eqref{eq_1_glob_additivity} in terms of $F_{\varphi_0}$ yields \eqref{eq_1_loc_additivity}.\\
$(ii) \Rightarrow (i) $ Let us take sections $\varphi_j$ with $j=1,0,-1$ such that $\mathrm{supp}_{\varphi_0}(\varphi_1)\cap \mathrm{supp}_{\varphi_0}(\varphi_{-1})=\emptyset$, combining Lemma \ref{lemma_1_interpolation_of_sections} with $\varphi$-additivity for each section, yields
$$
\begin{aligned}
	F(\varphi)& = F_{\varphi_{(n-1,n-1)}}\big(\vec{X}_n+ \vec{Y}_n\big)=
	F_{\varphi_{(n-1,n-1)}}\big(\vec{X}_n\big)+F_{\varphi_{(n-1,n-1)}}\big( \vec{Y}_n\big)-F_{\varphi_{(n-1,n-1)}}(0) \\ 
    &= F\left(\varphi_{(n,n-2)}\right)+\textcolor{red}{F\left(\varphi_{(n-1,n-1)}\right)}-F\left(\varphi_{(n-1,n-2)}\right)	\\ 
    &\quad+ F\left(\varphi_{(n-2,n)}\right)+F\left(\varphi_{(n-1,n-1)}\right)-F\left(\varphi_{(n-2,n-1)}\right)-	\textcolor{red}{F\left(\varphi_{(n-1,n-1)}\right)}\\ 
     &= F\left(\varphi_{(n,n-2)}\right)-\textcolor{blue}{F\left(\varphi_{(n-1,n-2)}\right)} + F\left(\varphi_{(n-2,n)}\right)+\textcolor{violet}{F\left(\varphi_{(n-2,n-1)}\right)}+\textcolor{blue}{F\left(\varphi_{(n-1,n-2)}\right)}\\
    &\quad-F\left(\varphi_{(n-2,n-2)}\right)-\textcolor{violet}{F\left(\varphi_{(n-2,n-1)}\right)} \\
    &=  F\left(\varphi_{(n,n-2)}\right)+F\left(\varphi_{(n-2,n)}\right)-F\left(\varphi_{(n-2,n-2)}\right)\ .
\end{aligned}
$$
Repeating the above argument an extra $n-2$ times we arrive at
$$
	F(\varphi)= F\left(\varphi_{(n,0)}\right)+F\left(\varphi_{(0,n)}\right)-F\left(\varphi_{(0,0)}\right)\equiv F(\varphi_1)+F(\varphi_{-1})-F(\varphi_0) \	.
$$
We conclude proving that $(ii)$ and $(iii)$ are equivalent.\\
$(iii) \Rightarrow (ii)\ $Take $\varphi_j$, $\vec{X}_j \doteq u_{\varphi_0}(\varphi_j) $, with $j=1,0,-1$ as in $(i)$ Definition \ref{def_1_additivity}. Then 
$$
\begin{aligned}
	 & F_{\varphi_0}(\vec{X}_1 +\vec{X}_{-1})-F_{\varphi_0}(\vec{X}_1)+F_{\varphi_0}(0)-F_{\varphi_0}(\vec{X}_{-1})=
	 \int_{0}^1 \frac{d}{dt}\left( F_{\varphi_0}(\vec{X}_1+ t\vec{X}_{-1})- F_{\varphi_0}(t\vec{X}_{-1})\right)dt  \\& =
	 \int_{0}^1  \frac{d}{dt}  \left( \int_0^1 \frac{d}{dh}  F_{\varphi_0}(h\vec{X}_1 + t\vec{X}_{-1})dh\right)dt= \int_{0}^1  \int_{0}^1 d^2F{\varphi_0}[h\vec{X}_1+ t\vec{X}_{-1}](\vec{X}_1,\vec{X}_{-1})dhdt\ .
\end{aligned}
$$
By locality we have that $\mathrm{supp}\big(d^2F{\varphi_0}\big) \subset \triangle_2 M$, however, $\mathrm{supp}(\vec{X}_1) \cap \mathrm{supp}(\vec{X}_{-1})=\emptyset$ implying that the integrand on the right hand side of the above equation is identically zero.\\
$(ii) \Rightarrow (iii)\ $Fix any $\varphi_0 \in \mathcal{U}$, consider two vector fields $\vec{X}_1$, $\vec{X}_{-1}\in \Gamma^{\infty}\big(M\leftarrow  \varphi_0^{*}VB\big)$ such that $\mathrm{supp}(\vec{X}_1) \cap\mathrm{supp}(\vec{X}_{-1})=\emptyset$. Additionally, let $\varphi_{j}\doteq u_{\varphi_0}^{-1}(\vec{X}_j)$ for $j=1,-1$. As a result, $\mathrm{supp}_{\varphi}(\varphi_1) \cap \mathrm{supp}_{\varphi}(\varphi_{-1})=\emptyset$. By direct computation we get 
$$
\begin{aligned}
	 F_{\varphi_0}^{(2)}[0](\vec{X}_1,\vec{X}_{-1}) & =\left.\frac{d^2}{dt_1dt_2} \right|_{t_1=t_2=0} F_{\varphi_0}(t_1\vec{X}_1 + t_2\vec{X}_{-1}) \\ & = \left.\frac{d^2}{dt_1dt_2} \right|_{t_1=t_2=0} \left( F_{\varphi_0}(t_1\vec{X}_1)-F_{\varphi_0}(0)+F_{\varphi_0}( t_2\vec{X}_{-1}) \right) \equiv 0\ .
\end{aligned}
$$
\end{proof}

As a result, we have shown that locality and additivity are consistent concepts in a broader generality than done in \cite{acftstructure}. Of course, additivity strongly relates to Bogoliubov's formula for S-matrices, therefore we expect that as long as we can formulate different concepts consistently, those must be equivalent formulations. We also mention that when the exponential map used to construct ultralocal charts is a global diffeomorphism, then additivity and $\varphi$ additivity becomes trivially equivalent since the chart can be enlarged to $\mathcal{V}_{\varphi}\equiv \{\psi\in \Gamma^{\infty}(M\leftarrow B): \mathrm{supp}_{\varphi}(\psi) \subset M \space\ \mathrm{is} \space\ \mathrm{compact} \}$.

\begin{remark}
   The ultralocal notion of additivity \textit{i.e.} $(i)$ in Definition \ref{def_1_additivity} is independent from the ultralocal chart: suppose that $F$ is $\varphi_0$-additive in $\lbrace \mathcal{U}_{\varphi_0},u_{\varphi_0} \rbrace$, take another chart $\lbrace \mathcal{U}'_{\varphi_0},u'_{\varphi_0} \rbrace$ such that $\mathcal{U}'_{\varphi_0} \cap \mathcal{U}_{\varphi}\neq \emptyset$, set $\vec{X}_j=u_{\varphi_0}(\varphi_j)$, $\vec{Y}_j=u_{\varphi_0}'(\varphi_j)$, for $j=1,-1$, we have\footnote{In the subsequent calculations we can assume, without loss of generality, that $\vec{Y}_1+\vec{Y}_{-1}\in u'_{\varphi_0}(\mathcal{U}'_{\varphi_0})$, for if this is not the case we can use an argument involving Lemma \ref{lemma_1_interpolation_of_sections} to make this expression meaningful.}
$$
\begin{aligned}
	F\circ u'^{-1}_{\varphi_0}(\vec{Y}_1+\vec{Y}_{-1}) & =F\circ u^{-1}_{\varphi_0}\circ u_{\varphi_0}\circ u'^{-1}_{\varphi_0} (\vec{Y}_1+\vec{Y}_{-1})=F\circ u^{-1}_{\varphi_0}(\vec{X}_1+\vec{X}_{-1})\\
	& = F\circ u^{-1}_{\varphi_0}(\vec{X}_1 ) - F\circ u^{-1}_{\varphi_0}(0) + F\circ u^{-1}_{\varphi_0}(\vec{X}_{-1})\\ 
	& = F\circ u'^{-1}_{\varphi_0}(\vec{Y}_1 ) - F\circ u'^{-1}_{\varphi_0}(0) + F\circ u'^{-1}_{\varphi_0}(\vec{Y}_{-1})
\end{aligned}
$$
where $u_{\varphi_0}\circ u'^{-1}_{\varphi_0} (\vec{Y}_1+\vec{Y}_{-1})=\vec{X}_1+\vec{X}_{-1}$ is due to the fact that the two vector fields have mutually disjoint supports. We then see that $\varphi_0$-additivity does not depend upon the chosen chart. 
\end{remark}

\begin{remark}\label{rmk_1_extension_of_functionals}
    Functionals are generally defined in $CO$-open subsets instead of more general Whitney open sets since we can always extend them 
 from $WO^{\infty}$-open domains to a $CO$-open subsets. To wit, suppose $F:\mathcal{U} \to \mathbb{R}$ is a smooth functional with compact spacetime support, then consider the function $\chi\in C^{\infty}_c(M)$ having $0\leq \chi \leq 1$, $\chi\equiv 1$ inside $K \supset \mathrm{supp}(F)$ and, given any $\varphi_0 \in \mathcal{U}$, define
    \begin{equation}\label{eq_1_CO_extension_mapping}
        i_{\chi,\varphi_0}: \Gamma^{\infty}(M \leftarrow B) \to \mathcal{V}_{\varphi_0}, \quad \widetilde{\psi} \mapsto \psi.
    \end{equation}
    The mapping can be constructed using with Lemma \ref{lemma_1_interpolation_of_sections}: starting with $\varphi_0$, we can modify the latter inside $K$ so that $\widetilde\exp(\Vec{X}_n)\circ \cdots \circ \widetilde\exp(\Vec{X}_1)\varphi_0|_K=\widetilde{\psi}|_K$, then setting $\psi=\widetilde\exp(\chi\Vec{X}_n)\circ \cdots \circ \widetilde\exp(\chi\Vec{X}_1)\varphi_0$, we have that $\psi=\widetilde{\psi}$ inside $\mathrm{supp}(F)$, $\psi=\varphi_0$ outside $K$. $i_{\chi,\varphi_0}$ is a continuous and smooth mapping, and when $\mathcal{U}$ is a $\mathrm{WO}^{\infty}$ open neighborhood of $\varphi_0$, $i_{\chi,,\varphi_0}^{-1}(\mathcal{U})$ is a $CO$-open. Then we can seamlessly extend the functional $F$ to $\widetilde{F}: \widetilde{\mathcal{U}}\equiv \cup_{\varphi_0 \in \mathcal{U}}i_{\chi,\varphi_0}^{-1}(\mathcal{U})\to \mathbb{R}$. The functional will remain smooth, and all its derivatives will not be affected by the cutoff function $\chi$.
\end{remark}

We now give the characterization of microlocality; we will find that, contrary to additivity, the latter representation will be limited to a chart domain, in the sense that the functional can be represented as an integral provided we shrink its domain to a chart, this representation however will not be independent from the chosen chart.

We recall that a mapping $T:U\subset X\to Y$ between locally convex spaces is \textit{locally bornological} if for any $x\in X$ there is a neighborhood $V\ni x$ contained in $U$ such that $T|_V$ maps bounded subsets of $V$ into bounded subsets of $Y$. From this definition, it follows a technical result:

\begin{lemma}\label{lemma_1_loc_bornology}
    Let $\mathcal{U}\subset \Gamma^{\infty}(M\leftarrow B)$ be $CO$-open, then a smooth, spacetime compactly supported functional $F$ satisfies: $dF:\mathcal{U}_{\varphi} \to \Gamma^{\infty}_c(M\leftarrow \varphi^*VB'\otimes \Lambda_m(M))$ is Bastiani smooth if and only if it is locally bornological. 
\end{lemma}
\begin{proof}[Sketch of a proof]
The proof of this result when $B=M\times \mathbb R$ can be found in \cite[Lemma 2.6]{acftstructure}. Since we are allowing for a bit more generality (\textit{i.e.} we are considering distributional sections of the bundle $\varphi^*VB\to M$), we will just highlight the minor changes to the argument presented in the aforementioned Lemma 2.6. From Bastiani smoothness of $F$ we can see $F^{(1)}$ as a Bastiani smooth mapping $\mathcal{U}_{\varphi}\to \Gamma^{-\infty}_c(M\leftarrow \varphi^*VB)$, combining the support property of $F$ with the fact that it is microlocal, we obtain that $F^{(1)}$ can be viewed as a mapping $T:\mathcal{U}_{\varphi}\to \Gamma^{\infty}_K(M\leftarrow \varphi^*VB\otimes \Lambda_m(M))$ for some compact subset $K$ of $M$. To prove the lemma it is enough to show that $T$ is Bastiani smooth if and only if it is locally bornological. Necessity follows from the fact that composing $T$ with (Bastiani smooth) chart mappings yields a Bastiani smooth, hence continuous, mapping $\Gamma^{\infty}_c(M\leftarrow \varphi^*VB)\to\Gamma^{\infty}_K(M\leftarrow \varphi^*VB\otimes \Lambda_m(M))$. Both spaces are semi-Montel, \textit{i.e.} every bounded subset is relatively compact. Thus let ${W}\subset \overline{{W}}\subset {V} \subset u_{\varphi}(\mathcal{U}_{\varphi})$, with $\overline{W}$ bounded, then $T(\overline{W})$ is compact, hence bounded, due to continuity of $F$. As a result $T|_{V}$ is locally bornological. The sufficiency condition is guaranteed if, 
\begin{itemize}
    \item given any compact subset $K\subset M$ and a finite cover of $K$ of the form ${\psi_{\alpha}^{-1}(Q)}$ where $Q\subset \mathbb{R}^n$ is the open $m$-cube and $\psi_{\alpha}$ are the charts of $M$;
    \item given any partition of unity ${f_{\alpha}}$ of the above cover, the induced conveniently smooth\footnote{We recall, as in \cite[Definition 3.6]{kriegl1997convenient}, that a mapping is conveniently smooth if it maps smooth curves in $\mathcal{U}_{\varphi}$ to smooth curves of $\mathcal{D}\big(E|_Q\big)$.} mappings $T_{\alpha}: \mathcal{U}_{\varphi} \to \mathcal{E}'\big(Q;V\big), \quad \varphi \mapsto (\psi_{\alpha})_*\big(f_{\alpha}T(\varphi)\big)$, where $V \simeq \mathbb{R}^d$ is the typical fiber of the bundle $\varphi^*VB'\otimes \Lambda_m(M) \to M$, can be equivalently seen as conveniently smooth mappings $T_{\alpha}: \mathcal{U}_{\varphi} \to \mathcal{D}\big(Q;V\big)$.
\end{itemize}

By hypothesis we know that $T_{\alpha}: \mathcal{U}_{\varphi} \to \mathcal{D}\big(Q;V\big)$ is locally bornological, to complete the proof we note that each projection mapping $\pi_i:V \to \mathbb{R}$, $i=1,\ldots ,n$, induces continuous and also bounded mappings $(\pi_i)_{*} : \mathcal{D}\big(Q;V\big) \to \mathcal{D}(Q)$. Thus, by claims (i), (ii) in the proof of Lemma 2.6 in \cite{acftstructure}, each $(\pi_i)_{*} T_{\alpha} :\mathcal{U}_{\varphi} \to \mathcal{D}(Q) $, which remains locally bornological, maps smooth curves of $\mathcal{U}_{\varphi}$ to smooth curves of $\mathcal{D}(Q)$ implying that $T_{\alpha}: \mathcal{U}_{\varphi} \to \mathcal{D}\big(Q;V\big)$ is convenient smooth as well for any $\alpha$.
\end{proof}

\begin{proposition}\label{porop_1_muloc_charachterization}
Let $\mathcal{U}\subset\Gamma^{\infty}(M\leftarrow B)$ be $CO$-open and $F \in \mathcal{F}_{\mu loc}(B,\mathcal{U})$ , then $f^{(1)}:\mathcal{U}\subset \Gamma^{\infty}(M\leftarrow B)\ni \varphi \mapsto f^{(1)}_{\varphi}[0]\in \Gamma^{\infty}_c(M\leftarrow \varphi^{*}VB'\otimes \Lambda_m(M))$ is locally bornological if and only if for each $\mathcal{U}_{\varphi_0}\subset \mathcal{U}$ there is a $m$-form $\lambda_{F,\varphi_0}\equiv \lambda_{F,0}$ with $\lambda_{F,0}(j^{\infty}\varphi)$ having compact support for all $\varphi \in \mathcal{U}_{\varphi_0}$ such that
\begin{equation}\label{eq_1_muloc_formula}
	F(\varphi)=F(\varphi_0)+ \int_M (j^r_x\varphi)^{*}\lambda_{F,0}\ .
\end{equation}
\end{proposition}
\begin{proof}
Suppose $ F(\varphi)=F(\varphi_0)+ \int_M (j^{r}\varphi)^{*}\lambda_{F,0}$ for all $\varphi\in \mathcal{U}_{\varphi_0}$, we evaluate $dF_{\varphi}[0]$ and find that its integral kernel may always be recast in the form
$$
	f^{(1)}_{\varphi}[0](x)=e_i[\lambda,\varphi_0](j^{2r}_x\varphi)dy^i\otimes d\mu_g(x)
$$
where $e_i[\lambda,F,\varphi_0]dy^i\otimes d\mu_g:\Gamma^{\infty}(M\leftarrow B) \to \Gamma^{\infty}_c(M\leftarrow \varphi^{*}VB'\otimes \Lambda_m(M))$ are the Euler-Lagrange equations associated to $\lambda_{F,\varphi_0}$ evaluated at some field configuration. Using the ultralocal differential structure of the source space, and keeping in mind that $e_i[\lambda,F]: \mathcal{U}_{\varphi_0}\to \Gamma^{\infty}_c(M\leftarrow \varphi^{*}VB'\otimes \Lambda_m(M))$ is an operator of bounded order, we can apply Theorem \ref{thm_A_Bastiani smooth_pushforward} and by Lemma \ref{lemma_1_loc_bornology} to get that
$f^{(1)}$ is locally bornological.

Conversely suppose that $f^{(1)}$ is locally bornological, by Lemma \ref{lemma_1_loc_bornology} it is Bastiani smooth as a mapping as well. Fix $\varphi_0\in \mathcal{U}$ and call $\vec{X}=u_{\varphi_0}(\varphi)$, by microlocality combined with Schwartz kernel theorem,
$$
	F(\varphi)-F(\varphi_0)=F_{\varphi_0}(\vec{X})-F_{\varphi_0}(0)= \int_0^1 dF_{\varphi_0}[t\vec{X}](\vec{X}) dt=\int_0^1 dt \int_M f^{(1)}_{\varphi_0}[t\vec{X}](\vec{X})\ .
$$
Applying the Fubini-Tonelli theorem to exchange the integrals in the above relation yields our candidate for $j^{r}\varphi^{*}\lambda_{F,0}$: the $m$-form $x \mapsto \theta[\varphi](x)\equiv \int_0^1 f^{(1)}_{\varphi_0}[t\vec{X}](\vec{X})(x)dt$. We have to show that this element depends at most on $j^{r}_x\varphi$. Notice that, a priori, $\theta[\varphi](x)$ might not depend on $j^r_x\varphi$, however we can say that if $\varphi_1$, $\varphi_2$ possess the same germ at $x\in M$, then $\theta[\varphi_1]\vert_{V}=\theta[\varphi_2]\vert_{V}$ for a suitably small neighborhood $V$ of $x$. To see this, set $\vec{X}_1=u_{\varphi_0}(\varphi_1)$, $\vec{X}_2=u_{\varphi_0}(\varphi_2)$, by ultralocality of the charts, they agree in a suitably small neighborhood $V'\subset V$ of $x$, moreover 
$$
\begin{aligned}
	 \theta[\varphi_1](x)-\theta[\varphi_2](x)& =\int_0^1 dt \left(f^{(1)}_{\varphi_0}[t\vec{X}_{1}](\vec{X}_{1})(x)-f^{(1)}_{\varphi_0}[t\vec{X}_{2}](\vec{X}_{2})(x)\right)\\ 
	&=\int_0^1 dt \left(f^{(1)}_{\varphi_0}[t\vec{X}_{1}]_i(x)\vec{X}_{1}^i(x)-f^{(1)}_{\varphi_0}[t\vec{X}_{2}]_i(x)\vec{X}_{2}^i(x)\right)\\ 
	&= \int_0^1 dt \int_0^1 dh f^{(2)}_{\varphi_0}[t\vec{X}_{2}+th\vec{X}_{1}-th\vec{X}_{2}]_{ij}(x)(t\vec{X}_{1}-t\vec{X}_{2})^i(x) \vec{X}_1^j(x)\ ; 
\end{aligned}
$$
where in the last equality we used locality of $F$ and linearity of the derivative. The last line of the above equation identically vanishes in $V'$ due the support properties of $f^{(2)}_{\varphi_0}$ and the fact that $\vec{X}_1\vert_{V'}=\vec{X}_2\vert_{V'}$. Therefore $\theta[\varphi](x)\in \varphi^{*}VB'\otimes \Lambda_m(M)$ depends at most on $\mathrm{germ}_x(\varphi)$. 

We wish to apply Peetre-Slov\'ak's theorem 
to $\theta$; the germ dependence hypothesis has been verified above, so one has to show that $\theta$ is also weakly regular, that is, if $\mathbb{R}\times M \ni (t,x ) \mapsto \varphi_t(x) \in B$ is compactly supported variation, then $(t,x)\mapsto\theta[\varphi_t](x)$ is again a compactly supported variation. $\theta$ is a compactly supported form, thus it maps compactly supported variations into compactly supported variations. Moreover it is Bastiani smooth since 
$$
    \varphi \mapsto F(\varphi_0)+ \int_M\theta[\varphi]d\mu_g(x)=F(\varphi)
$$
is an observable. Then we can conclude by Lemma \ref{lemma_A_locality&bastiani_implies_w-reg}.

Finally, applying the Peetre-Slovak we deduce that for each neighborhood of $x$ there exists $r=r(x,\varphi_0) \in \mathbb{N}$, an open neighborhood $U^r\subset J^rB$ of $j^r\varphi_0$ and a mapping $\lambda_{F,0}:J^{r}B\supset U^r \to \Gamma^{\infty}_c(M\leftarrow \varphi^{*}_0VB'\otimes \Lambda_m(M))$ such that $\lambda_{F,0}(j^r_x\varphi)=\theta[\varphi](x)$ for each $\varphi$ with $j^r\varphi\in U^r$. Due to compactness of $\mathrm{supp}(\theta)$ we can take the order $r$ to be independent from the point $x$ on $M$; then
$$
	F(\varphi)=F(\varphi_0)+\int_M \lambda_{F,0}(j^r_x\varphi)\ .
$$
\end{proof}

\begin{remark}    
One could also strengthen the hypothesis of Proposition \ref{porop_1_muloc_charachterization}, for example, by requiring that for every $k \in \mathbb{N} $, $R>0$ and every $\varphi_0 \in\mathcal{U}$, 
whenever
$$
    \sup_{\substack{x\in K \\ j\leq k}} \Big\vert\nabla^j\big(u_{\varphi_0}(\varphi_1)-u_{\varphi_0}(\varphi_2) \big)(x) \Big\vert \equiv \big|\big|u_{\varphi_0}(\varphi_1)-u_{\varphi_0}(\varphi_2) \big|\big|_{K,k+r} < R\ ,
$$
there exists a positive constant $C$ for which
\begin{equation}\label{eq_1_lipschitz_bound}
    \big|\big|f_{\varphi_0}^{(1)}[\varphi_1]-f_{\varphi_0}^{(1)}[\varphi_2]\big|\big|_{K,k} \leq C \big|\big|u_{\varphi_0}(\varphi_1)-u_{\varphi_0}(\varphi_2) \big|\big|_{K,k+r}\ .
\end{equation}
This condition implies that $f^{(1)}$ is locally bornological, moreover it is sufficient (see Lemma 1 together with (B) of Theorem 1 in \cite{zajtz1999nonlinear}) to imply that the order $r$ from Proposition \ref{porop_1_muloc_charachterization} is independent from the section $\varphi$ and thus globally constant.
\end{remark}

As mentioned above, this characterization is limited to the ultralocal chart chosen: given charts $\lbrace \mathcal{U}_{\varphi_j},u_{\varphi_j} \rbrace$, $j=1,2$ such that $\varphi \in \mathcal{U}_{\varphi_1} \cap \mathcal{U}_{\varphi_2}$, let $\vec{X}_j = u_{\varphi_j}(\varphi)$ and suppose $F$ satisfies the hypothesis of Proposition \ref{porop_1_muloc_charachterization}, then according to \eqref{eq_1_muloc_formula}
$$
	F(\varphi)=  F(\varphi_1)+ \int_M (j^{r_1}\varphi)^{*}\lambda_{F,1}= F(\varphi_2)+ \int_M (j^{r_2}\varphi)^{*}\lambda_{F,2}\ .
$$
Assuming $r_1=r_2\equiv r$ and using the same argument as in the proof of Proposition \ref{porop_1_muloc_charachterization}\ ,
$$
\begin{aligned}
	(j^{r}\varphi)^{*}\lambda_{F,1}(x) &=  \int_0^1 f_{\varphi_1}^{(1)}[t\vec{X}_1]\left(\vec{X}_1\right)(x) \mathrm{d}t \\ 
	& = \int_0^1 f_{\varphi_2}^{(1)}[u_{\varphi_1 \varphi_2}(t\vec{X}_1)]\left( u_{\varphi_1 \varphi_2}(\vec{X}_1)\right)(x) \mathrm{d}t	\\ 
	& = \int_0^1 f_{\varphi_2}^{(1)}[u_{\varphi_1 \varphi_2}(t\vec{X}_1)]\left(\vec{X}_2\right)(x) \mathrm{d}t\ ,
\end{aligned}
$$
whereas
$$
	(j^{r}\varphi)^{*}\lambda_{F,2}(x)= \int_0^1 f_{\varphi_2}^{(1)}[t\vec{X}_2]\left(\vec{X}_2\right)(x) \mathrm{d}t\ ,
$$
we therefore see that the lack of linearity of the transition mapping $u_{\varphi_1\varphi_2}$, namely $u_{\varphi_1 \varphi_2}(t\vec{X}_1)\neq t u_{\varphi_1 \varphi_2}(\vec{X}_1)= t\vec{X}_2 $, does not allow us to conclude $(j^{\infty}\varphi)^{*}\lambda_{F,1}(x)= (j^{\infty}\varphi)^{*}\lambda_{F,2}(x)$.
\begin{remark}
    
We give another argument that prevents the characterization in Proposition \ref{porop_1_muloc_charachterization} from being intrinsic. This relies on the variational sequence\footnote{A complete exposition can be found in \textit{e.g.} \cite{krupka}.}: a cohomological sequence of forms over $J^rB$ for some finite $r\in \mathbb{N}$
\begin{center}
\hspace{-1.45cm}
\begin{tikzcd}
		 & 0 \arrow[r]\arrow[d, phantom, ""{coordinate, name=Z}] & \mathbb{R}  \arrow[r,"E_0"] & \Omega^1(J^rB)/\sim \arrow[r,"E_1"]& \ldots \arrow[r] & \Omega^m(J^rB)/\sim \arrow[dllll,"E_m",rounded corners,to path={ -- ([xshift=2ex]\tikztostart.east)|- (Z) [near end]\tikztonodes
-| ([xshift=-2ex]\tikztotarget.west)-- (\tikztotarget)}] \\
   & \Omega^{m+1}(J^rB)/\sim \arrow[r,"E_{m+1}"] & \Omega^{m+2}_{h}(B)/\sim  \arrow[r]
		  & \ldots \arrow[r] &\Omega^{N}(B) \arrow[r,"E_{N}"]  &  0 \ ,
\end{tikzcd}
\end{center}
where each element of the sequence is the quotient of the space of $p$-forms in $J^rB$ modulo some relation that cancel the exact forms (in the sense of the de-Rham differential on the manifold $J^rB$) and accounts for integration by parts when the order is greater then $m=\mathrm{dim}(M)$. In particular the $m$th differential $E_m$ is the operator which, given a horizontal $m$-form, calculates its \textit{Euler-Lagrange form} and the $(m+1)$th differential $E_{m+1}$ is the operator which associates to each Euler-Lagrange form its \textit{Helmholtz-Sonin form}. By the Poincar\'e lemma, if $\sigma\in \mathrm{ker}(E_{m+1})$, there exists a local chart $(V^r,\psi^r)$ in $J^rB$ and a horizontal $m$-form $\lambda \in \Omega^m(V^r)$ having $E_m\vert_{V^r}(\lambda)=\sigma\vert_{V^r}$. Establishing whether this condition holds globally amounts to determine, given some Euler-Lagrange equations satisfying certain conditions (the associated Helmholtz-Sonin form vanishes), whether they arise from the variation of some Lagrangian. The variational sequence implies that a sufficient condition is the vanishing of the $m$-th cohomology group, \textit{i.e.} whenever topological obstructions are not present. 

Now, if Proposition \ref{porop_1_muloc_charachterization} could somehow reproduce \eqref{eq_1_muloc_formula} for each $\varphi \in \mathcal{U}$ with an integral over the same $m$-form $\lambda_F$, we would have found a way to circumvent the topological obstructions that ruin the exactness of the variational sequence. Furthermore in the derivation of $\lambda_{F,0}$ we did not even require that the associated Euler-Lagrange equations had vanishing Helmholtz-Sonin form, but instead a Bastiani smoothness requirement that, due to Proposition \ref{thm_A_Bastiani smooth_pushforward}, will always be met by integral functionals constructed from smooth geometric objects. It appears therefore that the two approaches bears some kind of duality: given a representative $F_{\varphi}^{(1)}[0] \in \Omega^{m+1}(J^rB)/ \hspace{-0.1cm}\sim$ one can, on the one hand, get a \textit{ultralocal chart dependent} Lagrangian via Proposition \ref{porop_1_muloc_charachterization} \textit{i.e.} a global $m$-form on the bundle $J^rB$ which however describe the functional only when evaluated in a small neighborhood of a background section $\varphi_0$; on the other hand, prioritize \textit{ultralocal chart independence}, therefore having a local $m$-form defined on the bundle $J^r(\pi^{-1}U)$ for some open subset $U$ of $M$, which however describes the functional for all sections of $\Gamma^{\infty}(U\leftarrow \pi^{-1}(U))$.
\end{remark}

\begin{proposition}\label{prop_1_variational_sqn}
Let $\mathcal{U}\subset\Gamma^{\infty}(M\leftarrow B)$ be $CO$-open, $F \in \mathcal{F}_{\mu loc}(B,\mathcal{U})$ satisfying the hypothesis of Proposition \ref{porop_1_muloc_charachterization} and the bound \eqref{eq_1_lipschitz_bound}. Fix $\varphi \in \mathcal{U}_{\varphi_0}$ and suppose that
$$
	F(\varphi)=F(\varphi_0)+ \int_M (j^{r}\varphi)^{*}\lambda_{F,0}=F(\varphi_0)+ \int_M (j^{r}\varphi)^{*}\lambda'_{F,0}\ .
$$
then $\lambda_{F,0}-\lambda'_{F,0}=d_h\theta$ for some $\theta \in \Omega^{m-1}_{\mathrm{hor}}(J^rB)$ if and only if the $m$-th de Rham cohomology group $H_{\mathrm{dR}}^m(B)=0$. In particular the above condition is verified whenever $B$ is a vector bundle with finite dimensional fiber and $M$ is orientable non-compact and connected.
\end{proposition}
\begin{proof}
Using the notation introduced above for the variational sequence, we have that $E_m(\lambda_{F,0})= f^{(1)}_{\varphi_0}[0] = E_m(\lambda'_{F,0})$; thus their difference is zero and $\lambda_{f,\psi}-\lambda'_{f,\psi}\in \Omega^m(J^rB)/\sim$. The latter cohomology group is isomorphic, by the abstract de Rham Theorem, to $H_{\mathrm{dR}}^m(B)$, therefore $\lambda_{F,0}-\lambda'_{F,0}=d_h\theta$ if and only if $H_{\mathrm{dR}}^m(B)=0$. When $B$ is a vector bundle over $M$ its de Rham cohomology groups are isomorphic to those of $M$. Finally, if $M$ is orientable, non-compact and connected it has $H^m_{\mathrm{dR}}(M)=0$. The latter claim can be established using Poincar\'e duality, \textit{i.e.} $H^m_{\mathrm{dR}}(M)\simeq H^0_{\mathrm{dR},c}(M)$. If $M$ is non-compact and connected, \textit{e.g.} when it is globally hyperbolic, there are no compactly supported functions with vanishing differential other then the zero function, so the $m$-th cohomology group is zero.
\end{proof}

We shall conclude this section by introducing generalized Lagrangians, which, as the name suggests, will be used to select a dynamic on $\Gamma^{\infty}(M\leftarrow B)$. We stress that unlike the usual notion of Lagrangian - either a horizontal $m$-form over $J^rB$ or a morphism $J^rB\to \Lambda_m(M)$ - this definition will allow us to evaluate the action functional as an integral over the whole manifold (instead of a compact subset) as well as avoid the presence of singularities which might arise when the cut-off function $f$ is the characteristic function $\chi_K$ of a compact subset $K\subset M$.

\begin{definition}\label{def_1_gen_lag}
Let $\mathcal{U}\subset\Gamma^{\infty}(M\leftarrow B)$ be $CO$-open. A generalized Lagrangian $\mathcal{L}$ on $\mathcal{U}$ is a mapping $$\mathcal{L}:C^{\infty}_c(M) \rightarrow \mathcal{F}_{c}(B,\mathcal{U}),$$ such that 
\begin{itemize}
\item[$(i)$] $\mathrm{supp}(\mathcal{L}(f))\subseteq \mathrm{supp}(f)$ and $\mathcal{L}(f)$ is Bastiani smooth for all $f\in C^{\infty}_c(M)$, 
\item[$(ii)$] for each $f_1$, $f_2$, $f_3 \in C^{\infty}_c(M)$ with $\mathrm{supp}(f_1)\cap \mathrm{supp}(f_3)=\emptyset$, $$\mathcal{L}(f_1+f_2+f_3)=\mathcal{L}(f_1+f_2)-\mathcal{L}(f_2)+\mathcal{L}(f_2+f_3).$$
\end{itemize}
\end{definition}

Given the properties of the above Definition we immediately get:  

\begin{proposition} \label{prop_1_gen_lag_1}
Let $\mathcal{U}\subset\Gamma^{\infty}(M\leftarrow B)$ be $CO$-open, $\mathcal{L}$ a generalized Lagrangian on $\mathcal{U}$. Then
\begin{itemize}
\item[$(i)$] $\mathrm{supp}(\mathcal{L}(f+f_0)-\mathcal{L}(f_0)) \subseteq \mathrm{supp}(f)$ for all $f$, $f_0 \in C^{\infty}_c(M)$,
\item[$(ii)$] for all $f\in C^{\infty}_c(M)$, $\mathcal{L}(f)$ is a local functional.
\end{itemize}
\end{proposition}

The proof is the same as those of \cite[Lemma 3.1 and 3.2]{acftstructure}.

Combining the linearity of $C^{\infty}_c(M)$, property $(ii)$ Definition \ref{def_1_gen_lag} and Proposition \ref{prop_1_gen_lag_1} we obtain that each generalized Lagrangian can be written as a suitable sum of arbitrarily small supported generalized Lagrangians. To see it, fix $\epsilon>0$ and consider $\mathcal{L}(f)$. By compactness $\mathrm{supp}(f)$ admits a finite open cover of balls, $\lbrace B_i \rbrace_{i \in I}$ of radius $\epsilon$ such that none of the open balls is completely contained in the union of the others. Let $\lbrace g_i \rbrace_{i \in I}$ be a partition of unity subordinate to the above cover of $\mathrm{supp}(f)$, set $f_i \doteq g_i \cdot f$. Using $(ii)$ Definition \ref{def_1_gen_lag}
$$
	\mathcal{L}(f)=\mathcal{L}\left(\sum_i f_i \right)=\sum_{J\subset I} c_J \mathcal{L}\left(\sum_{j \in J} f_j \right),
$$
where $J\subset I$ contains the indices of all balls $B_i$ having non empty intersection with a fixed ball (the latter included), and $c_J= \pm 1$ are suitable coefficients determined by the application of $(ii)$ Definition \ref{def_1_gen_lag}. By construction each index $J$ has at most two elements and $\mathrm{supp}(\sum_{j \in J} f_j)$ is contained at most in a ball of radius $2\epsilon$. We have thus split $\mathcal{L}(f)$ as a sum of generalized Lagrangians with arbitrarily small supports.

\begin{definition}\label{def_1_euler_der}
Let $\mathcal{U}\subset\Gamma^{\infty}(M\leftarrow B)$ be $CO$-open, $\mathcal{L}$ a generalized Lagrangian on $\mathcal{U}$. The $k$-th Euler-Lagrange derivative of $\mathcal{L}$ in $\varphi \in \mathcal{U}$ along $ (\vec{X}_1,\ldots, \vec{X}_k) \in \Gamma^{\infty}_c(M\leftarrow \varphi^*VB)^k $ is 
\begin{equation}\label{eq_1_variational_der}
	\delta^{(k)} \mathcal{L}(1)_{\varphi}[0](\vec{X}_1,\ldots, \vec{X}_k)\doteq \left.\frac{d^k}{dt_1 \ldots dt_k}\right|_{t_1= \ldots=t_k=0} \mathcal{L}(f)_{\varphi}[0] (t_1 \vec{X}_1+ \ldots+ t_k\vec{X}_k)
\end{equation}
where $\left. f \right|_{K} \equiv 1$ on a suitable compact $K$ containing all compacts $\mathrm{supp}(\vec{X}_i)$.
\end{definition}


From now on we will assume that generalized Lagrangian used are microlocal, \textit{i.e.} $\mathcal{L}(f)\in \mathcal{F}_{\mu loc}(B,\mathcal{U})$ for each $f\in C^{\infty}_c(M)$; this means that the first Euler-Lagrange derivative can be written as
\begin{equation}\label{eq_1_euler_der}
	\delta^{(1)} \mathcal{L}(1)_{\varphi}[0](\vec{X})=\int_M E(\mathcal{L})_{\varphi}[0](\vec{X})\ ,
\end{equation}
where by microlocality $E(\mathcal{L})_{\varphi}[0]\in\Gamma^{\infty}_c(M\leftarrow \varphi^{*}VB' \otimes \Lambda_m(M))$.

A generalized Lagrangian $\mathcal{L}$ is \textit{trivial} whenever $\mathrm{supp}\big(\mathcal{L}(f)\big)\subset \mathrm{supp}(df)$ for each $f\in C^{\infty}_c(M)$. Triviality induces an equivalence relation on the space of generalized Lagrangians, namely two $\mathcal{L}_1$, $\mathcal{L}_{2}$ are equivalent whenever their difference is trivial. We can show that if two Lagrangians $\mathcal{L}_1$, $\mathcal{L}_2$ are equivalent then they end up producing the same first variation \eqref{eq_1_euler_der}. For instance, suppose that $\mathcal{L}_1(f)-\mathcal{L}_2(f) =\Delta\mathcal{L}(f)$ with $\Delta\mathcal{L}(f)$ a trivial generalized Lagrangian for each $f\in C^{\infty}_c(M)$. To evaluate $\delta^{(1)}\Delta\mathcal{L}(1)_{\varphi}[0](\vec{X})$ one has to choose some $f$ which is identically $1$ in a neighborhood of $\mathrm{supp}(\vec{X})$, however, by $(i)$ Definition \ref{def_1_gen_lag} $\mathrm{supp}\big(\Delta\mathcal{L}(f)\big)\subset \mathrm{supp}(df)\cap \mathrm{supp}(\vec{X})=\emptyset$. By Lemma \ref{lemma_1_support_property}, $E(\Delta\mathcal{L})_{\varphi}[0](\vec{X})=0$ and
$$
\begin{aligned}
     \delta^{(1)} \mathcal{L}_1(f)_{\varphi}[0](\vec{X}) &=  \delta^{(1)} \mathcal{L}_2(1)_{\varphi}[0](\vec{X})+\delta^{(1)} \Delta\mathcal{L}(1)_{\varphi}[0](\vec{X})\\
     &= \int_M E(\mathcal{L}_2)_{\varphi}[0](\vec{X})+\int_M E(\Delta\mathcal{L})_{\varphi}[0](\vec{X})\\
     &=\delta^{(1)} \mathcal{L}_2(1)_{\varphi}[0](\vec{X})\ .
\end{aligned}
$$
Finally we compare our generalized action functional with the \textit{standard} action which is generally used in classical field theory (see \textit{e.g.} \cite{fati}, \cite{krupka}). One generally introduce the \textit{standard geometric} Lagrangian $\lambda$ of order $r$, as a bundle morphism
\begin{center}
\begin{tikzcd}
		 & J^rB \arrow[r,"\lambda"] \arrow[d,"\pi^r"]  & \Lambda_m(M) \arrow[d, "\rho"] \\
		 & M\arrow[r,equal] & M 
\end{tikzcd}
\end{center}	
between $(J^rB,\pi^r,M)$ and $(\Lambda_m(M),\tau_{\Lambda}, M, \wedge^m T_{\cdot}^*M )$, where the latter is the vector bundle whose sections are $m$-forms. Two Lagrangian morphisms $\lambda_1$, $\lambda_2$ are equivalent whenever their difference is an exact form. Its associated \textit{standard geometric} action functional will therefore be 
\begin{equation}\label{eq_classical_lagrangian}
	\mathcal{A}_D(\varphi)= \int_M \chi_D(x) \lambda(j^r_x\varphi)
\end{equation}
where $\lambda$ an element of the equivalence class of Lagrangian morphisms, $D$ is a compact region of $M$ whose boundary $\partial D$ is an orientable $(m-1)$-manifold and $\chi_D$ its characteristic function. One could be tempted to draw a parallel with a generalized Lagrangian by considering the mapping
\begin{equation}\label{eq_classical_gen_lag}
	\chi_D \mapsto \mathcal{A}(\chi_D)= \int_M \chi_D(x) \lambda(j^r_x\varphi)\ .
\end{equation}
However \eqref{eq_classical_gen_lag} differs from Definition \ref{def_1_gen_lag} in the singular character of the cutoff function. Indeed the functional $\mathcal{A}_D \in \mathcal{F}_{loc}(B,\mathcal{U})$ for each choice of compact $D$ but it is \textit{never} microlocal, for the integral kernel of $\mathcal{A}(\chi_D)^{(1)}_{\varphi}[0]$ has always singularities localized in $\partial D$. This is a severe problem when attempting to calculate the Peierls bracket for local functionals, a way out is to extend this bracket to less regular functionals (see Definition \ref{def_1_WF_mucaus}) maintaining the closure of the operation (see Theorem \ref{thm_1_peierls_closedness}); however, we cannot outright extend the bracket to all local functionals. Therefore, in order to accommodate those less regular functionals such as \eqref{eq_classical_lagrangian}, one would need to place severe restrictions on the possible compact subsets $D$ which cut off possible integration divergences. Restrictions, however, are not consistent with the derivation of Euler-Lagrange equations by the usual variation technique which requires to consider \textit{each} $D\subset M$ compact. 

Of course, given a Lagrangian morphism $\lambda$ of order $r$ we can always define a generalized microlocal Lagrangian as a microlocal-valued distribution, \textit{i.e.} 

\begin{equation*}
    C^{\infty}_c(M)\times\mathcal{U} \ni (f,\varphi) \mapsto \mathcal{L}(f)(\varphi)= \int_M f(x) \lambda(j^r_x\varphi) \ .
\end{equation*}


\section{The Peierls bracket}\label{section_peierls_bracket}

We will define the Peierls bracket using the linearized field equations associated with the second derivative of a generalized microlocal Lagrangian, which are assumed to be normally hyperbolic. We start by reviewing some basic notions from the theory of normally hyperbolic (NH) operators.\\

Let $E, \ F \to M$ be vector bundles over a globally hyperbolic Lorentzian manifold $M$, a \textit{differential operator} $D:\Gamma^{\infty}(M\leftarrow E) \rightarrow \Gamma^{\infty}(M\leftarrow F)$ is of \textit{second order} if locally, it can be written as second order partial differential operator. $D$ is called \textit{normally hyperbolic} if its principal symbol $\sigma_D\in \Gamma^{\infty}(S^2 TM \otimes E^{*} \otimes F)$ can be written as 

\begin{equation}\label{eq_1_normal_hyp}
    \sigma_2(D) =\frac{1}{2} g^{-1}\otimes \mathrm{id}_E
\end{equation}
where $g$ is the Lorentzian metric on $M$.

\begin{theorem} \label{thm_1_properties_of_Green_functions}
Let $E\to M$ be a vector bundle over a globally hyperbolic Lorentzian manifold $M$ and let $D$ be a normally hyperbolic differential operator. Then $D$ admits global Green operators 
$$
    G_M^{\pm}: \Gamma^{\infty}_c(M\leftarrow E) \rightarrow \Gamma^{\infty}(M\leftarrow E)
$$
and their causal propagator $G = G^+-G^-$, satisfying the following properties:
\begin{itemize}
\item[$(i)$] \textbf{Continuity.} $G^{\pm}$, $G$ are continuous linear operators where $\Gamma^{\infty}_c(M\leftarrow E)$ has the LF-topology and $\Gamma^{\infty}(M\leftarrow E) $ the standard Fréchet topology. Moreover, each operator admit a continuous and linear extension to the spaces $\Gamma^{-\infty}_{\pm}(M\leftarrow E)$ topological dual to $\Gamma^{\infty}_{\mp}(M\leftarrow E)=\lbrace \sigma \in \Gamma^{\infty}(M\leftarrow E) :\space\ \forall x\in M \space\ \mathrm{supp}(\vec{u})\cap J^{\mp}(x) $ is compact $\rbrace$. 
\item[$(ii)$] \textbf{Support Properties.} 
$$
	\mathrm{supp}(G^{\pm} u) \subset J^{\pm}(\mathrm{supp}(u))
$$
for all $u \in \Gamma^{-\infty}_{\pm}(M\leftarrow E)$.
\item[$(iii)$] \textbf{Cauchy problem.} For every $v \in \Gamma_c^{-\infty}(M\leftarrow E)$ and spacelike Cauchy hypersurface $i:\Sigma \hookrightarrow M$ with $\mathrm{supp}(v) \subseteq I^{\pm}(\Sigma)$ and all $u_0$, $\dot{u}_0$ $\in \Gamma_c^{\infty}(M\leftarrow i^{*}E)$ there is a unique $u_{\pm} \in \Gamma^{-\infty}(M\leftarrow E)$ with
\begin{gather*}
	Du_{\pm}=v ,\\
	\mathrm{supp}(u_{\pm}) \subseteq J^{\pm}\big(\mathrm{supp}(u_0) \cup \mathrm{supp}(\dot{u_0})\big) \cup J^{\pm}(\mathrm{supp}(v)).
\end{gather*}
The section $u_{\pm}$ also depends continuously on $v$, $u_0$ and $\dot{u}_0$.
\item[$(iv)$] \textbf{Wave Front Sets.} 
\begin{align*}
	\mathrm{WF}(G^{\pm}) &= \{ (x,x;\xi,-\xi)\in \dot{T}^{*}(M \times M) \, \, \mathrm{with} \ (x;\xi) \in \dot{T}^{*}M  \} \\ &\quad \cup \{ (x_1,x_2;\xi_1,\xi_2) \in \dot{T}^{*}(M\times M) \, \mathrm{with} \, (x_1,x_2;\xi_1,-\xi_2) \in \mathrm{BiCh}^{g}_{\gamma} \}, \\
	\mathrm{WF}(G_M) &= \{ (x_1,x_2;\xi_1,\xi_2)\in \dot{T}^{*}(M\times M)\, \, \mathrm{with} \,\,  (x_1,x_2;\xi_1,-\xi_2) \in \mathrm{BiCh}^{g}_{\gamma} \}\ ;
\end{align*}
where $ \mathrm{BiCh}^{g}_{\gamma}$ is the bicharacteristic strip of the lightlike geodesic $\gamma$, \textit{i.e.} the set of points $(x_1,x_2;\xi_1,\xi_2)$ such that there is an interval $[0,\Lambda] \subset \mathbb{R}$ for which $(x_1,\xi_1)=(\gamma(0), g^{\flat}\dot{\gamma}(0))$ and $(x_2,-\xi_2)=(\gamma(\Lambda), g^{\flat}\dot{\gamma}(\Lambda))$.
\item[$(v)$] \textbf{Propagation of Singularities.} Given $u \in \Gamma^{-\infty}_{\pm}(M \leftarrow E)$, $(x,\xi) \in \mathrm{WF}(G^{\pm}(u))$ if either $(x,\xi) \in \mathrm{WF}(u)$ or there is a lightlike geodesic $\gamma$ and some $(y;\eta) \in \mathrm{WF}(u)$ such that $(x,y;\xi,-\eta) \in \mathrm{BiCh}^{g}_{\gamma}$. Similarly, $(x,\xi) \in \mathrm{WF}(G(u))$ if there is a lightlike geodesic $\gamma$ and some $(y,\eta) \in \mathrm{\mathrm{WF}}(u)$ such that $(x,y;\xi,-\eta) \in \mathrm{BiCh}^{g}_{\gamma}$.
\end{itemize}
\end{theorem}

We recall that in the above theorem the notation $\Gamma^{-\infty}(M\leftarrow E)$ denotes distributional sections of the vector bundle $E$, \textit{i.e.} continuous linear mappings $\Gamma^{\infty}_c(M\leftarrow E) \rightarrow \mathbb{C} $, where the first space is endowed with the usual limit Fréchet topology. We also recall that the wave front set at $x\in M$ of a distributional section ${u}$ of a vector bundle $E$ of rank $k$ is calculated as follows: fix a trivialization $(U_{\alpha},t_{\alpha})$ on $E$, ${u}$ is then locally represented by $k$ distributions ${u}^i\in \mathcal{D}'(U_{\alpha})$, each of which will have its own wave front set. 
Then we set
\begin{equation}\label{eq_1_def_WF_sections}
\mathrm{WF}({u})\doteq \bigcup_{i=1}^k \mathrm{WF}\big({u}^i\big).
\end{equation}

We can view the second Euler-Lagrange derivative (\textit{c.f.} Definition \ref{def_1_euler_der}) of a microlocal generalized Lagrangian $\mathcal{L}$ as a mapping
\begin{equation}\label{eq_1_2-th_derivative_gen_Lag}
	\delta^{(1)} E(\mathcal{L})_{\varphi}[0]: \Gamma^{\infty}_c(M\leftarrow \varphi^{*}VB) \rightarrow  \Gamma^{\infty}_c\big(M\leftarrow \varphi^{*}VB' \otimes \Lambda_m(M)\big)\ .
\end{equation}
Equivalently, we can say that the \textit{linearized field equations} around $\varphi$ induce a differential operator of second order. If we fix an auxiliary metric $h$ on the standard fiber of $B$ and use the Lorentzian metric $g$ of $M$ together with its Hodge isomorphism $*_g$, we define 
\begin{equation}\label{eq_1_diffop_1}
	D_{\varphi} \doteq (\varphi^*h)^{\sharp} \circ (\mathrm{id}_{\varphi^{*}VB' } \otimes *_g)\circ  \delta^{(1)} E(\mathcal{L})_{\varphi}[0]: \Gamma^{\infty}(M\leftarrow \varphi^{*}VB ) \rightarrow \Gamma^{\infty}(M\leftarrow \varphi^{*}VB)\ .
\end{equation}

\begin{remark}
Notice that if $D_{\varphi}$ is a differential operator induced by the linearized equations of some microlocal generalized Lagrangian $\mathcal{L}$ as in \eqref{eq_1_diffop_1}, then its principal symbol is independent form the section $\varphi$ chosen.

Indeed, by \eqref{eq_1_gamma_loc_change},
\begin{align*}
	\delta^{(1)} E(\mathcal{L})_{\psi}[0](\vec{X}_1,\vec{X_2}) & = \delta^{(1)} E(\mathcal{L})_{\varphi}[0]\left(d^1u_{\varphi \psi}[u_{\psi}(\varphi)](\vec{X}_1),d^1u_{\varphi \psi}[u_{\psi}(\varphi)](\vec{X_2})\right) \\
	&\quad + E(\mathcal{L})_{\varphi}[0]\left(d^2u_{\varphi \psi}[u_{\psi}(\varphi)](\vec{X}_1,\vec{X_2})\right)\ .
\end{align*}
While the second piece modifies the expression of the differential operator, it does not alter its principal symbol since the local form of $d^2u_{\varphi \psi}[u_{\psi}(\varphi)](\vec{X}_1,\vec{X_2})$, due to the ultralocal nature of the chart mapping, does not yield extra derivatives. We therefore conclude that if we use a generalized Lagrangian $\mathcal{L}$ whose linearized equations differential operator, $D_\varphi$, is normally hyperbolic for some $\varphi_0 \in \mathcal{U}$, then it is normally hyperbolic (with the same principal symbol) for all $\varphi\in \mathcal{U}_{\varphi_0}$.
\end{remark}

Let us give a more specific example on how to calculate the principal symbol from a microlocal generalized Lagrangian. Recalling formula \eqref{eq_1_derivative_lag_wavemaps} with $\lambda:J^1(M\times N) \to \Lambda^m(M)$, we have
\begin{align*}
	d^2\mathcal{L}_{f,\lambda,\varphi}[0](\vec{X}_1,\vec{X}_2)&=\int_{M}f(x)\bigg\{ \frac{\partial^2 \lambda}{\partial y^i \partial y^j} \vec{X}^i_1 \vec{X}^j_2 + \frac{\partial^2 \lambda}{\partial y^i_{\mu} \partial y^j} d_{\mu}\big(\vec{X}^i_1\big) \vec{X}^j_2\\ 
	& \quad+ \frac{\partial^2 \lambda}{\partial y^i_{\mu} \partial y^j} \vec{X}^i_1 d_{\mu}\big(\vec{X}^j_2\big)+ \frac{\partial^2 \lambda}{\partial y^i_{\mu} \partial y^j_{\nu}} d_{\mu}\big(\vec{X}^i_1 \big) d_{\nu}\big(\vec{X}^j_2\big)  \bigg\}(x){d}\mu_g(x)\ ,
\end{align*}
where $d_{\mu}$ is the horizontal differential on jet bundles. The key ingredient for the principal symbol is the quantity $m^{\mu \nu}_{ij}\doteq \frac{\partial^2 \lambda}{\partial y^i_{\mu} \partial y^j_{\nu}}$. Applying the transformations to get the differential operator of linearized field equations, as in \eqref{eq_1_diffop_1}, to the above quantity yields the principal symbol
\begin{equation}\label{eq_1_princ_symb}
		\sigma_2(D_{\varphi})=h^{ij}m^{\mu \nu}_{jk} \otimes \partial_{\mu}\vee \partial_{\nu} \otimes e_i\otimes e^j\ .
\end{equation}
In case this quantity satisfies the condition of \eqref{eq_1_normal_hyp} we can conclude that the operator is normally hyperbolic. There are also other notions of hyperbolicity, for instance see \cite{christodoulou}, where the hyperbolicity condition is strictly weaker than the one employed here.

From now on we shall assume that our microlocal Lagrangian produces always normally hyperbolic linearized equations\footnote{We shall show in Section \ref{section_wave_maps} that a physically relevant example is the Lagrangian functional of wave maps}. We can invoke the results of Theorem \ref{thm_1_properties_of_Green_functions}: to each $\varphi$ in the domain of $\mathcal{L}$ we associate operators
\begin{equation}\label{eq_1_green_1}
	G_{\varphi}^{\pm} : \Gamma^{\infty}_c(M\leftarrow \varphi^{*}VB) \rightarrow \Gamma^{\infty}(M\leftarrow \varphi^{*}VB)\ .
\end{equation}
satisfying properties $(i) - (v)$ in the above theorem. Given $\vec X \in \Gamma^{\infty}_c(M\leftarrow \varphi^*VB)$ we can view $G^{\pm}(\vec X)$ as a mapping
$$
	\mathcal{U}\ni \varphi \mapsto G^{\pm}_{\varphi}(\vec X) \in \Gamma^{\infty}(M\leftarrow \varphi^{*}VB)\ ,
$$
where the target space is endowed with the Fréchet topology. Then we can show the following result:

\begin{lemma}\label{lemma_1_deriv_green}
Let $\gamma:\mathbb{R}\to \mathcal{U}\subset \Gamma^{\infty}(M\leftarrow B)$ be a smooth curve, then for each fixed $\vec{X}\in \Gamma_c^{\infty}(M \leftarrow \varphi^{*}VB)$ the mapping $\mathbb R \ni t\mapsto G^{\pm}_{\gamma(t)}(\vec{X})\in \Gamma^{\infty}(M\leftarrow \varphi^{*}VB)$ is Bastiani smooth. In particular, we have
\begin{equation}\label{eq_1_der_G+-}
	dG^{\pm }_{\varphi}(\vec{X})=\lim_{t\rightarrow 0}\frac{1}{t}\left( G^{\pm}_{u_{\varphi}^{-1}(t\vec{X})}-G^{\pm}_{\varphi} \right) = -G^{\pm}_{\varphi} \circ D_{\varphi}^{(1)}(\vec{X})\circ G^{\pm}_{\varphi}\ ,
\end{equation}
where $\mathcal{U}\ni\varphi\mapsto D_{\varphi}(\vec{X})\in \Gamma^{\infty}(M\leftarrow \varphi^{*}VB)$ is the mapping induced by \eqref{eq_1_diffop_1}.
\end{lemma}
\begin{proof}
We just show the claim for the retarded propagator since for the advanced one the result follows in complete analogy. We have to evaluate
$$
	\lim_{t\to 0} \frac{1}{t}\Big({G}^+_{\gamma(t)}(\vec{X})
 -{G}_{\gamma(0)}^+(\vec{X})\Big)\ .
$$
The mapping $D(\vec{X}) \circ \gamma :  \mathbb R \ni t \mapsto D_{\gamma(t)}(\vec{X}) \in \Gamma^{\infty}(M\leftarrow \varphi^{*}VB)$ is smooth, therefore 
$$
\begin{aligned}
    \lim_{t \to 0}\frac{1}{t}D_{\gamma(t)} \Big({G}^+_{\gamma(t)}-{G}_{\gamma(0)}^+\Big)(\vec{X}) & = \lim_{t\to 0} \frac{1}{t}\Big(D_{\gamma(t)}{G}^+_{\gamma(t)}-D_{\gamma(t)}{G}_{\gamma(0)}^+\Big)(\vec{X})\\
	&=  \lim_{t\to 0} \frac{1}{t}\Big(\mathrm{id}_{\Gamma^{\infty}(M\leftarrow \gamma(0)^{*}VB)}-D_{\gamma(t)}{G}_{\gamma(0)}^+\Big) (\vec{X})\\ 
	&=\lim_{t\to 0} \frac{1}{t}\Big(D_{\gamma(0)}{G}^+_{\gamma(0)}-D_{\gamma(t)}{G}_{\gamma(0)}^+\Big)(\vec{X})\\
	&=\lim_{t\to 0} \frac{1}{t}\Big(D_{\gamma(0)}-D_{\gamma(t)}\Big)\big({G}_{\gamma(0)}^+(\vec{X})\big)\\
	&=- D_{\gamma(0)}^{(1)}(\dot{\gamma}(0))\circ G^{+}_{\gamma(0)}\big(\vec{X}\big)\ .
\end{aligned}
$$
Since $\gamma$ is a smooth curve in $\Gamma^{\infty}(M\leftarrow B)$, given any interval $[-\epsilon,\epsilon]$ with $\epsilon>0$, there is a compact subset $K_{\epsilon}$ of $M$ for which $\gamma(t)(x)$ is constant in $t$ on $M\backslash K_{\epsilon}$, then differential operator $D_{\gamma(0)}^{(1)}\neq 0$ only inside $K_{\epsilon}$, therefore the quantity $ \big(D_{\gamma(0)}^{(1)}(\dot{\gamma}(0)) \circ G^{\pm}_{\gamma(0)}\big)(\vec{X}) $ has compact support for any $\vec{X}\in \Gamma^{\infty}(M\leftarrow \gamma(0)^{*}VB)$. Finally, using continuity of $G^+_{\varphi}$ and $D_{\varphi}$, we can write
$$
	\lim_{t\to 0} \frac{1}{t}\Big({G}^+_{\gamma(t)}-{G}_{\gamma(0)}^+\Big)=- G^{+}_{\gamma(0)} \circ D_{\gamma(0)}^{(1)}(\dot{\gamma}(0))\circ G^{+}_{\gamma(0)}\ .
$$
Iterating the above expression one can show that all iterated derivatives of ${G}_{\varphi}^+$ exists and are continuous, thus showing smoothness.
\end{proof}

\begin{remark}\label{remark_1_bastiani_smoothness}
    Notice that Lemma \ref{lemma_1_deriv_green} establishes that the operators $G^{\pm}_{\cdot}(\Vec{X}): \Gamma^{\infty}(M\leftarrow B) \to \Gamma^{\infty}(M\leftarrow \varphi^{*}VB) $ are conveniently smooth, however since the target space is Fréchet and the fact that both the convenient and Bastiani smooth structure of $\Gamma^{\infty}(M\leftarrow B) $ possesses the same smooth curves (\textit{c.f.} \cite[Remark 42.2]{kriegl1997convenient}) implies that $G^{\pm}_{\cdot}(\Vec{X})$ is Bastiani smooth as well. 
\end{remark}

Similarly, for the causal propagator we find 
\begin{align}\label{eq_1_der_G}
	dG_{\varphi}(\vec{X})\doteq \lim_{t\rightarrow 0}\frac{1}{t}\left( G_{u_{\varphi}^{-1}(t\vec{X})}-G_{\varphi} \right) = -G_{\varphi} \circ D_{\varphi}^{(1)}(\vec{X})\circ G^{+}_{\varphi}-G^{-}_{\varphi} \circ D_{\varphi}^{(1)}(\vec{X})\circ G_{\varphi}\ .
\end{align}
Given the Green's functions $G_{\varphi}^{\pm}$, set 
\begin{equation}\label{eq_1_Gvar_+-}
	\mathcal{G}^{\pm}_{\varphi} \doteq G^{\pm}_{\varphi} \circ (\varphi^{*}h)^{\sharp} \circ (id_{\left(\varphi^{*}VB\right)'} \otimes *_g) : \Gamma^{\infty}_c(M\leftarrow \left(\varphi^{*}VB\right)' \otimes  \Lambda_m(M)) \rightarrow \Gamma^{\infty}(M\leftarrow \varphi^{*}VB)\ ,
\end{equation}
\begin{equation}\label{eq_1_Gvar}
	\mathcal{G}_{\varphi} \doteq G_{\varphi} \circ (\varphi^{*}h)^{\sharp} \circ (id_{\left(\varphi^{*}VB\right)'} \otimes *_g) : \Gamma^{\infty}_c(M\leftarrow \left(\varphi^{*}VB\right)' \otimes  \Lambda_m(M)) \rightarrow \Gamma^{\infty}(M\leftarrow \varphi^{*}VB)\ .
\end{equation}

\begin{remark}
Note how, up to this point, we used some auxiliary metric $h$ in \eqref{eq_1_diffop_1} in order to have a proper differential operator for the subsequent steps. As a consequence the resulting operator $D_{\varphi}(h)$ does depend on the metric chosen and so do its retarded and advanced Green's operators $G_{\varphi}^{\pm}(h) $. What about their counterparts $\mathcal{G}_{\varphi}^{\pm}(h) $?\\

From the definition of Green's operators we have 
$$
	\left\lbrace \hspace{-0.4cm} \begin{array}{ll}
		&D_{\varphi}(h) \circ G_{\varphi}^{\pm}(h) = \mathrm{id}_{\Gamma^{\infty}_c(M\leftarrow \varphi^{*}VB)}\ ,\\
		& G_{\varphi}^{\pm}(h) \circ \left. D_{\varphi}(h) \right\vert_{\Gamma^{\infty}_c(M\leftarrow \varphi^{*}VB)}=\mathrm{id}_{\Gamma^{\infty}_c(M\leftarrow \varphi^{*}VB)}\ .
	\end{array}\ \right. 
$$
The latter is equivalent to 
$$
	\left\lbrace \hspace{-0.4cm} \begin{array}{ll}
		& \delta^{(1)}E(\mathcal{L})_{\varphi}[0]  \circ \mathcal{G}_{\varphi}^{\pm}(h)  = \mathrm{id}_{\Gamma^{\infty}_c(M\leftarrow \varphi^{*}VB' \otimes \Lambda_m(M))}\ ,\\
		& \mathcal{G}_{\varphi}^{\pm}(h) \circ  \left. \delta^{(1)}E(\mathcal{L})_{\varphi} )[0] \right\vert_{\Gamma^{\infty}_c(M\leftarrow \varphi^{*}VB)}=\mathrm{id}_{\Gamma^{\infty}_c(M\leftarrow \varphi^{*}VB)}\ .
	\end{array}\  \right. 
$$
Notice that the family of propagators $\lbrace \mathcal{G}^{\pm}_{\varphi}(h) \rbrace$, together with the differential operator $\delta^{(1)}E(\mathcal{L})_{\varphi}[0]$ of the linearized equations at $\varphi$, defines a family of Green-hyperbolic type operators (see \cite[Definition 3.2]{bar}). Then \cite[Theorem 3.8]{bar} ensures uniqueness for the advanced and retarded propagators, which in turn results in the independence of the Riemannian metric $h$ used before.
\end{remark}


\begin{lemma}\label{lemma_1_duality_green}
Let $g$ a Lorentzian metric on $M$ and $D: \Gamma^{\infty}(M\leftarrow E) \rightarrow \Gamma^{\infty}(M\leftarrow E)$ a linear partial differential operator. Then $D$ is formally self adjoint with respect to the pairing $\langle \ , \ \rangle :  \Gamma^{\infty}_c(M\leftarrow E) \otimes  \Gamma^{\infty}_c(M\leftarrow E)\to \mathbb R$ given by
$$ 
	\left\langle s,\vec{t}\right.  \left. \right\rangle = \int_M (id_{E'} \otimes *_g \circ h^{\flat} (\vec{t} \space\ )) s = \int_M h^{\flat} (\vec{t}\space\ ) s \space\ d\mu_g 
$$
if and only if its integral kernel ${D}(x,y)$ is symmetric. Moreover if $D$ is normally hyperbolic, $G_M^{+}$ and $G_M^{-}$ are each the adjoint of the other in the common domain.
\end{lemma}
\begin{proof}
The equivalent condition follows from
$$
\begin{aligned}
	\left\langle Ds,\vec{t}\space\ \right\rangle &=\int_M h^{\flat} (\vec{t})(x) Ds(x) \space\ d\mu_g(x)= \int_M h^{\flat} (\vec{t})(x)  h^{\sharp} \circ (id_{E'} \otimes *_g) \mathcal{D}(x,s)  \space\ d\mu_g(x) \\ &= \int_{M^2}  \mathcal{D}(x,y)\vec{t}(x)s(y) d\mu_g(x) d\mu_g(y),
\end{aligned} 
$$
where $\mathcal{D}$ is the integral kernel of $D$. If $D$ is self adjoint, then 
$$
	\left\langle s, G_M^{-} \vec{t} \space\ \right\rangle =\left\langle D G_M^{+} s, G_M^{-} \vec{t}\space\ \right\rangle =\left\langle  G_M^{+}s,D G_M^{-}\vec{t}\space\ \right\rangle  =\left\langle G_M^{+}s,\vec{t} \space\ \right\rangle
$$
whence the desired adjoint properties of $G_M^{+}$ and $G_M^{-}$.
\end{proof}

For future convenience, we calculate the functional derivatives of $\mathcal{G}^{\pm}_{\varphi}$ and $\mathcal{G}_{\varphi}$, which are clearly smooth by combining Lemma \ref{lemma_1_deriv_green} with \eqref{eq_1_Gvar_+-} and \eqref{eq_1_Gvar}, whence 
\begin{align}\label{eq_1_der_Gvar+-}
\begin{aligned}
	d^{k}\mathcal{G}_{\varphi}^{\pm} (\vec{X}_1,\ldots, \vec{X}_k)&= \sum_{l=1}^{k}(-1)^l \sum_{\substack{(I_1,\ldots,I_l) \\ \in \mathcal{P}(1,\ldots, k)} } \bigg( \bigcirc_{i=1}^{l} \mathcal{G}_{\varphi}^{\pm} \circ \delta^{(|I_{\sigma(i)}| +1)}E(L)_{\varphi}[0]\big( \vec{X}_{I_{i}}\big) \bigg) \circ \mathcal{G}_{\varphi}^{\pm},
\end{aligned}
\end{align}
\begin{equation}\label{eq_1_der_Gvar}
\begin{aligned}
	d^{k}\mathcal{G}_{\varphi} (\vec{X}_1,\ldots, \vec{X}_k) =  \sum_{l=1}^{k}(-1)^l & \sum_{\substack{(I_1,\ldots,I_l) \in \mathcal{P}(1,\ldots, k)} } \sum_{m=0}^{l}\bigg( \bigcirc_{i=1}^{m} \mathcal{G}^-_{\varphi}\circ \delta^{(|I_i| +1)}E(L)\big(\vec{X}_{I_i}\big) \bigg) \\
 &\circ \mathcal{G}_{\varphi}\circ \bigg( \bigcirc_{i=m+1}^{l} \delta^{(|I_i| +1)}E(L)_{\varphi}\big(\vec{X}_{I_i}\big)\circ \mathcal{G}^+_{\varphi}  \bigg),
\end{aligned}
\end{equation}
where $(I_1,\ldots,I_l)$ is partition of the set $\lbrace 1,\ldots,k \rbrace $, and $\vec{X}_I=\otimes_{i\in I}\vec{X}_i$. We stress that in \eqref{eq_1_der_Gvar} the pattern of the compositions of propagators and derivatives of $E(L)$ is as follows: first the $\mathcal{G}_{\varphi}^-$'s, then a single $\mathcal{G}$ and at the end some $\mathcal{G}_{\varphi}^+$'s each intertwined by derivatives of $E(\mathcal{L})$. These will be key to some later proofs. We are now in a position to introduce the Peierls bracket:

\begin{definition}\label{def_1_Peierls}
Let $\mathcal{U}\subset\Gamma^{\infty}(M\leftarrow B)$ be $CO$-open, and $F$, $H \in \mathcal{F}_{\mu loc}(B,\mathcal{U})$. Fix a generalized microlocal Lagrangian $\mathcal{L}$ whose linearized equations induce a normally hyperbolic operator. The \textit{retarded and advanced products} $\textsf{R}_{\mathcal{L}}(F,H)$, $\textsf{A}_{\mathcal{L}}(F,H)$ are functionals defined by 
\begin{equation}
	\textsf{R}_{\mathcal{L}}(F,H)(\varphi) \doteq \left\langle dF_{\varphi}[0], \mathcal{G}_{\varphi}^{+} dH_{\varphi}[0]\right\rangle,
\end{equation}
\begin{equation}
	\textsf{A}_{\mathcal{L}}(F,H)(\varphi) \doteq \left\langle dF_{\varphi}[0], \mathcal{G}_{\varphi}^{-} dH_{\varphi}[0]\right\rangle,
\end{equation}
while the \textit{Peierls bracket} of $F$ and $H$ is 
\begin{align}\label{eq_1_Peierls}
\left\lbrace F, H \right\rbrace_{\mathcal{L}} \doteq \textsf{R}_{\mathcal{L}}(F,H)-\textsf{A}_{\mathcal{L}}(F,H).
\end{align}
\end{definition}
We recall that for a microlocal functional $F$, by \eqref{eq_1_kernel_notation}
$$
	dF_{\varphi}[0](\vec{X})=\int_M f^{(1)}_{\varphi}[0]_i(x)X^i(x)d\mu_g(x),
$$
therefore we can write $\left\lbrace F, H \right\rbrace_{\mathcal{L}}(\varphi)$ as
\begin{equation}\label{eq_1_Peierls_kernel}
	\int_{M^2} f^{(1)}_{\varphi}[0]_i(x) \mathcal{G}_{\varphi}^{ij}(x,y) h^{(1)}_{\varphi }[0]_j(y) d\mu_g(x,y)
\end{equation}
where repeated indices as usual follows the Einstein notation. This implies clearly that Definition \ref{def_1_Peierls} is well posed. Moreover as a consequence of Lemma \ref{lemma_1_duality_green} we see that the Peierls bracket of $F$ and $H$ can also equivalently viewed as $\textsf{R}_{\mathcal{L}}(F,H)-\textsf{R}_{\mathcal{L}}(H,F)=\textsf{A}_{\mathcal{L}}(H,F)-\textsf{A}_{\mathcal{L}}(F,H)$. 

We begin our analysis of the Peierls bracket by listing the support properties of the functionals defined in Definition \ref{def_1_Peierls}. 

\begin{proposition}\label{prop_1_Peierls_1}
Let $\mathcal{U}$, $F$, $H$ be as in the above definition, then the retarded and advanced products and the Peierls bracket are Bastiani smooth with the following support properties:
\begin{align}
	\mathrm{supp} \left( \textsf{R}_{\mathcal{L}}(F,H) \right) \subset & J^+(\mathrm{supp}(F)) \cap  J^-(\mathrm{supp}(H)),\\
	\mathrm{supp} \left( \textsf{A}_{\mathcal{L}}(F,H) \right) \subset & J^+(\mathrm{supp}(H)) \cap  J^-(\mathrm{supp}(F)),	 
\end{align}
which combined yields
\begin{equation}\label{eq_1_supp_peierls}
	\mathrm{supp} \left( \left\lbrace F,G \right\rbrace_{\mathcal{L}} \right) \subset   \left( J^+(\mathrm{supp}(F)) \cup J^-(\mathrm{supp}(F)) \right) \cap \left( J^+(\mathrm{supp}(H)) \cup J^-(\mathrm{supp}(H)) \right).
\end{equation}
\end{proposition}

\begin{proof}
By definition the support properties of $\mathcal{G}_{\varphi}^{\pm}$ and $G_{\varphi}^{\pm}$ are analogue, so combining these properties with  $\textsf{R}_{\mathcal{L}}(F,H)= \frac{1}{2} \textsf{R}_{\mathcal{L}}(F,H) + \frac{1}{2}  \textsf{A}_{\mathcal{L}}(H,F)$ yields the desired result. We now turn to the smoothness. We calculate the $k$-th derivative of $ \textsf{R}_{\mathcal{L}}$. By the chain rule, taking $\mathcal{P}(1,\ldots,k)$ the set of permutations of $\lbrace 1,\ldots , k \rbrace$, we can write 
\begin{align}\label{eq_1_der_retarded}
\begin{aligned}
	 &d^k\textsf{R}_{\mathcal{L}} (F,H)_{\varphi}[0](\vec{X}_{1},\ldots, \vec{X}_k)\\
	 &\quad = \sum_{(J_1,J_2,J_3)\subset P_k} \left\langle F^{(|J_1|+1)}_{\varphi}[0](\otimes_{j_1 \in J_1}\vec{X}_{j_1}) \right., 
	 \left.d^{(|J_2|)}\mathcal{G}_{\varphi}^{+}(\otimes_{j_2 \in J_2}\vec{X}_{j_2}) H^{(|J_3|+1)}_{\varphi}[0] (\otimes_{j_3 \in J_3}\vec{X}_{j_3}) \right\rangle,
\end{aligned}
\end{align}
and similarly
\begin{align}\label{eq_1_der_advanced}
\begin{aligned}
	&d^k\textsf{A}_{\mathcal{L}} (F,H)_{\varphi}(\vec{X}_{1},\ldots, \vec{X}_k)\\ 
	&\quad = \sum_{(J_1,J_2,J_3)\subset P_k} \left\langle F^{(|J_1|+1)}[\varphi](\otimes_{j_1 \in J_1}\vec{X}_{j_1})  \right., \left. d^{(|J_2|)}\mathcal{G}_{\varphi}^{-}(\otimes_{j_2 \in J_2}\vec{X}_{j_2}) H^{(|J_3|+1)}_{\varphi}[0] (\otimes_{j_3 \in J_3}\vec{X}_{j_3}) \right\rangle.
\end{aligned}
\end{align}
To see that the pairing in the derivatives of the  advanced and retarded products are well defined, we use the kernel notation \eqref{eq_1_kernel_notation}, therefore we write the integral kernel of $\textsf{R}_{\mathcal{L}}(F,H)$, which by a little abuse of notation we call $\textsf{R}_{\mathcal{L}}(F,H)(x,y)$ for $x$,$y \in M$. It is
$$
	\textsf{R}_{\mathcal{L}}(F,H)(x,y)= f^{(1)}_{\varphi}[0]_i(x) \left(\mathcal{G}^{+}_{\varphi}\right)^{ij}(x,y) h^{(1)}_{\varphi }[0]_j(y).
$$
Using this notation, we can write the integral kernel $d^k\textsf{R}_{\mathcal{L}} (F,H)_{\varphi}[0](\vec{X}_{1},\ldots, \vec{X}_k)(x,y)$ in \eqref{eq_1_der_retarded} as a sum of terms with two possible contributions:\\
$1)$ [$J_2=\emptyset$]
$$
	 f^{(p+1)}_{\varphi}[0]_i(x,\vec{X}_1, \ldots, \vec{X}_p) (\mathcal{G}^{+}_{\varphi})^{ij}(x,y) h^{(q+1)}_{\varphi}[0]_j(y,\vec{X}_{q+1} \ldots \vec{X}_{p+q}),
$$
where $p+q=k$. Due to smoothness of the functionals $F, \ H$ and, by Remark \ref{remark_1_bastiani_smoothness}, of $\mathcal{G}_{\varphi}^{\pm}$, this is well defined and is Bastiani smooth.\\
$2)$ [$J_2 \neq \emptyset$] 
$$
\begin{aligned}
	&\quad\int_{M^{k-2}} f^{(|J_1|+1)}_{\varphi}[0]_i\big(x,\vec{X}_{J_1}\big) \left(\mathcal{G}^{+}_{\varphi}\right)^{ij_1}(x,z_1)  \delta^{(|I_1|+1)}  E(\mathcal{L})_{\varphi}[0]_{j_1 j_2} \big(z_1,z_2,\vec{X}_{I_1}\big) \\
	&\qquad\times\left(\mathcal{G}^{+}_{\varphi}\right)^{j_2j_3}(z_2,z_3) \delta^{(|I_2|+1)} E( \mathcal{L})_{\varphi}[0]_{j_3 j_4}\big(z_3,z_4,\vec{X}_{I_2}\big) \cdots \delta^{(|I_l|+2)} E(\mathcal{L})_{\varphi}[0]_{j_{2l-1} j_{2l}}\big(z_{2l-1},z_{2l},\vec{X}_{I_l} \big) \\
	&\qquad\times\left(\mathcal{G}^{+}_{\varphi}\right)^{j_{2l}j}(z_{2l},y)   h^{(|J_3|+1)}_{\varphi}[0]_j(y,\vec{X}_{p+k_1+\ldots+ k_l+1}, \ldots, \vec{X}_{J_3})d\mu_g(z_1,\ldots ,z_{2l}) ,
\end{aligned}
$$
where $I_1\cup\ldots\cup I_l=J_2$. Then again, due to the Bastiani smoothness of $F, \ H$, $\mathcal{G}_{\varphi}^{\pm}$ and $\mathcal{L}$, we conclude that this piece exists and is Bastiani smooth. Repeating the above calculations for $\textsf{A}_{\mathcal{L}} $ amounts to substituting  each $+$ with $-$, resulting in Bastiani smoothness for the advanced product. Finally since $\left\lbrace F, H \right\rbrace_{\mathcal{L}} = \textsf{R}_{\mathcal{L}}(F,H)-\textsf{A}_{\mathcal{L}}(F,H)$ we conclude that it is smooth as well.
\end{proof}

We have seen that the Peierls bracket is well defined for microlocal functionals, we stress however that the image under the Peierls bracket of microlocal functionals fails to be microlocal, so it is necessary to broaden the domain of this bracket. An idea is to use the full potential of microlocal analysis, and use wave front sets to define pairings. First though we make explicit the ``good" subsets of $T^{*}M$, that is, those subsets in which the wavefront can be localized.

\begin{definition}\label{def_1_WF_mucaus}
Let $(M,g)$ be a Lorentzian spacetime, define $\Upsilon_k(g) \subset T^{*}M^k$ as follows:
\begin{align}\label{eq_1_WF_mu_caus}
\begin{aligned}
	\Upsilon_k(g)\doteq \Big\lbrace (x_1,\ldots,x_k,\xi_1,\ldots,\xi_k)& \in T^{*}M^k \backslash 0 :\\
	&(x_1,\ldots,x_k,\xi_1,\ldots,\xi_k) \notin \overline{V}^+_{k}(x_1,\ldots,x_k) \cup \overline{V}^-_{k}(x_1,\ldots,x_k) \Big\rbrace
\end{aligned}
\end{align}
where 
$$
	\overline{V}^{\pm}_k(x_1,\ldots,x_k) = \prod_{j=1}^k \overline{V}^{\pm}(x_j). 
$$
If $\mathcal{U}\subset \Gamma^{\infty}(M\leftarrow B)$ is open we say that a functional $F:\mathcal{U} \rightarrow \mathbb{R}$ with compact support is \textit{microcausal} with respect to the Lorentz metric $g$ in $\varphi$ if $\mathrm{WF}(d^kF_{\varphi}[0]) \cap \Upsilon_k(g)=\emptyset$ for all $k \in \mathbb{N}$. We say that $F$ is microcausal with respect to $g$ in $\mathcal{U}$ if $F$ is microcausal for all $\varphi \in \mathcal{U}$. We denote the set of microcausal functionals in $\mathcal{U}$ by $\mathcal{F}_{\mu c}(B,\mathcal{U},g)$.
\end{definition}
One can show by induction, using \eqref{eq_1_cov_der}, that testing microcausality by calculating $\mathrm{WF}(\nabla^kF_{\varphi}[0])$ or $\mathrm{WF}(d^kF_{\varphi}[0])$ is equivalent. The case $k=1$ is trivial, while the case with arbitrary $k$ follows from:

\begin{lemma}\label{lemma_1_cov_WF_equivalence}
    Let $F$ be a functional and $\Phi$ a symmetric linear connection with covariant derivative $\nabla$. Then $F $ is microcausal if and only if $\nabla^kF_{\varphi}[0]$ does not have wave front set contained in $\Upsilon_k(g)$ for all $k \in \mathbb{N}$.
\end{lemma}
\begin{proof}
We show this by induction on the derivative order. For $k=1$ we have $\nabla F_{\varphi}[0] = dF_{\varphi}[0] $. More generally, suppose $\mathrm{WF}\left(\nabla^{k-1}F_{\varphi}[0]\right) \cap \Upsilon_{k-1}(g)=\emptyset$, we claim
$$
	\mathrm{WF}\left(\nabla^{k}F_{\varphi}[0]\right) \cap \Upsilon_k(g) =\emptyset.
$$
From \eqref{eq_1_cov_der} we have 
\begin{align*}
	&\nabla^{k}F_{\varphi}[0](\vec{X}_1,\ldots, \vec{X}_k) \\ &\doteq d^kF_{\varphi}[0](\vec{X}_1,\ldots, \vec{X}_k) 
	 + \sum_{j=1}^k \frac{1}{k!} \sum_{\sigma \in \mathcal{P}(k)} \nabla^{k-1}F_{\varphi} [0](\Gamma_{\varphi} (\vec{X}_{\sigma(j)}, \vec{X}_{\sigma(k)}), \vec{X}_{\sigma(1)}, \ldots, \widehat{\vec{X}_{\sigma(j)}}, \ldots \vec{X}_{\sigma(k-1)})\ .
\end{align*}
Assume that $\nabla^{k-1}F_{\varphi}[0]$ is microcausal. Since $F$ is microcausal as well, it is sufficient to show microcausality holds for the other terms in the sum. Due to symmetry of the connection, we can simply study the wave front set of a single term such as 
\begin{equation}\label{eq_1_kernel_comp}
	\nabla^{k-1}F_{\varphi} (\Gamma_{\varphi} (\vec{X}_{j}, \vec{X}_{k}), \vec{X}_{1}, \ldots, \widehat{\vec{X}_{j}}, \ldots \vec{X}_{k-1}).
\end{equation}
The idea is to apply Theorem 8.2.14 in \cite{hormanderI}. Recall that a connection $\Gamma_{\varphi}$ can be seen as a mapping $\Gamma^{\infty}_c(M\leftarrow \varphi^*VB)\times \Gamma^{\infty}_c(M\leftarrow \varphi^*VB)\rightarrow\Gamma^{\infty}_c(M\leftarrow \varphi^*VB)$ with associated integral kernel $\Gamma[\varphi](x,y,z)$ defined by
$$
	\otimes^3 \Gamma^{\infty}_c(M\leftarrow \varphi^*VB) \rightarrow \mathbb{R}: (\vec{X},\vec{Y},\vec{Z}) \mapsto \int_{M^3} h_{kl}(\varphi(x))\Gamma[\varphi]^l_{ij}(x,y,z)\vec{X}^i(x)\vec{Y}^j(y) \vec{Z}^k(z) d\mu_g(x,y,z)
$$
where $h$ is an auxiliary Riemannian metric on the fiber of the bundle $B$ which is to be regarded as a tool for calculations\footnote{We can, without loss of generality assume that $\Gamma$ are the Christoffel symbols with respect to the metric $h$.}. Using the support properties of the connection coefficients $\Gamma_{\varphi}$ we obtain $\Gamma[\varphi]^l_{ij}(x,y,z)=\Gamma^l_{ij}(\varphi(x))\delta(x,y,z)$, where $\Gamma^l_{ij}(\varphi(x))$ are the Christoffel coefficients of the connection on the typical fiber of $B$ as in \eqref{eq_1_connection_induced}. Then
$$
	\mathrm{WF}\left(\Gamma[\varphi]\right)= \lbrace (x,y,z,\xi,\eta,\zeta) \in T^*M^3\backslash 0: x=y=z, \ \xi+\eta+\zeta=0 \rbrace.
$$
Composition of the two integral kernels in \eqref{eq_1_kernel_comp} is well defined provided $\mathrm{WF}'(\nabla^{n-1}F_{\varphi}[0])_{M}\cap \mathrm{WF}\left(\Gamma[\varphi]\right)_{M}= \emptyset$ and that the projection map $: \triangle_{3}M \rightarrow M $ is proper. The former is a consequence of $\mathrm{WF}\left(\Gamma[\varphi]\right)_{M}= \emptyset$, the latter is a trivial statement for the diagonal embedding. Then we can apply Theorem 8.2.14, and estimate
\begin{align*}
	\mathrm{WF}\left( \nabla^{k-1}F_{\varphi} \circ \Gamma_{\varphi} \right) \subset & \Big\{ (x_1,\ldots,x_k,\xi_1,\ldots,\xi_k) \in T^*M^k : \exists (y,\eta) : \  (x_j,x_k,y,\xi_j,\xi_k,-\eta) \in  \mathrm{WF}\left(\Gamma[\varphi]\right)\ ,\\ 
	& \quad (y,x_1,\ldots, \widehat{x_j},\ldots ,x_{k-1},\eta,\xi_1,\ldots,\widehat{\xi}_j,\ldots,\xi_{k-1})\in \mathrm{WF} \left(\nabla^{k-1}F_{\varphi}[0] \right) \Big\} \\
	& \bigcup
\Big\{ (x_1,\ldots,x_k,\xi_1,\ldots,\xi_k) \in T^*M^k : x_j=x_k , \space\ \xi_j = \xi_k=0\ , \\ 
	&\qquad  (y,x_1,\ldots, \widehat{x_j},\ldots ,x_{k-1},0,\xi_1,\ldots,\widehat{\xi}_j,\ldots,\xi_{k-1})\in \mathrm{WF} \left(\nabla^{k-1}F_{\varphi}[0] \right) \Big\} \\ 
	& \bigcup \Big\{ (x_1,\ldots,x_k,0,\ldots,0,\xi_j,0,\ldots,0,\xi_k) \in T^*M^k :  (x_j,x_k,y,\xi_j,\xi_k,0) \in  \mathrm{WF}\left(\Gamma_{\varphi}\right)\ , \\ 
	&\qquad (y,x_1,\ldots, \widehat{x_j},\ldots ,x_{k-1},\eta,0,\ldots,0)\in \mathrm{WF} \left(\nabla^{k-1}F_{\varphi}[0] \right) \Big\}\\
	& = \Pi_1\cup \Pi_2\cup \Pi_3.
\end{align*}
If by contradiction, we had that $\nabla^{k-1}F_{\varphi} \circ \Gamma_{\varphi}$ was not microcausal, there would be elements of its wavefront set for which all $\xi_1,\ldots,\xi_k$ are, say, future pointing. In this case those must belong to $\Pi_1$, but then $\eta$ is future-pointing as well by the form of $\mathrm{WF}(\Gamma_{\varphi})$, contradicting our initial assumption.
\end{proof}

One can also show that microcausality does not depend upon the connection chosen by computing
\begin{align*}
	\nabla^{n}F_{\varphi}[0](\vec{X}_1,\ldots, \vec{X}_n) & - \widetilde{\nabla}^{n}F_{\varphi}[0]\big(\vec{X}_1,\ldots, \vec{X}_n\big)  \\
	&=  \sum_{j=1}^n \frac{1}{n!} \sum_{\sigma \in \mathcal{P}(n)} \nabla^{n-1}F_{\varphi} [0]\Big(\Gamma_{\varphi} \big(\vec{X}_{\sigma(j)}, \vec{X}_{\sigma(n)}\big), \vec{X}_{\sigma(1)}, \ldots, \widehat{\vec{X}_{\sigma(j)}}, \ldots \vec{X}_{\sigma(n-1)}\Big) \\ 
	& \quad- \sum_{j=1}^n \frac{1}{n!} \sum_{\sigma \in \mathcal{P}(n)} \widetilde{\nabla}^{n-1}F_{\varphi} [0]\Big(\widetilde{\Gamma}_{\varphi} \big(\vec{X}_{\sigma(j)}, \vec{X}_{\sigma(n)}\big), \vec{X}_{\sigma(1)}, \ldots, \widehat{\vec{X}_{\sigma(j)}}, \ldots \vec{X}_{\sigma(n-1)}\Big);
\end{align*}
and then combining induction with Lemma \ref{lemma_1_cov_WF_equivalence} to get an empty wave front set for the terms on right hand side of the above equation. Another consequence of Lemma \ref{lemma_1_cov_WF_equivalence} is that microcausality of a functional does not depend on the ultralocal charts used to perform the derivatives. We immediately have the inclusion $\mathcal{F}_{reg}(B,\mathcal{U}) \subset \mathcal{F}_{\mu c}(B,\mathcal{U},g)$.

\begin{proposition}\label{prop_1_muloc_into_mucaus}
Let $\mathcal{U}\subset\Gamma^{\infty}(M\leftarrow B)$ be $CO$-open, then if $F\in \mathcal{F}_{\mu loc}(B,\mathcal{U})$, $\mathrm{WF}\big(F^{(k)}_{\varphi}[0]\big)$ is conormal to $ \triangle_k (M)$ \textit{i.e.} $\mathrm{WF}\left(F^{(k)}_{\varphi}[0]\right)\subset \lbrace(x,\ldots,x,\xi_1,\ldots,\xi_k): \xi_1+\ldots+\xi_k=0 \rbrace$ for all $k \geq 2$ and $\varphi \in \mathcal{U}$. Therefore $\mathcal{F}_{\mu loc}(B,\mathcal{U}) \subset \mathcal{F}_{\mu c}(B,\mathcal{U}, g)$. 
\end{proposition}

\begin{proof}
By Lemma \ref{lemma_1_cov_WF_equivalence} microcausality is an intrinsic property of functionals in $\Gamma^{\infty}(M\leftarrow B)$, therefore it suffice to verify the claim in a generic chart $\mathcal{U}_{\varphi}$. Note that by Definition \ref{def_1_func_classes}, $\mathrm{WF}(F^{(1)}_{\varphi}[0])=\emptyset$. Consider therefore $d^kF_{\varphi}[0](\vec{X}_1,\ldots,\vec{X}_k)$ with $k \geq 2$, the associated integral kernel has the form 
$$
	\int_{M^k} f^{(k)}_{\varphi}[0]_{i_1 \cdots i_k}(x_1)\delta(x_1,\ldots,x_k)\vec{X}_1^{i_1}(x_1)\cdots\vec{X}_k^{i_k}(x_k)d\mu_g(x_1,\ldots,x_k)\ .
$$
where $f^{(k)}_{\varphi}[0]_{i_1 \cdots i_k}$ is some smooth function for each indices $i_1,\ldots, i_k$. Therefore,
$$
	\mathrm{WF}(F^{(k)}_{\varphi}[0])= N^{*} \triangle_k (M)=\bigg\{(x_1,\ldots,x_k,\xi_1,\ldots,\xi_k)\in T^{*}M^k\backslash 0 : \ x_1=\cdots=x_k; \  \sum_{j=1}^k \xi_j=0\bigg\}	\ .
$$
In addition, if $(x,\ldots,x,\xi_1,\ldots,\xi_k)$ is in $\mathrm{WF}\big(F^{(k)}_{\varphi}[0]\big)$ and has, say, the first $k-1$ covectors in $\overline{V}_{k-1}^{+}(x,\ldots,x)$, then $\xi_k=-(\xi_1+\ldots+\xi_{k-1})$ and we see that $\xi_k \in \overline{V}^{-}(x)$; whence microlocality implies microcausality.
\end{proof}

\begin{theorem}\label{thm_1_mucaus_1}
Let $\mathcal{U}\subset\Gamma^{\infty}(M\leftarrow B)$ be $CO$-open and $\mathcal{L}$ a generalized microlocal Lagrangian with normally hyperbolic linearized equations. Then the Peierls bracket associated to $\mathcal{L}$ extends to $\mathcal{F}_{\mu c}(B,\mathcal{U}, g)$, has the same support property of Proposition \ref{prop_1_Peierls_1} and depends only locally on $\mathcal{L}$, that is, for all $F$, $H\in \mathcal{F}_{\mu c}(B,\mathcal{U}, g) $, $\lbrace F,H\rbrace_{\mathcal{L}}$ is unaffected by perturbations of $\mathcal{L}$ outside the right hand side of \eqref{eq_1_supp_peierls}. The same locality property holds for the retarded and advanced products.
\end{theorem}

\begin{proof}
Clearly $\lbrace F,H \rbrace_{\mathcal{L}}$ is well defined, in fact since $\mathrm{WF}(H^{(1)}_{\varphi}[0])$ is spacelike, and $\mathcal{G}_{\varphi} $, according to Theorem \ref{thm_1_properties_of_Green_functions}, propagates only lightlike singularities along lightlike geodesics, then $\mathcal{G}_{\varphi} dH_{\varphi}[0]$ must be smooth, giving a well defined pairing. As for support properties the proof can be carried on analogously to the proof of Proposition \ref{prop_1_Peierls_1}.

For the support behavior of the bracket, suppose $\mathcal{L}_1$ and $\mathcal{L}_2$ are generalized Lagrangians, such that for some fixed $\varphi \in \mathcal{U}$, $\delta^{(1)}E(\mathcal{L}_1)_{\varphi}[0]$ and $\delta^{(1)}E(\mathcal{L}_2){\varphi}[0]$ differ only in a region outside 
\begin{equation}\label{eq_1_supp_peierls_2}
\mathcal{O}\doteq \left( J^+(\mathrm{supp}(F)) \cup J^-(\mathrm{supp}(F)) \right) \cap \left( J^+(\mathrm{supp}(H)) \cup J^-(\mathrm{supp}(H)) \right).
\end{equation}
By the support properties of retarded and advanced propagators of Proposition \ref{prop_1_Peierls_1} we have 
$$
\left\langle dF_{\varphi}[0], (\mathcal{G}^+_{\varphi,\mathcal{L}_1}-\mathcal{G}^+_{\varphi,\mathcal{L}_2})dH_{\varphi}[0]\right\rangle=0,
$$
as well as
$$
\left\langle dF_{\varphi}[0], (-\mathcal{G}^-_{\varphi,\mathcal{L}_1}+\mathcal{G}^-_{\varphi,\mathcal{L}_2})dH_{\varphi}[0]\right\rangle=0.
$$
Taking the sum of the two we find 
$$
	\lbrace F,H \rbrace_{\mathcal{L}_1}-\lbrace F,H \rbrace_{\mathcal{L}_2}=0.
$$
\end{proof}

\begin{theorem}\label{thm_1_peierls_closedness}
Let $\mathcal{U} \subset \Gamma^{\infty}(M\leftarrow B)$ $CO$-open and $\mathcal{L}$ a generalized Lagrangian. If $F$, $H\in \mathcal{F}_{\mu c}(B,\mathcal{U},g) $ we have that $\lbrace F,H \rbrace_{\mathcal{L}}\in \mathcal{F}_{\mu c}(B, \mathcal{U},g) $ as well.
\end{theorem}

\begin{proof}
By Faà di Bruno's formula,
\begin{equation}\label{eq_1_k-th_derivative_Peierls}
\begin{aligned}
	&d^k\left\lbrace F,G \right\rbrace_{\mathcal{L} ,\varphi}[0](\vec{X}_{1},\ldots, \vec{X}_k)\\
	&\qquad=\sum_{(J_1,J_2,J_3)\subset P_k} \left\langle F^{(|J_1|+1)}_{\varphi }[0](\otimes_{j_1 \in J_1}\vec{X}_{j_1}) \right.,  \left.d^{(|J_2|)}\mathcal{G}_{\varphi}(\otimes_{j_2 \in J_2}\vec{X}_{j_2}) G^{(|J_3|+1)}_{\varphi}[0] (\otimes_{j_3 \in J_3}\vec{X}_{j_3}) \right\rangle.
\end{aligned}
\end{equation}
while by \eqref{eq_1_der_Gvar},
\begin{align}\label{eq_1_k-th_derivative_G_var}
\begin{aligned}
	d^{|J_2|}\mathcal{G}_{\varphi} (\vec{X}_1,\ldots, \vec{X}_k) =\sum_{l=1}^{k}(-1)^l & \sum_{\substack{(I_1,\ldots,I_l) \\ \in \mathcal{P}(J_2)} } \sum_{p=0}^{l}\bigg( \bigcirc_{i=1}^{p} \mathcal{G}^-_{\varphi}\circ \delta^{(|I_i| +1)}E(\mathcal{L})\big(\vec{X}_{I_i}\big) \bigg) \\ 
    &\circ \mathcal{G}_{\varphi}\circ \bigg( \bigcirc_{i=p+1}^{l} \delta^{(|I_i| +1)}E(\mathcal{L})_{\varphi}\big(\vec{X}_{I_i}\big)\circ \mathcal{G}^+_{\varphi}  \bigg),
\end{aligned}
\end{align}
where $\bigcirc_{i=1}^p$ stands for composition of mappings indexed by $i$ from $1$ to $p$. For the rest of the proof, we will use the integral notation we used in \eqref{eq_1_kernel_notation} and in the proof of Proposition \ref{prop_1_Peierls_1}. Recall that the mapping $\delta^{(n)}E(\mathcal{L})_{\varphi}[0]$ has associated a compactly supported integral kernel $\mathcal{L}(1)^{(n+1)}_{\varphi}[0](x,z_{1},\ldots, z_{n})$ and its wave front is in $N^{*}\triangle_{n+1}(M)$ by Proposition \ref{prop_1_muloc_into_mucaus}. Then again, we have two general cases:\\
$1)$ $J_2=\emptyset$.\\
Let $|J_1|=p$, $|J_3|=q=k-p$, the typical term has the form
\begin{equation}\label{eq_1_kernel_k-th_derivative_Paierl_1}
\begin{aligned}
	d^k\left\lbrace F,H \right\rbrace_{\varphi}[0](z_1,\ldots,z_k)=\int_{M^{2}} f^{(p+1)}_{\varphi}[0]_i(x,z_1,\ldots,z_{p}) , \mathcal{G}_{\varphi}^{ij}(x,y) h^{(q+1)}_{\varphi}[0]_j(y,z_{p+1},\ldots,z_k)d\mu_g(x,y).
\end{aligned}
\end{equation}
Suppose by contradiction that there is some $(x_1,\ldots,x_k,\xi_1,\ldots,\xi_k)\in \mathrm{WF}( \left\lbrace F,H \right\rbrace_{\varphi}[0])$ having $(\xi_1,\ldots,\xi_{k})\in \overline{V}^+_{k}(x_1,\ldots,x_k)$ (the argument works similarly for $(\xi_1,\ldots,\xi_{k})\in \overline{V}^-_{k}(x_1,\ldots,x_k)$). Using twice Theorem 8.2.14 in \cite{hormanderI} in the above pairing yields
$$
\begin{aligned}
	&\mathrm{WF}\Big(\big\lbrace F,H \big\rbrace^{(k)}_{\varphi}[0]\Big)\\ &\qquad\subseteq  \Big\lbrace (z_1,\ldots,z_k,\xi_1,\ldots,\xi_k)  :\  \exists (y,\eta)\in T^{*}M 	(x,z_1,\ldots,z_p,-\eta,\xi_1,\ldots,\xi_p) \in \mathrm{WF}(F^{(p+1)}_{\varphi}[0]_i), \\
	&\qquad\qquad (x,z_{p+1},\ldots,z_{k},\eta,\xi_{p+1},\ldots,\xi_k) \in \mathrm{WF}\big(\mathcal{G}_{\varphi}^{ij}H^{(q+1)}_{\varphi}[0]_j\big) \Big\rbrace \\ 
	 &\qquad\subset \Big\lbrace (z_1,\ldots,z_k,\xi_1,\ldots,\xi_k):  \ \exists (x,\eta),(y,\zeta)\in T^*M : (x,z_1,\ldots,z_p,-\eta,\xi_1,\ldots, \xi_p) \in \mathrm{WF}\big(F^{(p+1)}_{\varphi}[0]_i\big) \\ 
	&\qquad\qquad (x,y,\eta,-\zeta)\in \mathrm{WF}(\mathcal{G}_{\varphi}^{ij}), \ (y,z_{p+1},\ldots,z_{k},\zeta,\xi_{p+1},\ldots, \xi_k) \in \mathrm{WF}(H^{(q+1)}_{\varphi}[0]_j) \Big\rbrace.
\end{aligned}
$$
So if $(z_1,\ldots,z_k,\xi_1,\ldots,\xi_k)\in \mathrm{WF}\big(\left\lbrace F,H \right\rbrace^{(k)}_{\varphi}\big)$, then $\exists$ $(x,\eta), (y,\zeta) \in T^{*}M$ such that 
$$
	\left\lbrace \begin{array}{ll}
		(x,z_1,\ldots,z_p,-\eta,\xi_1,\ldots,\xi_p)&  \in \mathrm{WF}(F^{(p+1)}_{\varphi}[0]_i))\\
		(x,y,\eta,-\zeta)  &\in \mathrm{WF}(\mathcal{G}_{\varphi}^{ij})\\
		(y,z_{p+1},\ldots,z_k,\zeta,\xi_{p+1},\ldots,\xi_k) &\in \mathrm{WF}(H^{(q+1)}_{\varphi}[0]_j)\ .
	\end{array} \right. 
$$
By $(iv)$ Theorem \ref{thm_1_properties_of_Green_functions}, $\mathrm{WF}(\mathcal{G}_{\varphi}^{ij})$ contains pairs of lightlike covectors with opposite time orientation, so in case $\eta\in \overline{V}^{+}(x)$ (resp. $\eta\in \overline{V}^{-}(x)$), $\zeta\in \overline{V}^{+}(y)$ (resp. $\zeta\in \overline{V}^{-}(y)$) in which case $	\mathrm{WF}\big(H^{(q+1)}_{\varphi}[0]_j\big)$ (resp. $\mathrm{WF}\big(F^{(p+1)}_{\varphi}[0]_i\big)$) does violate the microcausality condition of Definition \ref{def_1_WF_mucaus}.\\
$2)$ $J_2 \neq \emptyset$. \\
Then again, call $|J_1|=p$, $|J_3|=k-q$, set also, referring to \eqref{eq_1_k-th_derivative_G_var}, $|I_j|=k_j$ for $j=1,\ldots,l$ so that $|J_2|=k_1+\cdots+k_l$. Combining \eqref{eq_1_k-th_derivative_Peierls} with \eqref{eq_1_k-th_derivative_G_var} with the integral kernel notation we get
\begin{equation}\label{eq_1_kernel_k-th_derivative_Paierl_2}
\begin{aligned}
	 \{ F,G \}^{(k)}_{\varphi}[0](z_1,\ldots,z_k)&=\int_{M^{k}} f^{(p+1)}_{\varphi}[0]_i(x,z_1,\ldots,z_p) \mathcal{G}^{- \space\ i j_1}_{\varphi}(x,x_1) d^{(k_1+2)} \mathcal{L}_{\varphi}[0]_{j_1 i_1} (x_1,y_1,z_{I_1})\\ & \quad
	 \mathcal{G}^{- \space\ i_1 j_2}_{\varphi}(y_1,x_2)   \cdots  \mathcal{G}^{- \space\ i_{m-1} j_{m}}_{\varphi}(y_{m-1},x_{m}) d^{(k_m+2)}\mathcal{L}_{\varphi}[0]_{j_{m} i_{m}} (x_{m},y_{m},z_{I_m})\\ &  \quad
	 \mathcal{G}_{\varphi}^{i_{m} j_{m+1}}(y_{m},x_{m+1}) 
	 d^{(k_{m+1}+2)} \mathcal{L}_{\varphi}[0]_{j_{m+1} i_{m+1}} (x_{m+1},y_{m+1},z_{I_{m+1}})  \\ & \quad
	 \mathcal{G}^{+ \ i_{m+1} j_{m+2}}_{\varphi}(y_{m+1},z_{m+2})  \ldots  d^{(k_l+2)} \mathcal{L}_{\varphi}[0]_{j_{l} i_{l}} (x_{l},y_{l},z_{I_l},)\mathcal{G}^{+\ i_{l} j}(y_{l} ,y) \\
	 & h^{(k-q+1)}_{\varphi}[0]_j(y,z_{q+1},\ldots,z_k)d\mu_g(x,x_1,y_1,\ldots,x_{l},y_{l},y).
\end{aligned}
\end{equation}
Using in \cite[Theorem 8.2.14]{hormanderI}, Theorem \ref{thm_1_properties_of_Green_functions} and Proposition \ref{prop_1_muloc_into_mucaus}, we can estimate the wave front set of the integral kernel of $\{ F,H \}^{(k)}_{\varphi}[0]$ as all elements $(z_1,\ldots,z_k,\xi_1,\ldots,\xi_k)\in T^{*}M^k$ for which there are $(x,\eta)$, $(x_1,\eta_1)$, $\ldots,(x_l\eta_l)$, $(y_1,\zeta_1)$, $\ldots , (y_l,\zeta_l)$ $(y,\zeta) \in T^{*}M $ having 
$$
	\left\lbrace \begin{array}{ll}
		(x,z_1,\ldots,z_p,-\eta,\xi_1,\ldots,\xi_p) & \in \mathrm{WF}\big(f^{(p+1)}_{\varphi}[0]_i\big),\\
		(x,x_{1},\eta,-\eta_{1}) &\in \mathrm{WF}(\mathcal{G}^{-\ ij_1}_{\varphi}),\\
		(x_{1},y_{1},z_{|I_1|},\eta_{1},-\zeta_{1},\xi_{I_1}) &\in \mathrm{WF}\big(d^{(k_{1}+2)}\mathcal{L}_{\varphi}[0]_{i_1j_1}\big),\\
		\vdots & \vdots\\
		(y_{m-1},x_{m},\zeta_{m-1},-\eta_{m}) &\in \mathrm{WF}(\mathcal{G}^{-\ i_{m-1}j_m}_{\varphi}),\\
		(x_{m},y_{m},z_{I_m},\eta_{m},-\zeta_{m},\xi_{I_m}) &\in \mathrm{WF}\big(d^{(k_{m}+2)}\mathcal{L}_{\varphi}[0]_{i_mj_m}\big),\\
		(y_{m},x_{m+1},\zeta_{m},-\eta_{m+1})  &\in \mathrm{WF}(\mathcal{G}_{\varphi}^{i_mj_{m+1}}),\\
		(x_{m+1},y_{m+1},z_{I_{m+1}},\eta_{m+1},-\zeta_{m+1},\xi_{I_{m+1}}) &\in \mathrm{WF}\big(d^{(k_{m+1}+2)}\mathcal{L}_{\varphi}[0]_{i_{m+1}j_{m+1}}\big),\\
		(y_{m+1},x_{m+2},\zeta_{m+1},-\eta_{m+2}) &\in \mathrm{WF}(\mathcal{G}^{+ \ i_{m+1}j_{m+2}}_{\varphi}),\\
		\vdots & \vdots\\
		(y_{l-1},x_{l},\zeta_{l-1},-\eta_{l}) &\in \mathrm{WF}(\mathcal{G}^{+\ i_{l-1}j_l}_{\varphi}),\\
		(x_{l},y_{l},z_{I_l},\eta_{l},-\zeta_{l},\xi_{I_l}) &\in \mathrm{WF}\big(d^{(k_{l}+2)}\mathcal{L}_{\varphi}[0]_{i_lj_l}\big),\\
		(y_{l},y,\zeta_{l},-\zeta) &\in \mathrm{WF}(\mathcal{G}^{+\ i_lj}_{\varphi}),\\
		(y,z_{k-q+1},\ldots,z_k,\zeta,\xi_{k-q+1},\ldots,\xi_k) & \in \mathrm{WF}\big(h^{(k-q+1)}_{\varphi}[0]_j\big).
	\end{array} \right. 
$$
Suppose by contradiction that $(z_1,\ldots,z_{k},\xi_1,\ldots,\xi_k)\in \mathrm{WF}\big( \lbrace F,H \rbrace_{\varphi}^{(k)}\big)$ has $(\xi_1,\ldots,\xi_k)\in \overline{V}^+_k(z_1,\ldots,z_k)$ $\big($resp. $(\xi_1,\ldots,\xi_k)\in \overline{V}^-_{k}(z_1,\ldots,z_k)\big)$. Then $\zeta_m$ and $\eta_{m+1}$ are both either lightlike future directed, or lightlike past directed. In the first case, propagation of singularities implies that $\zeta$ is lightlike future directed, contradicting microcausality of $H^{(k-q+1)}_{\varphi}[0]$ (resp. $\zeta$ is lightlike past directed, contradicting the microlocality of $H^{(k-q+1)}_{\varphi}[0]$); in the second case, propagation of singularities implies that $\eta$ is lightlike past directed, contradicting microcausality of $F^{(p+1)}_{\varphi}[0]$ (resp. $\eta$ is lightlike future directed, contradicting the microlocality of $F^{(p+1)}_{\varphi}[0]$). We remark that in the wave front set of $\{F,H\}_{\varphi}^{(k)}[0]$ is the (finite) union under all possible choices of indices for all wave front sets of the form \eqref{eq_1_kernel_k-th_derivative_Paierl_1} or \eqref{eq_1_kernel_k-th_derivative_Paierl_2}, each of which is however microcausal, implying that their finite union will be microcausal as well.
\end{proof}


\begin{theorem}\label{thm_1_jacobi}
The mapping $(F,H) \mapsto \lbrace F,H \rbrace_{\mathcal{L}}$ defines a Lie bracket on $\mathcal{F}_{\mu c}(B,\mathcal{U},g) $, for all $\mathcal{U}$ $CO$-open. 
\end{theorem}

\begin{proof}
Bilinearity and antisymmetry are clear from Definition \ref{def_1_Peierls}, while Theorem \ref{thm_1_peierls_closedness} ensures the closure of the bracket operation. We are thus left with showing Jacobi identity
$$
\lbrace F,\lbrace G,H \rbrace_{\mathcal{L}} \rbrace_{\mathcal{L}}+\lbrace G,\lbrace H,F \rbrace_{\mathcal{L}} \rbrace_{\mathcal{L}}+\lbrace H,\lbrace F,G \rbrace_{\mathcal{L}} \rbrace_{\mathcal{L}}=0. \quad \forall F, \ G, \ H \in \mathcal{F}_{\mu c}(B,\mathcal{U},g).
$$
Using the integral kernel notation as in the above proof, we have 
$$
\begin{aligned}
	& \lbrace F,\lbrace G,H \rbrace_{\mathcal{L}} \rbrace_{\mathcal{L}}(\varphi) = \int_{M^{2}}  f^{(1)}_{\varphi}[0]_i(x)\mathcal{G}_{\varphi}^{ij}(x,y)\lbrace G,H \rbrace^{(1)}_{\mathcal{L} \space\ \varphi}[0]_j(y)d\mu_g(x,y) \\ &=
	\int_{M^{4}}f^{(1)}_{\varphi}[0]_i(x)\mathcal{G}_{\varphi}^{ij}(x,y)
	\Big( g^{(2)}_{\varphi}[0]_{jk}(y,z) \mathcal{G}_{\varphi}^{kl}(z,w) h^{(1)}_{\varphi}[0]_l(w) \\
 &\qquad + g^{(1)}_{\varphi}[0]_k(z) \mathcal{G}^{kl}_{\varphi}(z,w) h^{(2)}_{\varphi}[0]_{jl}(y,w) \Big) d\mu_g(x,y,z,w)   \\ & 
	-\int_{M^{6}} f^{(1)}_{\varphi}[0]_i(x)\mathcal{G}_{\varphi}^{ij}(x,y) \Big(d^{(3)}\mathcal{L}_{\varphi}[0]_{jj_1i_1}(y,y_1,x_1,) \mathcal{G}_{\varphi}^{- \space\ k j_1}(z,y_{1}) g^{(1)}_{\varphi}[0]_{k}(z) \mathcal{G}_{\varphi}^{i_1 l}(x_{1},w)h^{(1)}_{\varphi}[0]_{l}(w) \\ & \ \ \ \ \ \ \ \ \  
	+d^{(3)}\mathcal{L}_{\varphi}[0]_{j j_{1} i_1}(y_1,x_1,y) \mathcal{G}_{\varphi}^{k j_1}(z,y_1) g^{(1)}_{\varphi}[0]_{k}(z) \mathcal{G}^{+ \space\ i_1 l}_{\varphi}(x_1,w)h^{(1)}_{\varphi}[0]_l(w) \Big)d\mu_g(x,y,z,w,x_1,y_1).
\end{aligned}
$$
Summing over cyclic permutations of the first two terms yields 
$$
\begin{aligned}
	\int_{M^{4}} & \Big( \textcolor{blue}{ f^{(1)}_{\varphi}[0]_i(x)\mathcal{G}_{\varphi}^{ij}(x,y) g^{(2)}_{\varphi}[0]_{jk}(y,z) \mathcal{G}_{\varphi}^{kl}(z,w) h^{(1)}_{\varphi}[0]_l(w)}   \\ & 
	 +\textcolor{red}{f^{(1)}_{\varphi}[0]_i(x)\mathcal{G}_{\varphi}^{ij}(x,y)g^{(1)}_{\varphi}[0]_k(z) \mathcal{G}^{kl}_{\varphi}(z,w) h^{(2)}_{\varphi}[0]_{jl}(y,w)}  \\ & 
	 +\textcolor{red}{g^{(1)}_{\varphi}[0]_i(x)\mathcal{G}_{\varphi}^{ij}(x,y) h^{(2)}_{\varphi}[0]_{jk}(y,z) \mathcal{G}_{\varphi}^{kl}(z,w) f^{(1)}_{\varphi}[0]_l(w) } \\ & 
	 +\textcolor{violet}{g^{(1)}_{\varphi}[0]_i(x)\mathcal{G}_{\varphi}^{ij}(x,y)h^{(1)}_{\varphi}[0]_k(z) \mathcal{G}^{kl}_{\varphi}(z,w) f^{(2)}{\varphi}[0]_{jl}(y,w)}  \\ &
	 +\textcolor{violet}{h^{(1)}_{\varphi}[0]_i(x)\mathcal{G}_{\varphi}^{ij}(x,y) f^{(2)}{\varphi}[0]_{jk}(y,z) \mathcal{G}_{\varphi}^{kl}(z,w) g^{(1)}_{\varphi}[0]_l(w) } \\ & 
	 +\textcolor{blue}{h^{(1)}_{\varphi}[0]_i(x)\mathcal{G}_{\varphi}^{ij}(x,y)f^{(1)}_{\varphi}[0]_k(z) \mathcal{G}^{kl}_{\varphi}(z,w) g^{(2)}_{\varphi}[0]_{jl}(y,w)}\Big)d\mu_g(x,y,z,w)\\
	 &=0,
\end{aligned}
$$
while for the other two,
$$
\begin{aligned}
& \int_{M^{6}}d\mu_g(x,y,z,w,x_1,y_1)\\
	&\Big(\textcolor{blue}{f^{(1)}_{\varphi}[0]_i(x)\mathcal{G}_{\varphi}^{ij}(x,y) d^{(3)}\mathcal{L}_{\varphi}[0]_{jj_1x_1}(y,y_1,i_1,) \mathcal{G}_{\varphi}^{- \ k j_1}(z,y_{1}) g^{(1)}_{\varphi}[0]_{k}(z) \mathcal{G}_{\varphi}^{i_1 l}(x_{1},w)h^{(1)}_{\varphi}[0]_{l}(w)} \\ & 
	+\textcolor{red}{f^{(1)}_{\varphi}[0]_i(x)\mathcal{G}_{\varphi}^{ij}(x,y) d^{(3)}\mathcal{L}_{\varphi}[0]_{j j_{1} i_1}(y,y_1,x_1) \mathcal{G}_{\varphi}^{k j_1}(z,y_1) g^{(1)}_{\varphi}[0]_{k}(z) \mathcal{G}^{+ \space\ i_1 l}_{\varphi}(x_1,w)h^{(1)}_{\varphi}[0]_l(w)} \\ & 
	+\textcolor{red}{g^{(1)}_{\varphi}[0]_i(x)\mathcal{G}_{\varphi}^{ij}(x,y) d^{(3)}\mathcal{L}_{\varphi}[0]_{jj_1i_1}(y,y_1,x_1,) \mathcal{G}_{\varphi}^{- \ k j_1}(z,y_{1}) h^{(1)}_{\varphi}[0]_{k}(z) \mathcal{G}_{\varphi}^{i_1 l}(x_{1},w)f^{(1)}_{\varphi}[0]_{l}(w)} \\ & 
	+\textcolor{violet}{g^{(1)}_{\varphi}[0]_i(x)\mathcal{G}_{\varphi}^{ij}(x,y) d^{(3)}\mathcal{L}_{\varphi}[0]_{j j_{1} i_1}(y,y_1,x_1) \mathcal{G}_{\varphi}^{k j_1}(z,y_1) h^{(1)}_{\varphi}[0]_{k}(z) \mathcal{G}^{+ \ i_1 l}_{\varphi}(x_1,w)f^{(1)}_{\varphi}[0]_l(w) }\\ & 
	+\textcolor{violet}{h^{(1)}_{\varphi}[0]_i(x)\mathcal{G}_{\varphi}^{ij}(x,y) d^{(3)}\mathcal{L}_{\varphi}[0]_{jj_1i_1}(y,y_1,x_1,) \mathcal{G}_{\varphi}^{- \ k j_1}(z,y_{1}) f^{(1)}_{\varphi}[0]_{k}(z) \mathcal{G}_{\varphi}^{i_1 l}(x_{1},w)g^{(1)}_{\varphi}[0]_{l}(w)} \\ & 
	+\textcolor{blue}{h^{(1)}_{\varphi}[0]_i(x)\mathcal{G}_{\varphi}^{ij}(x,y) d^{(3)}\mathcal{L}_{\varphi}[0]_{j j_{1} i_1}(y,y_1,x_1) \mathcal{G}_{\varphi}^{k j_1}(z,y_1) f^{(1)}_{\varphi}[0]_{k}(z) \mathcal{G}^{+ \ i_1 l}_{\varphi}(x_1,w)g^{(1)}_{\varphi}[0]_l(w)} \Big)\\
	&=0.
\end{aligned}
$$
To make the simplifications we used the antisymmetry of the integral kernel $\mathcal{G}_{\varphi}(x,y)$, the adjoint relation between the propagators $\mathcal{G}_{\varphi}^+(x,y)=\mathcal{G}_{\varphi}^-(y,x)$ (see Lemma \ref{lemma_1_duality_green}) and $\mathcal{G}_{\varphi}^{ij}=\mathcal{G}_{\varphi}^{ji}$.
\end{proof}


\section{Structure of the space of microcausal functionals}\label{section_properties_of_muc_functionals}

The first point of emphasis is to give a topology to $\mathcal{F}_{\mu c}(B,\mathcal{U},g)$. 
The simplest guess, as well as the weakest, on $\mathcal{F}_{\mu c}(B,\mathcal{U},g)$ is the initial topology induced by the mappings
$$
 F \rightarrow F(\varphi) \in \mathbb{R}.
$$
Taking into account smoothness of functionals, we could refine the above topology by requiring continuity of
\begin{align*}
	&F \rightarrow F(\varphi) \in \mathbb{R},\\
	&F \rightarrow \nabla^{k}F_{\varphi}[0] \in \Gamma^{-\infty}_c\left(M^k\leftarrow\boxtimes^k \left(\varphi^{*}VB\right)\right).
\end{align*}
This time we are leaving out all information on the wave front set which plays a role in defining microcausal functionals. To remedy this, we would like to set up the H\"ormander topology on the spaces $\Gamma^{-\infty}_{ \Upsilon_{k,g}}\left(M^k\leftarrow\boxtimes^k \left(\varphi^{*}VB\right)\right)$. This is, however, not immediately possible since the sets $\Upsilon_{k,g}$ are open cones, and the H\"ormander topology requires closed ones. To tackle this issue we need the following result:

\begin{lemma}\label{lemma_1_cones}
Given the open cone $\Upsilon_k(g)$ it is possible to find a sequence of closed cones $\lbrace \mathcal{V}_{m}(k) \subset T^{*}M^k \rbrace_{m \in \mathbb{N}}$ such that $\mathcal{V}_{m}(k) \subset \mathrm{Int}(\mathcal{V}_{m+1}(k) )$ and $\cup_{m \in \mathbb{N}}\mathcal{V}_{m}(k)= \Upsilon_k(g)$ for all $k \geq 1$. 
\end{lemma}
We refer to \cite[Lemma 4.1]{acftstructure} for the proof of the above result. We are then able to topologize the distributional space $\Gamma^{-\infty}_{c\ \Upsilon_{k,g}}\left(M^k\leftarrow\boxtimes^k \left(\varphi^{*}VB\right)\right)$.

\begin{lemma}\label{lemma_1_mucaus_top_1}
The topology on $\Gamma^{-\infty}_{c\ \Upsilon_{k,g}}\left(M^k\leftarrow\boxtimes^k \left(\varphi^{*}VB\right)\right)$ induced as a direct limit topology of the spaces $\Gamma^{-\infty}_{c\ \mathcal{V}_{m}(k)}\left(M^k\leftarrow\boxtimes^k \left(\varphi^{*}VB\right)\right)$ each possessing the H\"ormander topology is a Hausdorff nuclear locally convex space.
\end{lemma}

\begin{proof}
By Lemma \ref{lemma_1_cones}, we have
\begin{align}
	\Gamma^{-\infty}_{c\ \Upsilon_k(g)}\left(M^k\leftarrow\boxtimes_k \varphi^{*}VB\right)  = \underrightarrow{\lim_{m\in \mathbb{N}}} \Gamma^{-\infty}_{c\ \mathcal{V}_{m}(k)}\left(M^k\leftarrow\boxtimes_k \varphi^{*}VB\right).
\end{align}
By construction of the direct limit we have mappings 
$$
	\Gamma^{-\infty}_{c\ \mathcal{V}_{m}(k)}\left(M^k\leftarrow\boxtimes^k \left(\varphi^{*}VB\right)\right)  \rightarrow \Gamma^{-\infty}_{c\ \Upsilon_k(g)}\left(M^k\leftarrow\boxtimes^k \left(\varphi^{*}VB\right)\right)
$$
where each source space can be given the H\"ormander topology. In particular, by the remark after \cite[Theorem 18.1.28]{hormanderIII}, when dealing with standard compactly supported distributions, the limit topology can be defined to be the initial topology with respect to the mappings
\begin{align*}
	&F \rightarrow F(\varphi) \in \mathbb{C},\\
	&F \rightarrow P F \in \Gamma^{\infty}_{c}\left(M^k\leftarrow \boxtimes^k \left(\varphi^{*}VB\right)\right)
\end{align*}  
where $\varphi$ is any smooth section of $B \to M$ and $P$ any properly supported pseudo-differential operator of order zero on the vector bundle $\boxtimes^k \left(\varphi^{*}VB\right) \rightarrow M^k$ such that $\mathrm{WF}(P) \cap \mathcal{V}_m(k)=\emptyset$. Generalizing to vector bundles \cite[Definition 18.1.32, Theorem 18.1.16]{hormanderIII} and \cite[Theorem 8.2.13]{hormanderI} using \eqref{eq_1_def_WF_sections} for the wave front set of vector valued distributions; we can argue as in \cite[Corollary 4.1 pp. 50]{acftstructure} that each $\Gamma^{-\infty}_{c\ \mathcal{V}_{m}(k)}\left(M^k\leftarrow\boxtimes^k\left(\varphi^{*}VB\right)\right)$ is a Hausdorff topological space.
By Theorem \ref{thm_1_Gamma_c_TVS}, since the base manifold $M$ is separable and the fibers are finite dimensional vector spaces, $\Gamma^{\infty}_{c}\left(M^k\leftarrow \boxtimes^k \left(\varphi^{*}VB\right)\right) $ is a nuclear Hausdorff LF space, thus each dual
$$
	\Gamma^{-\infty}_{c\ \mathcal{V}_{m}(k)}\left(M^k\leftarrow\boxtimes^k \left(\varphi^{*}VB\right)\right)
$$ 
is nuclear as well. Finally, by \cite[Proposition 50.1 pp. 514]{treves2016topological} the direct limit topology on 
$$
    \Gamma^{-\infty}_{c\ \Upsilon_{k,g}} \left(M^k\leftarrow\boxtimes^k(\varphi^{*}VB)\right)
$$ 
is nuclear for all $k$ (and also Hausdorff).
\end{proof}


\begin{theorem}\label{thm_1_mucaus_top}
Given the set $\mathcal{F}_{\mu c}(B,\mathcal{U},g)$, consider the mappings 
\begin{equation}\label{eq_1_point_seminorm}
	\mathcal{F}_{\mu c}(B,\mathcal{U},g) \ni F \mapsto F(\varphi) \in \mathbb{R},
\end{equation}
\begin{equation}\label{eq_1_hormander_seminorm}
	\mathcal{F}_{\mu c}(B,\mathcal{U},g) \ni F  \mapsto \nabla^{k}F_{\varphi}[0] \in \Gamma^{-\infty}_{c\ \Upsilon_{k,g}}\left(M^k\leftarrow\boxtimes^k \left(\varphi^{*}VB\right)\right),
\end{equation}
and the related initial topology on $\mathcal{F}_{\mu c}(B,\mathcal{U},g)$. Then $\mathcal{F}_{\mu c}(B,\mathcal{U},g)$ is a nuclear locally convex topological space with a Poisson *-algebra with respect to the Peierls bracket of some microlocal generalized Lagrangian $\mathcal{L}$.
\end{theorem}

\begin{proof}
Nuclearity follows from the stability of nuclear spaces under projective limit topology (see \textit{e.g.} \cite[Proposition 50.1 pp. 514]{treves2016topological}) using the nuclearity of both $\Gamma^{-\infty}_{c\ \Upsilon_{k,g}}\left(M^k\leftarrow\boxtimes^k \left(\varphi^{*}VB\right)\right)$ (by Lemma \ref{lemma_1_mucaus_top_1}) and $\mathbb{R}$ (trivially). The Peierls bracket is well defined by Theorem \ref{thm_1_peierls_closedness} and satisfies the Jacobi identity due to Theorem \ref{thm_1_jacobi}, thus we are left with the Leibniz rule, that is 
$$
	\lbrace F,GH \rbrace_{\mathcal{L}}= G\lbrace F,H \rbrace_{\mathcal{L}} + \lbrace F,G \rbrace_{\mathcal{L}}H.
$$
Using $d(F\cdot G)_{\varphi}[0]= dF_{\varphi}[0]G(\varphi)+ F(\varphi)dG_{\varphi}[0]$, Leibniz rule follows if we show that the product $F,G \mapsto F\cdot G$ is closed in $\mathcal{F}_{\mu c}(B,\mathcal{U},g)$. 
$$
	d^k(F\cdot G)_{\varphi}[0](\vec{X}_1, \ldots,\vec{X}_k)= \sum_{\sigma \in \mathcal{P}(1,\ldots ,k)}\sum_{l=0}^k d^lF_{\varphi}[0](\vec{X}_{\sigma(1)}, \ldots,\vec{X}_{\sigma(l)}) d^{k-l}G_{\varphi}[0](\vec{X}_{\sigma(k-l+1)}, \ldots,\vec{X}_{\sigma(k)}),
$$
where $\mathcal{P}(1,\ldots ,k)$ is the set of permutations of $\lbrace 1,\ldots,k \rbrace$. For each of those terms, using \cite[Theorem 8.2.9]{hormanderI}, we have 
$$
\begin{aligned}
	\mathrm{WF}(F^{(l)}_{\varphi}[0]G^{(k-l)}_{\varphi}[0]) & \subset  \mathrm{WF}(F^{(l)}_{\varphi}[0]) \times \mathrm{WF}(G^{(k-l)}_{\varphi}[0]) \\ 
	&\quad  \bigcup \mathrm{WF}(G^{(k-l)}_{\varphi}[0]) \times \left( \mathrm{supp}(G^{(k-l)}_{\varphi}[0])\times \lbrace {0} \rbrace \right) \\ 
	&\quad \bigcup\left( \mathrm{supp}(F^{(l)}_{\varphi}[0])\times \lbrace {0} \rbrace \right)  \times  \mathrm{WF}(G^{(k-l)}_{\varphi}[0]),
\end{aligned}
$$
implying microcausality of $F \cdot G$.
\end{proof}

Notice that closed linear subspaces of $\mathcal{F}_{\mu c}(B,\mathcal{U},g)$ are nuclear as well (see \textit{e.g.} \cite[Proposition 50.1 pp. 514]{treves2016topological}), thus $\mathcal{F}_{\mu loc}(B,\mathcal{U},g)$ is a Hausdorff nuclear space. The space $\mathcal{F}_{\mu c}(B,\mathcal{U},g)$ can be given a $C^{\infty}$-ring structure\footnote{An algebra $\mathcal{A}$ has the $C^{\infty}$-ring structure if, given any $a_1,\ldots,a_n\in \mathcal{A}$, $f\in C^{\infty}(\mathbb K^n,\mathbb K)$, there are mappings $\rho_f:\times^n\mathcal{A}\to \mathcal{A}$ such that if $g\in C^{\infty}(\mathbb K^m,\mathbb K)$, $f_i\in C^{\infty}(\mathbb K^n,\mathbb K)$ with $i=1,\ldots,m$ then $$\rho_h\big(\rho_{f_1}(a_1,\ldots,a_n),\ldots,\rho_{f_m}(a_1,\ldots,a_n)\big)=\rho_{h\circ (f_1,\ldots, f_m)}(a_1,\ldots,a_n).$$The field $\mathbb K$ can either be $\mathbb R$ or $\mathbb C$.}, more precisely

\begin{proposition}\label{prop_1_C-infty_ring}
If $F_1,\ldots,F_n$ $\in \mathcal{F}_{\mu c}(B,\mathcal{U},g)$ and $\psi \in V \subset \mathbb{R}^n \rightarrow \mathbb{R}$ is smooth, then $\psi (F_1,\ldots,F_n) \in \mathcal{F}_{\mu c}(B,\mathcal{U},g)$ and 
$$
	\mathrm{supp}(\psi (F_1,\ldots,F_n)) \subset \bigcup_{i=1}^n \mathrm{supp}(F_i). 
$$
\end{proposition}

\begin{proof}
First we check the support properties. Suppose that $x \notin \cup_{i=1}^n \mathrm{supp}(F_i)$, we can find an open neighborhood $V$ of $x$ for which given any $\varphi \in \mathcal{U}$ and any $\vec{X} \in \Gamma^{\infty}_c(M\leftarrow \varphi^*VB)$ having $\mathrm{supp}(\vec{X}) \subset V$ implies $(F_i\circ u_{\varphi})(t\vec{X} )=(F_i\circ u_{\varphi})(0)$ for all $t$ in a suitable neighborhood of $0\in \mathbb{R}$. Then $ \psi\big((F_1\circ u_{\varphi})(t\vec{X}),\ldots,(F_n\circ u_{\varphi})(t\vec{X})\big)= \psi\big((F_1\circ u_{\varphi})(0),\ldots,(F_n\circ u_{\varphi})(0)\big)$ as well, implying $x\notin \mathrm{supp}(\psi \circ (F_1,\ldots,F_n))$. Due to $(ii)$ Proposition \ref{prop_1_continuity_of_push_forward} and smoothness of $f$, we see that the composition $f \circ (F_1, \ldots, F_n)$ is Bastiani smooth. Its $k$th derivative is
$$\small{
\begin{aligned}
	& d^k\psi(F_1,\ldots,F_1)_{\varphi}[0](\vec{X}_1,\ldots,\vec{X}_k)\\ & =  \sum_{\substack{(J_1,\ldots,J_n)\\ \in \mathcal{P}(1,\ldots,k)}} 
	\frac{\partial^k \psi (F_1(\varphi),\ldots,F_n(\varphi))}{\partial z^{J_1+\ldots+J_n}}\left( F^{(\vert J_1 \vert)}_{1 \space\ \varphi}[0] (\vec{X}_{j_{1,1}} ,\ldots ,\vec{X}_{j_{\vert J_1 \vert,1}})\cdot  \ldots \cdot    F^{(\vert J_n \vert)}_{n \space\ \varphi}[0](\vec{X}_{j_{1,n}},\ldots,\vec{X}_{j_{\vert J_n \vert,n}}) \right)  \\ 
\end{aligned}}
$$
where $J_1, \ldots , J_n$ denotes a partition of $\{1,\ldots,k\}$ into $n$ subsets and $z\in \mathbb R^n$. Since $\psi$ is smooth, the only contribution to the wavefront set of the composition is the product of functional derivatives in the above sum for which, \cite[Theorem 8.2.9]{hormanderI}, gives 
$$
\begin{aligned}
	\mathrm{WF} \left( F_1^{(\vert J_1 \vert)},\ldots, F_n^{(\vert J_n \vert)}\right)
  \subset &  \quad \mathrm{WF}\left( F_{1 \space\ \varphi}^{(\vert J_1 \vert)}\right) \times \ldots \times \mathrm{WF}\left(F_{n \space\ \varphi}^{(\vert J_n \vert)}\right)   \\ 
	&\quad\bigcup \mathrm{supp}\left(F_{1 \space\ \varphi}^{(\vert J_1 \vert)}\right)\times \lbrace \vec{0} \rbrace^{\vert J_1\vert} \times  \mathrm{WF}\left(F_{2 \space\ \varphi}^{(\vert J_2 \vert)}\right) \times \ldots \times \mathrm{WF}\left(F_{n \space\ \varphi}^{(\vert J_n \vert)}\right) \\ 
	&\quad \ldots \\ 
	&\quad\bigcup \mathrm{WF}\left(F_{1 \space\ \varphi}^{(\vert J_1 \vert)}\right) \times \ldots \times \mathrm{WF}\left(F_{n-1 \space\ \varphi}^{(\vert J_{n-1} \vert)}\right) \times \mathrm{supp}\left(F_{n \space\ \varphi}^{(\vert J_n \vert)}\right)\times \lbrace \vec{0} \rbrace^{\vert J_n\vert} \\
	&\quad \ldots \\ 
	& \quad\bigcup\mathrm{supp}\left(F_{1 \space\ \varphi}^{(\vert J_1 \vert)}\right)\times \lbrace \vec{0} \rbrace^{\vert J_1\vert} \times \ldots \times \mathrm{supp}\left(F_{n-1 \space\ \varphi}^{(\vert J_{n-1} \vert)}\right)\times \lbrace \vec{0} \rbrace^{\vert J_{n-1}\vert} \times \mathrm{WF}\left(F_{n \space\ \varphi}^{(\vert J_n \vert)}\right).
\end{aligned}
$$
Therefore, if any element of $\mathrm{\mathrm{WF}} \left( F_1^{(\vert J_1 \vert)},\ldots, F_n^{(\vert J_n \vert)}\right)$ was contained in either $\overline{V}^k_{+,g}$ or $\overline{V}^k_{-,g}$ then at least one of the starting functional would not be microcausal.
\end{proof}

Going through the same calculation for the proof of Proposition \ref{prop_1_C-infty_ring} we get the expression for the Peierls bracket of this composition:
\begin{equation}
\lbrace \psi(F_1,\ldots,F_n),G \rbrace_{\mathcal{L}}=\sum_{j=1}^n\left( \frac{\partial \psi}{\partial z^j}(F_1,\ldots,F_n)\lbrace F_j,G \rbrace_{\mathcal{L}} \right).
\end{equation}

\begin{remark}
With the topology of Theorem \ref{thm_1_mucaus_top} the space of microcausal functionals lacks sequential completeness. Consider as an example the simpler case where $B=M\times \mathbb{R}$, then let $F:\varphi \in \mathcal{U} \mapsto \int_M \varphi(x)\omega$ for some smooth compactly supported $m$-form $\omega$ over $M$, also let $\lbrace f_n\rbrace$ be a sequence of smooth functions $f_n: \mathbb{R}\rightarrow [0,1]$ supported in $[-2,2]$ and converging pointwise to the characteristic function $\chi_{[-1,1]}$ of $[-1,1]$. Sequences of derivatives of $f_n$ all converge pointwise to the zero function on $\mathbb{R}$. If we define $F_n(\varphi)=f_n \circ F(\varphi)$, then $F_n(\varphi)\rightarrow \chi_{[-1,1]} \circ F(\varphi)$ and each
$$
	F^{(k)}_{n \space\ \varphi}[0](\psi_1,\ldots,\psi_k)=f^{(k)}_n(F(\varphi))\int_M \psi_1(x)\omega(x) \ldots \int_M\psi_k(x) \omega(x)
$$
converges pointwise to the zero functional in the topology of Theorem \ref{thm_1_mucaus_top}. However, $\chi_{[-1,1]} \circ F(\cdot)$ is not even continuous, let alone microcausal.
\end{remark}

\begin{proposition}
    We can endow $\mathcal{F}_{\mu c}(B,\mathcal{U},g)$ with a nuclear, locally convex topology $\tau_{sc}$, called the strong convenient topology, generated by strengthening \eqref{eq_1_point_seminorm}, \eqref{eq_1_hormander_seminorm} with the seminorms
\begin{equation}\label{eq_1_strong_conv_seminorm}
\begin{aligned}
    \| F \|_{k, I, \varphi, \mathcal{B}} \doteq \sup_{\substack{t\in I\subset \mathbb{R}\\ \gamma\in C^{\infty}(\mathbb{R},\mathcal{U}_0)\\(\Vec{X}_1,\ldots,\Vec{X}_k)\in \mathcal{B}\subset \boxtimes^k \Gamma^{\infty}(M\leftarrow \varphi^*VB)}} \big\vert \nabla^{(k)}F[\gamma(t)](\Vec{X}_1,\ldots,\Vec{X}_k)\big\vert,
\end{aligned}
\end{equation}
where $k\in \mathbb N$, $I\subset \mathbb R$ is a compact interval, $\varphi\in \mathcal{U}$ and $\mathcal{B}\subset \boxtimes^k \Gamma^{\infty}(M\leftarrow \varphi^*VB)$ is a closed, bounded subset.
\end{proposition}

\begin{proof}[Sketch of a proof]
Similarly to Lemma \ref{lemma_1_loc_bornology}, we rely on \cite[Remark 4.3 pp.53]{acftstructure} for the full discussion and highlight the key differences in our setting. We start with the following observations: 
\begin{itemize}
    \item in any $CO$-open subset $\mathcal{U}$ of $\Gamma^{\infty}(M\leftarrow B)$ the Bastiani smooth curves $\gamma:I\subseteq \mathbb{R} \to \mathcal{U}$ are precisely the conveniently smooth curves from $I$ to $\mathcal{U}$;
    \item each $\mathcal{U}$ can be decomposed as the disjoint union of CO-open subsets $\sqcup_{\varphi\in \mathcal{U}} \mathcal{U}\cap\mathcal{V}_{\varphi} $, each of which is topologically isomorphic to an open subset of $\Gamma^{\infty}_c(M\leftarrow \varphi^*VB)$ with the LF topology;
    \item by Corollary \ref{coro_1_WO^0-curves} we see that each smooth curve $\gamma:I\to \mathcal{U}$ is just valued in some $\mathcal{V}_{\varphi}$ for some $\varphi\in \mathcal{U}$. 
    \item given $\varphi_0\in \mathcal{U}$, we can consider the ultralocal chart representation $F_{\varphi_0}:\mathcal{U}_0\equiv u_{\varphi_0}(\mathcal{U}\cap \mathcal{U}_{\varphi_0})\to \mathbb{R}$ of any functional $F\in \mathcal{F}_c(\mathcal{U},B)$;
    \item  similarly to Remark \ref{rmk_1_extension_of_functionals}, using the fact that the functional has compact support we can extend it to $\widetilde{F}_{\varphi_0}=:\widetilde{\mathcal{U}}_{0}\subset \Gamma^{\infty}(M\leftarrow \varphi_0^*VB) \to \mathbb{R}$ where $\widetilde{\mathcal{U}}_{0}$ is $CO$-open;
    \item if we equip $\Gamma^{\infty}(M\leftarrow \varphi_0^*VB)$ with the $CO$-open topology, then it becomes a Fréchet space {(the compact-open topology is equivalent to the topology of uniform convergence on compact subsets)}, thus, on $\widetilde{\mathcal{U}}_{0}$, the notions of Bastiani smoothness and convenient smoothness do coincide by \cite[(1) Theorem 4.11 pp. 39]{kriegl1997convenient} and \cite[Theorem 1, pp. 71]{frolicher2006smooth}.
\end{itemize}

Next we claim that we can write
\begin{equation}
    C^{\infty}(\widetilde{\mathcal{U}}_{0},\mathbb{R})=\lim_{\substack{\longleftarrow \\ \widetilde{\gamma}\in C^{\infty}(\mathbb{R}, \widetilde{\mathcal{U}}_0)}}C^{\infty}(\mathbb{R},\mathbb{R})=\bigg\{ \{\widetilde{F}\}_{\widetilde{\gamma}} \in \prod_{\widetilde{\gamma}\in C^{\infty}(\mathbb{R},\widetilde{\mathcal{U}}_0)} C^{\infty}(\mathbb{R},\mathbb{R}) : \widetilde{F}_{\widetilde{\gamma}}\circ \kappa =\widetilde{F}_{\widetilde{\gamma}\circ \kappa}\bigg\}
\end{equation}
where the inverse limit is taken with respect to the pre-order $\gamma\leq \gamma'$ if and only if there is $\kappa \in C^{\infty}(\mathbb{R},\mathbb{R})$ with $\gamma=\gamma'\circ \kappa$. To wit, notice that any mapping $F_{\gamma}$ such that $F_{\gamma}\circ g =F_{\gamma\circ g}$ for all reparametrization $ g\in C^{\infty}(\mathbb{R},\mathbb{R})$ gives rise to a mapping $F: \widetilde{U}_0\to Y$ by setting $F(\varphi)=F_{\gamma_{\varphi}}(t)$ where $\gamma_{\varphi}:t\mapsto \varphi \in \widetilde{U}_0$ is the constant curve if $g$ is any reparametrization, then $F_{\gamma_{\varphi}}\circ g =F_{\gamma_{\varphi}\circ g}$ but $\gamma\circ g (t)\equiv \varphi$ thus $F$ is not altered; finally $F$ is smooth since $F\circ \gamma \equiv F_{\gamma}\in  C^{\infty}(\mathbb{R},\mathbb R)$. On the other hand any smooth function $F:\widetilde{U}_0\to \mathbb R$ gives rise to $\{F_{\gamma}\}_{\gamma}$ by setting $F_{\gamma}=F\circ \gamma$, then $F_{\gamma}\circ g= F\circ \gamma\circ g  =F_{\gamma\circ g}$ for any $ g\in C^{\infty}(\mathbb{R},\mathbb{R})$.

We induce on $C^{\infty}(\widetilde{\mathcal{U}}_0,\mathbb{R})$ the initial topology from the Fréchet space topology on $C^{\infty}(\mathbb{R},\mathbb{R})$ through the pullbacks $\gamma^{*}$. This is a nuclear and complete topology (see the discussion in remark 4.3 pp. 55 on \cite{acftstructure}); finally, since $\mathcal{F}_c(B,\mathcal{U}_0)$ is closed in $C^{\infty}(\widetilde{\mathcal{U}}_0,\mathbb{R})$, nuclearity and completeness are inherited in the quotient topology. This space is even a locally convex topological vector space with the seminorms
\begin{equation}\label{eq_1_strong_conv_seminorms_bundle}
    \sup_{\substack{t\in I\subset \mathbb{R}\\ \gamma\in C^{\infty}(\mathbb{R},\mathcal{U}_0)\\(\Vec{X}_1,\ldots,\Vec{X}_k)\in \mathcal{B}\subset \boxtimes^k \Gamma^{\infty}(M\leftarrow \varphi^*VB)}} \big\vert \nabla^{(k)}F[\gamma(t)](\Vec{X}_1,\ldots,\Vec{X}_k)\big\vert,
\end{equation}
where $I$, $\mathcal{B}$ are as in \eqref{eq_1_strong_conv_seminorm}.

Finally, we can induce a topology $\tau_{\mathrm{sc}}$, which we call the \textit{strong convenient} topology, on $\mathcal{F}_{\mu c}(B,\mathcal{U},g)$ by substituting the seminorms \eqref{eq_1_strong_conv_seminorms_bundle} in place of \eqref{eq_1_point_seminorm} and \eqref{eq_1_hormander_seminorm}. $\tau_{\mathrm{sc}}$ will enjoy the following additional properties:

\begin{itemize}
    \item[(a)] it remains a nuclear locally convex space topology;
    \item[(b)] it will have a well controlled and nuclear\footnote{See Proposition 5.3.1 in \cite{nlcs}.} completion, \textit{i.e.} its completion is equivalent to the initial topology induced from the completion of the spaces $\Gamma^{-\infty}_{c\ \Upsilon_{k,g}}(M\leftarrow \varphi^*VB)$;
    \item[(c)] the Poisson *-algebra and $C^\infty$-ring operations are continuous and remain such when passing to the completion described in (b), thanks to the results in \cite{brouder2014continuity}.
\end{itemize}
\end{proof}

Prior to the next result, let us recall some notions from \cite[$\S$ 16]{kriegl1997convenient}. A topological space $(X,\tau)$ is Lindel\"of if given any open cover of $X$ there is a countable open subcover, it is separable if it admits a countable dense subset, and it is second countable if it admits a countable basis for the topology.\\
Let $X$ be a Hausdorff locally convex topological space, possibly infinite dimensional, and $S\subset C(X,\mathbb{R})$ a subalgebra. We say that $X$ is $S$-\textit{normal} if  for all closed disjoint subsets $A_0,\ A_1$ of $X$ there is some $f \in S$ such that $f|_{A_i}=i$, while we say it is $S$-\textit{regular} if for any neighborhood $U$ of a point $x$ there exists a function $f\in S $ such that $f(x)=1$ and $\mathrm{supp}(f)\subset U$. A $S$-\textit{partition of unity} is a family $\lbrace \psi_j \rbrace_{j \in J}$ of mappings $S\ni \psi_j:X \rightarrow \mathbb{R}$ with
\begin{itemize}
\item[$(i)$] $\psi_j(x) \geq 0$ for all $j\in J$ and $x\in X$;
\item[$(ii)$] the set $\lbrace \mathrm{supp}(\psi_j): j\in J \rbrace$ is a locally finite covering of $X$,
\item[$(iii)$] $\sum_{j \in J} \psi_j(x)=1$ for all $x \in X$. 
\end{itemize}
When $X$ admits such partition we say it is $S$-\textit{paracompact}.

\begin{proposition}\label{prop_1_mu_caus_top_prop}
The following facts hold true:
\begin{itemize}
\item[$(i)$] Given any $\mathcal{U}\subset \Gamma^{\infty}(M\leftarrow B)$ $CO$-open and any $\varphi_0 \in \mathcal{U}$ there is some $F \in \mathcal{F}_{\mu c}(B,\mathcal{U},g)$ such that $F(\varphi_0)=1$, $0 \leq F|_{\mathcal{U}}\leq 1$ and $F|_{\Gamma^{\infty}(M\leftarrow B) \backslash \mathcal{U}_{\varphi_0}}=0$, \textit{i.e.} $\mathcal{U}$ is $ \mathcal{F}_{\mu c}(B,\varphi_{0},g)$-regular.
\item[$(ii)$] Any $\mathcal{U}\subset \Gamma^{\infty}(M\leftarrow B)$ $CO$-open admits locally finite partitions of unity belonging to $\mathcal{F}_{\mu c}(B,\mathcal{U},g)$, \textit{i.e.} $\mathcal{U}$ is $ \mathcal{F}_{\mu c}(B,\varphi_{0},g)$-paracompact.
\item[$(iii)$] Given any $\mathcal{U}\subset \Gamma^{\infty}(M\leftarrow B)$ $CO$-open, the algebra $\mathcal{F}_{\mu c}(B,\mathcal{U},g)$ separates the points of $\mathcal{U}$ \textit{i.e.} if $\varphi_1\neq \varphi_2 $ there is a microcausal functional $F$ that has $F(\varphi_1)\neq F(\varphi_2)$. 
\end{itemize}
\end{proposition}

\begin{proof}
To show $(i)$, take any chart $(\mathcal{U}_{\varphi_0},u_{\varphi_0})$ and consider the open subset $\mathcal{U} \cap \mathcal{U}_{\varphi_0}$, fix some compact set $K \subset M$ and some $\omega \in \Gamma^{\infty}_c(M\leftarrow \varphi^{*}VB'\otimes \Lambda_m(M))$ with $\mathrm{supp}(\omega) \subset K$. Define a functional 
$$
	G_{\omega}:\mathcal{U}\cap \mathcal{U}_{\varphi_0} \ni \varphi \mapsto G_{\omega}(\varphi)=\int_M \omega(u_{\varphi_{0}}(\varphi)). 
$$
By construction $G(\varphi_0)=0$. Let now $\mathcal{W}=\lbrace \varphi \in \mathcal{U}\cap \mathcal{U}_{0} : G(\varphi)<\epsilon^2 \rbrace$ for some constant $\epsilon$, then if $\chi\in C^{\infty}_c(\mathbb R)$ with $0\leq \chi \leq 1$, $\chi(t)=0$ for $|t|\geq 1$, $\chi(t) = 1$ for $t\in [-\frac{1}{2},\frac{1}{2}]$. Consider the new functional $F=\chi \circ (\frac{1}{\epsilon^2}G)$, since $G$ is microlocal, $\chi$ is smooth, by Propositions \ref{prop_1_muloc_into_mucaus} and \ref{prop_1_C-infty_ring}, $F$ is microcausal. Outside $\mathcal{W}$, $F$ is identically zero so we can smoothly extend it to zero over the rest of $\mathcal{U}$ to a new functional, which we denote always by $F$, that has the required properties. We first show that $(ii)$ holds for $U_{\varphi}$. Using the chart $(U_{\varphi},u_{\varphi})$, we can identify $U_{\varphi}$ with an open subset of $\Gamma^{\infty}_c(M\leftarrow \varphi^{*}VB)$. If we show that $U_{\varphi}$ is Lindel\"of and is $ \mathcal{F}_{\mu c}(B,\varphi_{0},g)$-regular, then we can conclude by \cite[Theorem 16.10 pp. 171]{kriegl1997convenient}. $ \mathcal{F}_{\mu c}(B,\mathcal{U},g)$-regularity was point $(i)$ while the Lindel\"of property follows from Theorem \ref{thm_1_Gamma_c_TVS}. Now we observe that any $\mathcal{U}$ can be obtained as the disjoint union of subsets $\mathcal{V}_{\varphi_0} \doteq \lbrace \psi \in \mathcal{U}: \mathrm{supp}_{\varphi_0}(\psi) \ \mathrm{is} \ \mathrm{compact}\rbrace$. Each of those is Lindel\"of and metrizable by Theorem \ref{thm_1_Gamma_c_TVS}, so given the open cover $\lbrace \mathcal{U}_{\varphi}\rbrace_{\varphi \in \mathcal{V}_{\varphi_0}}$, we can extract a locally finite subcover where each elements admits a partition of unity and then construct a partition of unity for the whole $\mathcal{V}_{\varphi_0}$. The fact that $\mathcal{U}= \sqcup \mathcal{V}_{\varphi_0}$ implies that the final partition of unity is the union of all others. Finally for $(iii)$ just take $\mathcal{U}_{\varphi_1}$, $\mathcal{U}_{\varphi_2}$ and $F$ as in $(i)$ constructed as follows: if $\varphi_2 \in \mathcal{U}_{\varphi_1}$ we choose $\epsilon< G(\varphi_2)$ for which $F(\varphi_1)\neq F(\varphi_2)$, if not then any $\epsilon>0$ suffices.
\end{proof}

\begin{definition}\label{def_1_dyn_ideal}
Let $\mathcal{U} \subset \Gamma^{\infty}(M\leftarrow B)$ be $CO$-open and $\mathcal{L}$ a generalized microlocal Lagrangian. We define the on-shell ideal associated to $\mathcal{L}$ as the subspace $\mathcal{I}_{\mathcal{L}}(B,\mathcal{U},g) \subset \mathcal{F}_{\mu c}(B,\mathcal{U},g)$ whose microcausal functionals are of the form 
\begin{equation}\label{eq_1_ideal_gen}
	F(\varphi)=\vec{X}_{\varphi}\left( E(\mathcal{L})_{\varphi}[0]\right)
\end{equation}
with $X:\mathcal{U}  \rightarrow T\mathcal{U}:\varphi\mapsto (\varphi,\vec{X}_{\varphi})$ is a smooth vector field.
\end{definition}

With our usual integral kernel notation we can also write \eqref{eq_1_ideal_gen} as
\begin{equation}\label{eq_1_ideal_gen kernel}
	F(\varphi)=\int_{M}\vec{X}_{\varphi}^i(x) E(\mathcal{L})_{\varphi}[0]_i(x)d\mu_g(x).
\end{equation}
We stress that functionals of the form \eqref{eq_1_ideal_gen} are those which can be seen as the derivation of the Euler-Lagrange derivative by kinematical vector fields over $\mathcal{U}\subset \Gamma^{\infty}(M\leftarrow B)$.

\begin{proposition}\label{prop_1_Poisson_ideal}
$\mathcal{I}_{\mathcal{L}}(B,\mathcal{U},g) $ is a Poisson $*$-ideal of $ \mathcal{F}_{\mu c}(B,\mathcal{U},g)$.
\end{proposition}

\begin{proof}
If $G \in  \mathcal{F}_{\mu c}(B,\mathcal{U},g)$, $G\cdot F(\varphi)=G(\varphi)X(\varphi)\left( E(\mathcal{L})_{\varphi}[0]\right)$. Then $X'= G\cdot X \in \mathfrak{X}(T\mathcal{U})$ and $G \cdot F$ is in the ideal and is associated with the new vector field $X'$. Finally, we have to show that if $F \in \mathcal{I}_{\mathcal{L}}(B,\mathcal{U},g)$ then also $\lbrace F,G \rbrace_{\mathcal{L}}\in \mathcal{I}_{\mathcal{L}}(B,\mathcal{U},g)$. Fix $\varphi \in \mathcal{U}$, $\vec{Y}_{\varphi}\in T_{\varphi}\mathcal{U}$, by the chain rule
$$
	dF_{\varphi}[0](\vec{Y}_{\varphi})= \int_M \left[ \vec{X}^{(1)}_{\varphi}[0]^i(\vec{Y}_{\varphi})\left(E(\mathcal{L})_{\varphi}[0]_i\right) +\vec{X}_{\varphi}^i \left(E^{(1)}(\mathcal{L})_{\varphi}[0]_i(\vec{Y}_{\varphi})\right)\right]d\mu_g
$$
and 
\begin{align*}
	 \lbrace F,G \rbrace_{\mathcal{L}}(\varphi)&= \left\langle dF_{\varphi}[0],\mathcal{G}_{\varphi} dG_{\varphi}[0] \right\rangle\\ &= 
	\int_M\left[ \vec{X}^{(1)}_{\varphi}[0]^{i}_{j}\left(\mathcal{G}_{\varphi}^{jk} g^{(1)}_{\varphi}[0]_k \right)\left(E(\mathcal{L})_{\varphi}[0]_i \right)+\vec{X}_{\varphi}^i \left( E^{(1)}(\mathcal{L})_{\varphi}[0]_{ij}\left(\mathcal{G}_{\varphi}^{jk} g^{(1)}_{\varphi}[0]_k\right)\right)\right]d\mu_g \\ &= 
	\int_M \left\{ \left[ \vec{X}^{(1)}_{\varphi}[0]^{i}_{j}\left(\mathcal{G}_{\varphi}^{jk} g^{(1)}_{\varphi}[0]_k \right) \right]\left(E(\mathcal{L})_{\varphi}[0]_i\right)\right\}d\mu_g
\end{align*}
where we used that $\mathcal{G}_{\varphi}$ associates to its argument a solution of the linearized equations. Defining $\varphi\mapsto \vec{Z}_{\varphi}=\vec{X}^{(1)}_{\varphi}[0]^{i}_j\left(\mathcal{G}_{\varphi}^{jk} g^{(1)}_{\varphi}[0]_k \right)\partial_i \in \Gamma^{\infty}_c(M \leftarrow \varphi^*VB)$ yields a smooth mapping (by smoothness of $X$, the functional $G$ and the propagator $\mathcal{G}_{\varphi}$) defining the desired vector field.
\end{proof}

\begin{definition}\label{def_1_on_shell_ideal}
Let $\mathcal{U} \subset \Gamma^{\infty}(M\leftarrow B)$ be $CO$-open and $\mathcal{L}$ a generalized microlocal Lagrangian. We define the on-shell algebra on $\mathcal{U}$ associated to $\mathcal{L}$ as the quotient
\begin{equation}
	  \mathcal{F}_{\mathcal{L}}(B,\mathcal{U},g) \doteq \mathcal{F}_{\mu c}(B,\mathcal{U},g)/ \mathcal{I}_{\mathcal{L}}(B,\mathcal{U},g).
\end{equation}
\end{definition}

This accounts for the algebra of observables once the condition $E(\mathcal{L})_{\varphi}[0]=0$ has been imposed on $\mathcal{U}$.

\section{Wave maps}\label{section_wave_maps}

Finally we introduce, as an example of physical theory, wave maps. The configuration bundle is $B=M\times N$, where $M$ is an $m$ dimensional Lorentzian manifold and $N$ an $n$ dimensional manifold equipped with a Riemannian metric $h$. The space of sections is canonically isomorphic to $C^{\infty}(M,N)$ and possess a differentiable structure induced by the atlas $\big\{\big(\mathcal{U}_{\varphi}, u_{\varphi}, \Gamma^{\infty}_c(M\leftarrow \varphi^{*}TN) \big)  \big\}_{\varphi\in C^{\infty}(M,N)} $, where $u_{\varphi}$ is given in \eqref{eq_1_trivial_gamma_local_chart}. 
The generalized Lagrangian for wave maps is 
\begin{equation}\label{eq_1_lag_wave_maps}
	\mathcal{L}_{\mathrm{WM}}(f)(\varphi)=\frac{1}{2}\int_M f(x)  \mathrm{Trace}(g^{-1} \circ (\varphi^*h))(x)d\mu_g(x);
\end{equation}
obtained by integration of the standard geometric Lagrangian density $\lambda= \frac{1}{2}g^{\mu\nu}h_{ij}(\varphi)\varphi^i_{\mu}\varphi^j_{\nu}d\mu_g$ smeared with a test function $f\in C^{\infty}_c(M)$. Computing the first functional derivative, as per \eqref{eq_1_euler_der}, we get the associated Euler-Lagrange equations which, written in jets coordinates, reads
\begin{equation}\label{eq_1_wavemaps_EL_eq}
	h_{ij}g^{\mu \nu} \left( \varphi^i_{\mu \nu} +\lbrace h\rbrace^{i}_{kl} \varphi^k_{\mu}\varphi^l_{\nu}-\lbrace g\rbrace^{\lambda}_{\mu \nu}\varphi^i_{\lambda} \right)=0,
\end{equation}
where we denoted by $\lbrace h\rbrace$, $\lbrace g\rbrace$ the coefficients of the Levi-Civita linear connection associated to $h$ and $g$ respectively. Computation of the second Euler derivative of \eqref{eq_1_lag_wave_maps} yields
\begin{equation}
\begin{aligned}
	\delta^{(1)}E(\mathcal{L}_{WM})_{\varphi}[0] &: \Gamma^{\infty}_c(M\leftarrow \varphi^{*}TN) \times \Gamma^{\infty}_c(M\leftarrow \varphi^{*}TN ) \rightarrow \mathbb{R} \\ (\vec{X},\vec{Y}) \mapsto & \int_M\frac{1}{2}\left[g^{\mu\nu}(x)h_{ij}(\varphi(x)) \nabla_{\mu}{X}^i(x) \nabla_{\nu}{Y}^j(x)+A^{\mu}_{ij}(\varphi(x))\left(\nabla_{\mu}{X}^i(x){Y}^j(x)+\nabla_{\mu}{Y}^i(x){X}^j(x)\right)\right.\\
	& \space\ \space\ \space\ \space\ \space\ + \left. B_{ij}(\varphi(x)){X}^i(x){Y}^j(x) \right]d\mu_g(x)\ ,
\end{aligned}
\end{equation}
where we choose $f\equiv 1$ in a neighborhood of $\mathrm{supp}(\vec{X})\cup \mathrm{supp}(\vec{Y})$ as done before. One can show (see \textit{e.g.} \cite[$\S$ 4.6 pp. 130]{carfora2017quantum}) that the coefficients $A^{\mu}_{ij}$ always vanish and
\begin{equation*}
    \begin{split}
    &\delta^{(1)}E(\mathcal{L}_{WM})_{\varphi}[0] (\vec{X},\vec{Y})\\
    &\qquad =\int_M \frac{1}{2} \big( g^{\mu\nu}(x)h_{ij}(\varphi(x))\nabla_{\mu} {X}^i(x) \nabla_{\nu}Y^j(x)+ R^k_{ilj}(\varphi(x))p^{\alpha}_k(\varphi(x))\varphi^l_{\alpha}(x){X}^i(x)Y^j(x)\big)d\mu_g(x)
\end{split}
\end{equation*}
where $R$ are the components of the Riemann tensor of the Riemannian metric $h$, and $p^{\alpha}_k\doteq \frac{\partial \lambda}{\partial y^{k}_{\alpha}}$ is the conjugate momenta of the Lagrangian density $\lambda$. It is evident that the induced differential operator $D_{\varphi}$ can be locally expressed as
\begin{equation}\label{eq_1_wave_maps_diff_op}
	D_{\varphi}(\vec{X})(x) =\big(g^{\mu\nu}(x)h_{ij}(\varphi(x))\nabla_{\mu \nu} \vec{X}^i(x) + R^k_{ilj}(\varphi(x))p^{\alpha}_k(\varphi(x))\varphi^l_{\alpha}(x){X}^i(x)\big) dy^j\big|_{\varphi(x)}\ ,
\end{equation}
where $dy^j\big|_{\varphi(x)}$ is the vertical differential of the bundle $M\times N$ evaluated at $\varphi(x) \in N$. The associated principal symbol is
$$
	\sigma_2(D_{\varphi})= \frac{1}{2} g^{\mu\nu}\frac{\partial}{\partial x^{\nu}}\frac{\partial}{\partial x^{\mu}}\otimes id_{\varphi^{*}TN}. 
$$
Theorem \ref{thm_1_properties_of_Green_functions} then ensures the existence of the advanced and retarded propagators for Wave Maps $\mathcal{G}^{\pm}_{WM}[\varphi]$. Their difference defines the causal propagator and consequently the Peierls bracket as in Definition \ref{def_1_Peierls}. For greater clarity, let us write the expression of the causal propagator when $m=4$.\\ 

We start by writing a parametrix for the differential operator \eqref{eq_1_wave_maps_diff_op}, we shall use its coefficients to determine the form of the (unique) advanced and retarded fundamental solutions which in turn give the causal propagator. We shall assume, in order to be able to write explicitly the above quantities, to have chosen a suitable neighborhood $\mathcal{O}$ of the diagonal of $M^2$ such that 
\begin{itemize}
    \item for each $(x,y) \in \mathcal{O}$ there is a geodesically convex normal open set $\Omega$ containing both $x$ and $y$; 
    \item If $\Omega, \ \Omega'$ are geodesically convex normal sets, then their intersection is still geodesically convex.
\end{itemize}
This is always possible in view of \cite[Theorem 10 pp. 130]{moretti2021global}.\\

Any distributional section $H \in \Gamma^{-\infty}(M^2 \leftarrow \boxtimes^2 \varphi^*TN)$ can be written, by choosing trivializations on the vector bundle, as
$$
    H_{\varphi}(x,y) = H^{ai}(\varphi(x),\varphi(y)) \partial_a|_{\varphi(x)} \boxtimes \partial_i|_{\varphi(y)}\ .
$$
where each $H^{ai}\in \mathcal{D}'(M\times M)$. $H$ is a \textit{parametrix} if  
$$
    D_{\varphi}(x)H_{\varphi}(x,\cdot) - \delta_x \in C^{\infty}(M)\ , \quad D_{\varphi}(y)H_{\varphi}(\cdot,y) - \delta_y  \in C^{\infty}(M)\ .
$$
For notational convenience we shall omit the subscript $\varphi$ from the operator $D_\varphi$ introduced in \eqref{eq_1_wave_maps_diff_op}. One can show, using \textit{e.g.} \cite[Theorem 4.2.1 pp. 131, Theorem 4.3.1 and Lemma 4.3.2 pp. 142]{friedlander1975wave}, that 
\begin{equation}\label{eq_1_wave_maps_parametrix}
    H^{ai}(x,y) = \frac{1}{2\pi} \frac{U^{ai}(x,y)}{\sigma(x,y)}+ \widetilde V^{ai}(x,y)\ ,
\end{equation}
where $\sigma(x,y)$ is the squared geodesic distance between $x$ and $y$, $\widetilde V^{ai}(x,y) = \sum_{l\geq 0} V^{ai}_l(x,y) \frac{\sigma^l(x,y)}{l!}\rho(c_l \sigma(x,y)) $ with $\rho \in C^{\infty}_c(\mathbb R)$, $c_k \nearrow \infty$ and $U^{ai}, V^{a i} \in C^{\infty}(M\times M)$ are the solution of the differential equations involving the so-called van Vleck-Morette determinant $\Delta$:
$$
\begin{aligned}
    g^{\mu\nu}\nabla_{\mu}\sigma\Big( \nabla_{\nu} U^{ai} -\frac{1}{2}\frac{\nabla_{\nu}\Delta}{\Delta}U^{ai}\Big) & =0,\\
    V_0^{ai} +  g^{\mu\nu}\nabla_{\mu}\sigma\Big( \nabla_{\nu} V_0^{ai} -\frac{1}{2}\frac{\nabla_{\nu}\Delta}{\Delta}V_0^{ai}\Big)& =-\frac{1}{2} D U^{ai},\\
    V_l^{ai} +  \frac{1}{l+1}g^{\mu\nu}\nabla_{\mu}\sigma\Big( \nabla_{\nu} V_l^{ai} -\frac{1}{2}\frac{\nabla_{\nu}\Delta}{\Delta}V_l^{ai}\Big)& =-\frac{1}{2l(l+1)} D V_{l-1}^{ai}\ .
\end{aligned}
$$
The calculations show that $H_\varphi$ is symmetric in its variables, moreover we can equivalently write $D_{\varphi}(x)H_{\varphi}(x,\cdot)- \delta_x \in C^{\infty}(M)$ as
$$
    \int_{M} D_{\varphi \ ab}(x) H^{ai}(x,y) \vec X^b(x) d\mu_g(x) = \vec X^i(y)+ \int_{M}  D_{ab}(x) W^{ai}(x,y) \vec X^b(x) d\mu_g(x) \qquad \forall \vec X \in T_{\varphi}C^{\infty}(M,N)\ ,
$$
where $W\in \Gamma^{\infty}(M^2 \leftarrow \boxtimes^2(\varphi^*TN))$ denotes the smooth reminder which forbids $H$ from being a solution. To get the fundamental solutions from the parametrix we denote by $\delta_{\pm}(\sigma), \theta_{\pm}(\sigma) \in \mathcal{D}'(\mathcal{O})$ the distributions defined by
$$
\begin{aligned}
    \delta_{\pm}(\sigma) & = \lim_{\epsilon \searrow 0} \delta(\sigma \mp \epsilon)\\
    \theta_{\pm}(\sigma)(x,y) &= \begin{cases}
        1 \qquad \mathrm{if} \ x \in J^{\pm}(y)\ ,\\
        0 \qquad \mathrm{if } \ x \notin J^{\pm}(y)\ ;
    \end{cases}
\end{aligned}
$$
then we can show (see \textit{e.g.} \cite[Theorem 4.5.1 pp. 154]{friedlander1975wave}) that 
\begin{equation}
    \mathcal{G}^{\pm}_{\varphi} \doteq \frac{1}{2\pi} \big( U\delta_{\pm}(\sigma) + V_{\pm}\big)\ ,
\end{equation}
with $V_{\pm}\in \Gamma^{\infty}(M^2 \leftarrow \boxtimes^2(\varphi^*TN))$, $\mathrm{supp}(V_{\pm}) =\mathrm{supp}\big(\theta_{\pm}(\sigma)\big) $, are the unique advanced and retarded fundamental solutions. Equivalently, for all $\vec X \in \Gamma^{\infty}_c(M \leftarrow \varphi^*TN)$,
$$
    \int_M D_{ab}(x)\mathcal{G}_{\pm}^{ai}(x,y) \vec X^b(x) d\mu_g(x) = \vec X^i(y)\ .
$$
In particular,
\begin{equation}
\begin{aligned}
    V_{\pm}^{ai}(x,y) = \widetilde V^{ai}(x,y) & + \int_{J^{\pm}(y)\cap J^{\mp}(x)} \widetilde V^{aj}(x,z) L_{j}^i(z,y)d\mu_g(z)\\ 
    & + \int_{J^{\pm}(y)\cap \partial J^{\mp}(x)} \widetilde U^{aj}(x,z) L_{j}^i(z,y)d\mu_{\sigma}(z)\ ,
\end{aligned}
\end{equation}
where $d\mu_{\sigma}\equiv d\mu_g|_{\sigma=0}$, $L\in \Gamma^{\infty}(M^2 \leftarrow \varphi^*TN' \boxtimes \varphi^*TN)$ is the inverse kernel of $D W \in \Gamma^{\infty}(M^2 \leftarrow \varphi^*TN \boxtimes \varphi^*TN')$ (see \cite[Lemma 4.4.2 pp. 149 and Theorem 4.4.2 pp. 150]{friedlander1975wave} for the existence of the inverse kernel), \textit{i.e.} it satisfies
$$
\begin{aligned}
    & \vec X^i(y) + \int_M DW^i_a(x,y) \vec X^a(x) d\mu_g(x) = \vec Y^i(y) \equiv \int_M DH^i_a (x,y) \vec X^a(x) d\mu_g(x)\\
    \Leftrightarrow & \quad \vec X^a(x) = \vec Y^a(x) + \int_M L^a_i (x,y) \vec Y^i(y) d\mu_g(y)\ .
\end{aligned}
$$
Finally, we can write the causal propagator as
\begin{equation}\label{eq_1_wave_maps_propagator}
    \mathcal{G}_{\varphi} = \mathcal{G}_{\varphi}^+-\mathcal{G}_{\varphi}^- = \frac{1}{2\pi}\big[ U \big(\delta_+(\Gamma)-\delta_-(\Gamma)\big) + V_+ - V_-\big]\ .
\end{equation}

The results of Sections \ref{section_peierls_bracket} \ref{section_properties_of_muc_functionals} do apply to wave maps: it is therefore possible to obtain a $*$-Poisson algebra generated by microcausal functionals $\mathcal{F}_{\mu{c}}(M\times N,g)$ and endow the latter space with the topology of a nuclear locally convex space.

\section{Conclusions and Outlook}

With the present paper we have partially explored the generalization of \cite{acftstructure} to the space of configurations which are general fiber bundles. We remark that the main technical difficulties are the lack of a vector space structure for images of fields and the fact that while $C^{\infty}(M)$ is a Fréchet space, $\Gamma^{\infty}(M\leftarrow B)$ is not even a vector space. This forces us to use a manifold structure for $\Gamma^{\infty}(M\leftarrow B)$ and an appropriate calculus as well. For the manifold structure we choose locally convex spaces as modelling topological vector spaces. The notion of smooth mappings is however not unique, for instance,  one could have used the convenient calculus of \cite{kriegl1997convenient}; however, this calculus has the rather surprising property that smooth mappings need not be continuous (see \cite{glockner2005discontinuous} for an example). This is rather annoying in view of the heavy usage of distributional spaces in sections 4 and 5, therefore we deemed more fitting Bastiani calculus (see \cite{mb}, \cite{michal1938differential}). 

As in the case of finite dimensional differentiable manifolds -- where geometric properties can equivalently be described locally but are independent from the local chart used --  here too we tried to establish, where possible, independence from the choice of ultralocal charts. It stands out that the characterization of microlocality by Proposition \ref{porop_1_muloc_charachterization} is, to the best of our knowledge, inherently chart dependent. Notice that a sufficient condition to avoid this unpleasant fact is to provide a combination of \textit{topological conditions} on the bundle and \textit{regularity conditions} on the integral kernel of the first derivative of the functional. A noteworthy question would be which additional hypotheses, if any, could one add to Proposition \ref{porop_1_muloc_charachterization} to make the characterization of microlocal functionals intrinsic.

Another possible interesting point to develop in the future is the time-slice axiom \cite{haag1964algebraic,CF09,brunetti2023anomalies} in this classical setting. Here however we would need more structural insights coming from the full equation of motion, not only the linearized version that we use in the paper (see, \textit{e.g.}\ \cite{acftstructure2}). 

Finally we mention that the definition we use of wave front set for distributional sections of vector bundles, see \ref{eq_WF_distributional_sections}, does leave some space for improvement, in particular one could use the refined notion of polarization wave front sets which appeared in \cite{polar} and attempt to re-derive all important result with this finer notion of singularity.

\section*{Acknowledgements}
We wish to thank Lorenzo Fatibene and Pedro Lauridsen Ribeiro for discussions and useful comments.


\appendix

\section{Bastiani calculus}

The notion of calculus in locally convex spaces we need is the so-called Bastiani calculus, its origin can be traced back to \cite{mb,michal1938differential}. Here we introduce the basic definitions and properties.

In the sequel we will use complete locally convex spaces and denote them by capital letters $X$, $Y$, $Z$. 

\begin{definition}
    Let $U\subset X$ be an open subset, a mapping $P:U\subset X \to Y$ is Bastiani differentiable if the following conditions hold:
    \begin{itemize}
        \item[$(i)$] $ \lim_{t\to 0} \frac{1}{t}\big(P(x+tv)-P(x)\big)=dP[x](v)$ exists for all $x\in U$, $v\in X$, giving rise to a mapping $dP:U\times X \to Y$ linear in the second entry; 
        \item[$(ii)$] The mapping $dP:U\times X \to Y, \ (x,v) \mapsto dP[x](v)$ is jointly continuous. 
    \end{itemize}
\end{definition}

\begin{theorem}\label{thm_A_Riemann_curve_integral}
    Let $\gamma:[a,b]\subset \mathbb{R} \to X$ be a continuous curve, then there exists a unique object $\int_a^b \gamma(t)dt\in X$ such that
    \begin{itemize}
        \item[$(i)$] for every continuous linear mapping $l:X\to \mathbb{R}$
        $$
            l\bigg(\int_a^b\gamma(t)dt\bigg) =\int_a^bl(\gamma(t))dt;
        $$
        \item[$(ii)$] for every seminorm $p_i$ on $X$, 
        $$
        p_i\bigg(\int_a^b\gamma(t)dt\bigg) \leq \int_a^bp_i(\gamma(t))dt;
        $$
        \item[$(iii)$] for all continuous curves $\gamma$, $\beta$, $\int_a^b \big(\gamma(t)+ \beta(t)\big)dt=\int_a^b \gamma(t)dt+\int_a^b \beta(t)\big)dt$;  
        \item[$(iv)$] for all $\lambda\in \mathbb{R}$, $\int_a^b \big(\lambda \gamma(t)\big)dt=\lambda\bigg(\int_a^b \gamma(t)dt\bigg)$;
        \item[$(v)$] for all $a\leq c \leq b$, $\int_a^b\gamma(t) dt=\int_a^c\gamma(t) dt  +\int_c^b\gamma(t) dt$. 
    \end{itemize}
\end{theorem}
The proof of this result is quite standard (see \textit{e.g.} \cite[Theorem 2.2.1 pp. 71]{hamilton1979inverse}).

\begin{lemma}
    Let $U\subset X$, $V\subset Y$ be open subsets, let $P:U\to Y$ and $Q:V\to Z$ be Bastiani differentiable mappings such that $P(U) \subset V$, then $Q\circ P : U \to Z$ is Bastiani differentiable and $d(Q\circ P)[x](v)= dQ[P(x)]\big(dP[x](v)\big)$ for each $x\in U$, $v\in X$. 
\end{lemma}
\begin{proof}
    We claim that $P$ is Bastiani differentiable if and only if there is a continuous mapping $L:U\times U \times X \to Y$ linear in the third entry such that 
    $$
        P(x_1)-P(x_2)= L[x_1,x_2](x_1-x_2).
    $$
    In particular we have $L[x,x](v)=dP[x](v)$. The necessity condition follows by considering the smooth curve $\gamma(t)=x_1+t(x_2-x_1)$ with $t\in [0,1]$, then $dP[\gamma(t)](v)$ is a smooth curve for each $v\in X$, by \ref{thm_A_Riemann_curve_integral}, define
    $$
        L[x_1,x_2](v)\doteq \int_0^1 dP[\gamma(t)](v)dt.
    $$
    It is clear that $L$ is continuous and linear in the third entry, moreover since $\frac{d}{dt}P(\gamma(t))=dP[\gamma(t)](x_1-x_2)$ we get $L[x_1,x_2](x_1-x_2)= dP[x_1](x_1-x_2)$. For the sufficiency condition just note that $\frac{1}{t}\big(P(x+tv)-P(x)\big)=L(x, x+tv)[v]$, so taking the limit we get our claim. Next suppose that 
    $$
        P(x_1)-P(x_2)= L[x_1,x_2](x_1-x_2),
    $$
    $$
        Q(y_1)-Q(y_2)= M[y_1,y_2](y_1-y_2).
    $$
    Then $Q(P(x+tv))-Q(P(x))=M[P(x+tv),P(x)]\big(P(x+tv)-P(x)\big)=t M[P(x+tv),P(x)]\big(L[x+tv,x](v)\big)$ thus dividing by $t$ and taking the limit yields $d(Q\circ P)[x](v) \equiv M[P(x),P(x)]\big(L[x,x](v)\big)= dQ[P(x)]\big(dP[x](v)\big)$.
\end{proof} 

\begin{definition}\label{def_A_Bastiani_smooth_map}
    A mapping $f:U\subset X \to Y$ is $k$ times Bastiani differentiable if $d^{k-1}f:U\times X \cdots \times X \to Y$ is Bastiani differentiable. The $k$th derivative of $f$ at $x$ is defined by recursion
    \begin{equation}\label{eq_A_Bastiani_k_der}
        d^{k}f[x](v_1,\ldots,v_k) \doteq \lim_{t\to 0} \frac{1}{t}\big( d^{k-1}f[x+tv_k](v_1,\ldots,v_{k-1})-d^{k-1}f[x+tv_k](v_1,\ldots,v_{k-1})\big)  .
    \end{equation}
\end{definition}

Explicit computation of $d\big( d^{k-1}f\big)$ in $(x,v_1,\ldots,v_{k-1})$ yields
$$
    \begin{aligned}
         & d\big(d^{k-1}f\big)[x,v_1,\ldots,v_{k-1}](v_k,w_1,\ldots,w_{k-1})= \sum_{j=1}^{k-1} d^{k-1}f[x](v_1,\ldots, \widehat{v_j}, w_j, \ldots, v_{k-1})\\ & \qquad + \   \lim_{t\to 0} \frac{1}{t}\big( d^{k-1}f[x+tv_k](v_1,\ldots,v_{k-1})-d^{k-1}f[x+tv_k](v_1,\ldots,v_{k-1})\big) ,
    \end{aligned}
$$
we can then see that $d\big( d^{k-1}f\big)$ is Bastiani differentiable if and only if the limit of \eqref{eq_A_Bastiani_k_der} exists and is a continuous mapping $U\times X^k \to Y$. We can thus state 
\begin{lemma}
    A mapping $f : U \subset X \to Y$ is $k$ times Bastiani differentiable if and only if for each $0\leq j \leq k$ all the derivative mappings $d^{j}f:U \times X^j \to Y$ exist and are jointly continuous. 
\end{lemma}

\section{Topologies on spaces of mappings}

Let $M$, $N$ be finite dimensional paracompact Hausdorff topological spaces, denote the space of continuous functions by $C(M,N)$. The \textit{compact open} topology $\tau_{CO}$ or CO-topology is the topology generated by a basis whose elements have the form 
\begin{equation}\label{eq_1_CO_open}
    N(K,V)=\{ \varphi \in C(M,N) : \varphi(K)\subset V\}, 
\end{equation}
where $K\subset M$ is a compact subset and $V\subset N $ is open. Roughly speaking, this topology controls the behaviour of functions only on small regions of $M$, whereas their behaviour ``at infinity" is not specified.

\begin{lemma}\label{lemma_1_CO_is_Hausdorff}
    Let $M$, $N$ as described above, if $N$ is normal, then $\big( C(M,N), \tau_{CO}\big)$ is Hausdorff. 
\end{lemma}
\begin{proof}
Supposing $\varphi \neq \psi$, then at least $\varphi(x)\neq \psi(x) $ for some $x\in M$. By continuity of $\varphi$, $\psi$ there exists an open subset $U_x$ such that $\varphi(y)\neq \psi(y)$ for each $y\in \bar{U}_x$. Without loss of generality we can suppose that $\overline{U_x}$ is compact, then $\varphi(\overline{U_x})$, $\psi(\overline{U_x})$ are compact and therefore closed. Since $N$ is normal, there are disjoint open subsets $V_{\varphi}$, $V_{\psi}$ respectively containing $\varphi(\overline{U_x})$, $\psi(\overline{U_x})$, then
$$
    N\big(U_x,V_{\varphi} \big)\cap N\big(U_x,V_{\psi} \big)=\emptyset.
$$
\end{proof}

\begin{lemma}\label{lemma_1_CO_mertizability}
Let $M$, $N$ be topological spaces, if $N$ is a complete metric space then $\big( C(M,N), \tau_{CO}\big)$ is a complete metric space as well.
\end{lemma}

If $M$ is compact and $(N,d_N)$ is metric, then a neighborhood of $\varphi$ in the compact-open topology can be given as 
$$
    B_{\epsilon}(\varphi)\doteq\{ \psi \in C(M,N): d_N\big(\varphi(x),\psi(x)\big)<\epsilon(x) \  \forall x\in M\big\},
$$
where $\epsilon:M \to \mathbb{R}_+$ is a continuous function.


Given $\varphi\in C(M,N)$, let $G_{\varphi}:M\to M\times N$ be the graph mapping associated to $\varphi$, set $\mathrm{Im}(G_{\varphi})\equiv \mathrm{gh}(\varphi)=\{(x,\varphi(x))\in M\times N: x\in M\}$. 

\begin{definition}\label{def_1_WO-top}
    The \textit{wholly open topology} $\tau_{\mathrm{WO}}$ or $\mathrm{WO}$-topology on $C(M,N)$ is generated by a subbasis of open subsets of the form 
    \begin{equation}\label{eq_1_WO-open}
        W(V)=\{ \varphi \in C(M,N): \varphi(M)\subset V\},        
    \end{equation}
    where $V\subseteq N$ is open.
\end{definition}

Note that the $\mathrm{WO}$-topology is not Hausdorff, for it cannot separate surjective functions.

\begin{definition}\label{def_1_WO^0-top}
    The \textit{graph topology} $\tau_{\mathrm{WO}^0}$ or ${\mathrm{WO}}^0$-topology on $C^{\infty}(M,N)$ is the one induced by requiring 
    $$
        G:C(M,N) \ni \varphi \mapsto G_{\varphi} \in \big( C(M,M\times N),\tau_{\mathrm{WO}}\big)
    $$
    to be an embedding.
\end{definition}

By Definition \ref{def_1_WO-top} the open subbasis of $C(M,M\times N)$ is given by subsets of the form
$$
    W(\widetilde{V}) = \{ f\in C(M,M\times N): f(M)\subset \widetilde{V}\}
$$
with $\widetilde{V}\subset M\times  N$ open subsets. When $f=G_{\varphi}$ for some $\varphi\in C(M,N)$, then the trace topology on the subset $G\big(C(M,N)\big)$ is generated by a subbasis of elements $W(\widetilde{V})$ where $\widetilde{V}=M\times V$ with $V\subset N$ open subset. Clearly $G$ is an injective mapping, and bijective onto its image. Therefore a subbasis for the $\mathrm{WO}^0$-topology is given by 
\begin{equation}\label{eq_1_WO^0-open}
    W(\widetilde{V})= \{ \varphi \in C(M,N) : G_{\varphi}\subset M\times V\}
\end{equation}
\begin{lemma}
    The $\mathrm{WO}^0$-topology is finer then the CO-topology and is therefore Hausdorff. 
\end{lemma}
\begin{proof}
We show that $\mathrm{id}_{C(M,N)}: \big( C(M,N),\tau_{\mathrm{WO}^0}\big) \to \big( C(M,N),\tau_{\mathrm{CO}}\big)$ is continuous. Let $N(K,V)$ be an open subset as in \eqref{eq_1_CO_open}, $U_1$, $U_2$ be a cover of $M$ such that $K\subset U_1$ and $U_2 =M\backslash K$. Consider the open subset
$$
    W(U_1\times V \cup U_2 \times N)=\{ \varphi \in C(M,N) : G_{\varphi}\subset U_1\times V \cup U_2 \times N\};
$$
the former is a $\mathrm{WO}^0$-open subset, which is however equal to $N(K,V)$.
\end{proof}

The main difference from the compact open topology is that the $\mathrm{WO}^0$ topology does control the behaviour of a mapping over the whole space, while the $\mathrm{CO}$ topology was limited to a compact region. Notice that although in general the $\mathrm{WO}^0$ topology is finer then the $\mathrm{CO}$ topology, when the manifold $M$ is compact the two become equivalent.

\begin{lemma}\label{lemma_1_WO^0-metric-subbasis}
    Let $M$ be paracompact and $(N,d)$ be a metric space, then a basis of neighborhood of $\varphi\in C(M,N)$ for the $\mathrm{WO}^0$-topology is given by 
    \begin{equation}
        W_{\varphi}(\epsilon)=\big\{ \psi \in C(M,N) : d\big(\varphi(x),\psi(x)\big)<\epsilon(x) \ \forall x\in M\big\},
    \end{equation}
    where $\epsilon:M \to \mathbb{R}_+$ is continuous.
\end{lemma}

\begin{proposition}\label{prop_1_WO^0-convergence}
    Let $M$ be paracompact and $(N,d)$ be a metric space, then for any sequence $\{\varphi_n\}\subset C(M,N)$, the following are equivalent:
    \begin{itemize}
        \item[$(i)$] $\varphi_n \to \varphi$ in the $\mathrm{WO}^0$-topology;
        \item[$(ii)$] there exists a compact set $K\in M$ such that $\varphi_n\big\vert_{M\backslash K}\equiv \varphi\big|_{M\backslash K}$ for each $n\in \mathbb{N}$, and $\varphi_n \to \varphi$ uniformly on $K$.
    \end{itemize}
\end{proposition}

Notice that, due to Proposition \ref{prop_1_WO^0-convergence}, the space $C(M,E)$ with $E$ vector space is not a topological vector space, in particular the multiplication mapping cannot be continuous since if $\lambda\in \mathbb R $ goes to $0$, then $\lambda\cdot f \not\to 0$ unless $f=0$ outside some compact subset of $M$.

\begin{proof}
Suppose $\varphi_n \to \varphi $ in the $\mathrm{WO}^0$-topology, however, for all $K\subset M$ compact, either $\varphi_n \not\rightarrow \varphi$ uniformly over $K$ or there is $x\in M\backslash K$ such that $\varphi_n(x) \neq \varphi(x)$. In the first case $\varphi_n \not\rightarrow \varphi$ in the CO-topology as well, contradicting the initial hypothesis. In the second case, let $\{K_n\}$ be an exhaustion of compact subsets of $M$, then for each $n\in \mathbb{N}$ there is $x_n \in M\backslash K_n$ having $\varphi_n(x_n)\neq \varphi(x_n)$. Set $0<\epsilon_n =\sup_{K_n}d(\varphi_n(x),\varphi(x))$. For each $n$ and consider the sequence of open neighborhoods of $\varphi$, $W_{\varphi}(\epsilon_n)$ as per Lemma \ref{lemma_1_WO^0-metric-subbasis}, by construction $\varphi_n \notin W_{\varphi}(\epsilon_n)$ which contradicts the convergence hypothesis. On the other hand let $\epsilon_n:M \to \mathbb{R}_+$ be the constant functions with $\epsilon_n= \sup_{x\in K}d(\varphi_n(x),\varphi(x))$, then $W_{\varphi}(\epsilon_{n_0})\cap \{ \varphi_n\}=\{\varphi\}_{n>n_0}$ by uniform convergence over $K$. Implying $\varphi_n\to \varphi$ in $\tau_{\mathrm{WO}^0}$.
\end{proof}

\begin{corollary} \label{coro_1_WO^0-curves}
    Let $M$ and $N$ as in Proposition \ref{prop_1_WO^0-convergence} and $\gamma:I\subset \mathbb{R}\to \big(C(M,N),\tau_{\mathrm{WO}^0}\big)$ be a continuous mapping with $I$ compact. Then there exists a compact $K\subset M$ such that 
    $$
        \gamma(t) : x \in  M \to N
    $$
    is constant in $M\backslash K$ for each $t\in I$.
\end{corollary}

\begin{proof}
We argue by contradiction, let $K_n$ be an exhaustion of compact subsets of $M$, then for each $n\in \mathbb{N}$ there is some $t_n\in I$, and some $x_n\in M\backslash K_n$ such that $\gamma(t_n)[x_n]\neq \gamma(t)[x_n]$ for at least a $t\in I$. Since $\{t_n\}$ is a sequence on a compact space we may assume, eventually passing to a subsequence, that $t_n\to t_0\in I$, by construction, $\{x_n\}$ does not admit a cluster point in $M$. Finally, by continuity, $t_n \to t_0 \implies \gamma(t_n)\to \gamma(t)$ in the $\mathrm{WO}^0$-topology, by Proposition \ref{prop_1_WO^0-convergence} there has to be a compact subset $K$ such that $\gamma(t_n)\equiv \gamma(t)$ outside $K$ thus the sequence $\{x_n\}$ admits a cluster point.
\end{proof}

From now on we assume that $M$, $N$ are smooth $m$, $n$ dimensional manifolds respectively, consider the $k$th order jet bundle $J^k(M,N)$. Recall as well the mappings $\alpha:J^k(M,N) \to M$, $\beta:J^k(M,N) \to N$.

\begin{definition}\label{def_1_WO^k_topology}
    The \textit{Whitney} $\mathrm{C}^k$-\textit{topology}, or $\mathrm{WO}^k$-topology, on $C^r(M,N)$ for $0\leq k\leq r \leq \infty$ is the topology induced by requiring
    $$
        j^k:C^r(M,N) \to \Big( C\big(M,J^k(M,N)\big),\tau_{\mathrm{WO}^0}\Big)
    $$
    to be topological embedding.
\end{definition}

\begin{proposition}\label{prop_1_WO^k_properties}
    The $\mathrm{WO}^k$-topology on $C^r(M,N)$ enjoys the following properties:
    \begin{itemize}
        \item[$(i)$] A subbasis of open subsets of the topology have the form 
        \begin{equation}\label{eq_1_WO^k-open}
            W(\widetilde{U})=\{ \varphi \in C^r(M,N): j^k\varphi(M)\subset \widetilde{U}\}
        \end{equation}
        where $\widetilde{U}\subset J^k(M,N)$ is an open subset.
        \item[$(ii)$] If $d_k$ is a metric on $J^k(M,N)$, then a basis of neighborhoods for the $\mathrm{WO}^k$-topology of $\varphi\in C^r(M,N)$ is
        $$
            N_{\varphi}^k(\epsilon)=\{ \psi\in C^r(M,N): d_k(j^k_x\psi,j^k_x\varphi) < \epsilon(x)\}
        $$
        where $\epsilon\in C(M,\mathbb{R}_+)$.
        \item[$(iii)$] The sequence $\{\varphi_n\}\subset C^k(M,N)$ converges to $\varphi$ in the $\mathrm{WO}^k$-topology if and only if there is a compact subset $K\subset M$ such that $\varphi_n\equiv \varphi $ in $M\backslash K$ and $j^k\varphi_n \to j^k\varphi$ uniformly over $K$. 
        \item[$(iv)$] If $I\subset \mathbb{R}$ is compact and $\gamma:I \to \big( C^r(M,N), \tau_{\mathrm{WO}^k} \big)$ is continuous, then there is a compact subset such that 
        $$
            \mathrm{ev}_x\gamma: I\ni t \mapsto \gamma(t)[x]
        $$
        is constant for all $x\in M\backslash K$.
        \item[$(v)$] $\mathrm{WO}^{\infty}$ on $C^{\infty}(M,N) $ is the projective limit topology of all $\mathrm{WO}^k$-topologies for $0\leq k\leq \infty$.
        \item[$(vi)$] A basis of open neighborhood of the $\mathrm{WO}^{\infty}$ topology on $C^{\infty}(M,N)$ consists of open subsets 
        \begin{equation}\label{eq_1_WO^infty-open}
            W(\widetilde{U})=\{ \varphi \in C^{\infty}(M,N): j^{\infty}\varphi(M)\subset \widetilde{U}\}
        \end{equation}
        where $\widetilde{U}\subset J^{\infty}(M,N)$ is open.
        \item[$(vii)$] If $\{K_n\}_n$ is an exhaustion of compact subsets of $M$, a basis for the $\mathrm{WO}^{\infty}$ topology on $C^{\infty}(M,N)$ consists of open subsets
        $$
            M(U,n)= \{ \varphi \in C^{\infty}(M,N): j^nf(M\backslash K_n^{\mathrm{o}})\subset U_n\}
        $$
        where $U_n \subset J^n(M,N)$ are open.
    \end{itemize}
\end{proposition}
\begin{proof}
We claim that on the image of $j^k$ in $C(M, J^k(M,N))$ in the $\mathrm{WO}^0$ and $\mathrm{WO}$ topology coincide. Indeed 
$$
    G_{j^k\varphi} (M)\subset M\times J^{k}(M,N)=\{ (x,j^k_x\varphi):x\in M)\}\simeq J^k(M,N),
$$
where the last is a topological embedding, therefore open subsets of $J^k(M,N)$ and $G_{j^k(C^r(M,N))}\subset  M \times J^k(M,N)$ coincide. 
As a result we obtain \eqref{eq_1_WO^k-open} by combining the above result with \eqref{eq_1_WO-open}. $(ii)$ follows by combining $(i)$ with Lemma \ref{lemma_1_WO^0-metric-subbasis}. Using that $\varphi_n\to \varphi $ in $\mathrm{WO}^k$ if and only if $j^k\varphi_n\to j^k\varphi$ in $\mathrm{WO}^0$ over $C(M,J^k(M,N))$ in conjunction with Proposition \ref{prop_1_WO^0-convergence} we get $(iii)$. Similarly $(iv)$ is obtained by combining Corollary \ref{coro_1_WO^0-curves} with the above argument. The argument for $(v)$ and $(vi)$ is the following: the topology on $J^{\infty}(M,N)$ is the coarsest such that each $\pi^{\infty}_k: J^{\infty}(M,N) \to J^k(M,N)$ is continuous. Note that we have an embedding
$$
    J^{\infty}(M,N)\simeq M\times_MJ^{\infty}(M,N) \hookrightarrow M\times J^{\infty}(M,N).
$$
Therefore we construct the following commutative diagram
\begin{center}
\begin{tikzcd}
		 & J^{\infty}(M,N) \arrow[r,hook] \arrow[d,"\pi^{\infty}_k"]  & M\times J^{\infty}(M,N) \arrow[d, "\mathrm{id}_M\times\pi^{\infty}_k"] \\
		 & J^{k}(M,N) \arrow[r,hook] & M\times J^k(M,N),
\end{tikzcd}
\end{center}
where the horizontal mapping are embeddings. Then a subbasis of $C\big( M,J^{\infty}(M,N)\big)$ for the $\mathrm{WO}^0$-topology is 
$$
    W(U_{\infty})=\{ j^{\infty}\varphi \in C^{\infty}(M,N): j^{\infty}(M)\subset U_{\infty}\}
$$
with $U_{\infty}\subset J^{\infty}(M,N)$ open. Finally we show $(vii)$. Let $U_n\subset J^n(M,N)$ be open subsets, then each
$$
    M(U)= \{ \varphi \in C^{\infty}(M,N): \ \forall n\in \mathbb N,\  j^{\infty}\varphi(M\backslash K_n^{\mathrm{o}})\subset (\pi^{\infty}_n)^{-1}U_n\};
$$
is an open subset of the $\mathrm{WO}^{\infty}$ topology. Setting $V_n=(\pi^{\infty}_0)^{-1}(U_0)\cap \cdots \cap (\pi^{\infty}_n)^{-1}(U_n)$ we have that
$$
    \{ \varphi \in C^{\infty}(M,N):  \ \forall n\in \mathbb N,\ j^{\infty}\varphi(K_{n+1}\backslash K_n^{\mathrm{o}})\subset V_n\}= M(U).
$$
The inclusion $\supset$ is clear, for the other, observe that in each region $K_{n+1}\backslash K_n^{\mathrm{o}}$ we have the requirement $ j^{\infty}\varphi(K_{n+1}\backslash K_n^{\mathrm{o}})\subset V_n$ for all $n$ which is way stronger then the corresponding $ j^{\infty}\varphi(M\backslash K_n^{\mathrm{o}})\subset (\pi^{\infty}_n)^{-1}U_n$ for all $n$. Since $J^{\infty}(M,N)$ is a fiber bundle with finite dimensional base and Fréchet space fiber and has the coarsest topology making each $\pi^{\infty}_k$ continuous, we may write
$$
    M(\widetilde{V})=\{ \varphi \in C^{\infty}(M,N): \ \forall n\in \mathbb N,\   j^{\infty}\varphi(K_{n+1}\backslash K_n^{\mathrm{o}})\subset \widetilde{V}_n\}.
$$
where each $\widetilde{V}_n\subset J^{\infty}(M,N)$ is open. We claim that $M(\widetilde{V})$ generates a topology equivalent to the $\mathrm{WO}^{\infty}$ topology. The latter's open subsets posses the form \eqref{eq_1_WO^infty-open}, is thus clear that $M(\widetilde{V})\subset W(\cup_n\widetilde{V}_n)$ thus making the former topology finer then the latter. To see the converse observe that $j^{\infty}\varphi(K_{n+1}\backslash K_n^{\mathrm{o}})$ is a compact subset of a metric space for each $n$, thus there is some $\epsilon_n>0 $ for which the open subset $\{j^{\infty}_x\psi \in \mathbb{R}^{\infty} :d(j^{\infty}_x\varphi,j^{\infty}_x\psi)<\epsilon_n \ \forall x\in K_{n+1}\backslash K_n^{\mathrm{o}} \} \subset \widetilde{V}_n$. Let then $\epsilon\in C(M)$ be a continuous function such that $\epsilon(x)<\epsilon_n$ for all $x\in K_{n+1}\backslash K_n^{\mathrm{o}}$, then $N^{\infty}_{\varphi}(\epsilon)$ is an open subset of the Whitney topology which is contained in $M(\widetilde{V})$.
\end{proof}

As a consequence of Proposition \ref{prop_1_WO^0-convergence}, when $N$ is metrizable and $M$ paracompact and second countable, given any sequence $\lbrace \varphi_{n}\rbrace_{n \in \mathbb{N}}$ we can characterize its convergence in the following way:
\begin{itemize}
\item[$(i)$] $\varphi_n \rightarrow \varphi $ in the $\mathrm{WO}^{\infty}-$topology ,
\item[$(ii)$] $\forall n'\in \mathbb{N} \space\ \exists K_{n'} \subset M$ compact such that if $n\geq n'$ then $\left. \varphi_n \right|_{M \backslash K_{n'}}=\left. \varphi \right|_{M \backslash K_{n'}}$ and $\left. \varphi_n \right|_{K_n'} \rightarrow \left. \varphi \right|_{K_n'}$ uniformly with all its derivatives.
\end{itemize}

This fact has important implications, for if we consider the finite dimensional vector bundle $E \to M$, the vector space $\Gamma^{\infty}(M\leftarrow E)$ will not be a topological vector space due to the failure of continuity for the multiplication by scalar. This can be readily seen from condition $(ii)$ above: if for instance we had $\sigma\in \Gamma^{\infty}(M\leftarrow E)$, $\mathbb{R} \ni \epsilon_n \to 0$ and $\epsilon_n\sigma \to 0$, then each $\epsilon_n\sigma$ must possess compact support, thus $\sigma$ itself ought to be compactly supported. As a consequence we get the following result:

\begin{theorem}\label{thm_1_Gamma_c_TVS}
    Let $(E,\pi,M)$ be a finite dimensional vector bundle, then $\Gamma^{\infty}_c(M\leftarrow E)\subset \Gamma^{\infty}(M\leftarrow E)$, equipped with trace of the Whitney topology on $C^{\infty}(M,E)$. Then $\Gamma^{\infty}_c(M\leftarrow E)$ is the maximal locally convex space contained in $\Gamma^{\infty}(M\leftarrow E)$. Moreover the trace topology coincides with the (natural) final topology induced by the projective limit
\begin{equation}
	\lim_{\substack{\longrightarrow\\K\subset M}}\Gamma^{\infty}_K(M\leftarrow E) =\Gamma^{\infty}_c(M\leftarrow E).
\end{equation}
Consequently, $\Gamma^{\infty}_c(M\leftarrow E) $ is a complete, nuclear and Lindel\"of space, hence paracompact and normal. In particular, for each open cover $\mathcal{U}_i$ of $\Gamma^{\infty}_c(M\leftarrow E)$, there are Bastiani smooth bump functions $\rho_i:\Gamma^{\infty}_c(M\leftarrow E)\to \mathbb{R}$ each of which has $\mathrm{supp}(\rho_i)\subset \mathcal{U}_i$, satisfying
    $$
        \sum_i\rho_i(\sigma)=1 
    $$
    for each $\sigma\in \Gamma^{\infty}_c(M\leftarrow E)$.
\end{theorem}

From a topological standpoint, Theorem \ref{thm_1_Gamma_c_TVS} implies that $\Gamma^{\infty}_c(M\leftarrow E)\subset \Gamma^{\infty}(M\leftarrow E)$ equipped with the Whitney topology is the maximal topological vector subspace. Thus if we want to give a topological manifold structure to spaces such as $\Gamma^{\infty}(M\leftarrow E)$ (or $C^{\infty}(M,N)$) this forces us to use $\Gamma^{\infty}_c(M\leftarrow E)$ as the topological vector space on which to model the manifold (see Definition \ref{def_1_infinite_dim_mfd}). It is therefore natural to seek charts of the form $(\sigma_0+\Gamma^{\infty}_c(M\leftarrow E),u_{\sigma_0})$ where $u_{\sigma_0}(\sigma)=\sigma-\sigma_0$. The undesirable fact is that $\sigma_0+\Gamma^{\infty}_c(M\leftarrow E)$ would then become a closed subset. To remedy this problem we refine the Whitney topology \textit{just enough} to make the above subsets open. To wit consider the following equivalence class: given $M$, $N$ smooth finite dimensional manifolds, set $\varphi \sim \psi$ if $\mathrm{supp}_{\varphi}(\psi)= \overline{\lbrace x \in M : \varphi(x) \neq \psi(x)\rbrace}\subset M$ is compact. 

\begin{definition}\label{def_1_refined_whitney_top}
    The \textit{refined Whitney topology}, or refined $\mathrm{WO}^{\infty}$ topology, is the coarsest topology on $C^{\infty}(M,N)$ which is finer than the $\mathrm{WO}^{\infty}$-topology and for which the sets $\mathcal{U}_{\varphi}=\lbrace \psi \in C^{\infty}(M,N): \psi \sim \varphi \rbrace$ are open.
\end{definition}

The refined Whitney topology has the same converging sequences and smooth curves as the Whitney topology since the proofs of \ref{prop_1_WO^0-convergence} and Corollary \ref{coro_1_WO^0-curves} remains essentially valid. Notice that the refinement we imposed on the topology was made by adding \textit{big} open subsets \textit{i.e.} the trace topology on subspaces of the form $\Gamma^{\infty}_c(M\leftarrow E)$ is not altered, thus the aforementioned properties remain valid. Clearly $\Gamma^{\infty}_c(M\leftarrow E) \subset\Gamma^{\infty}(M\leftarrow E)$ will then become open and moreover $\Gamma^{\infty}(M\leftarrow E)$ becomes a topological affine space with model topological vector space $\Gamma^{\infty}_c(M\leftarrow E)$. The locally convex modeling space $\Gamma^{\infty}_c(M\leftarrow E)$ will however no longer be a Baire space: if $N=\mathbb{R}$ and $K_n$ is an exhaustion of compact subsets of $M$, then $\cup_n C^{\infty}_{K_n}(M)=C^{\infty}_c(M)$, however $C^{\infty}_{K_n}(M)\subset C^{\infty}_{c}(M)$ is not dense for all $n\in \mathbb{N}$.

\begin{proposition}\label{prop_1_continuity_of_push_forward}
    Let $M$, $M'$, $N$, $N'$ be smooth finite dimensional manifolds,
    \begin{itemize}
        \item[$(i)$] if $f:M'\to M$ is a proper smooth mapping, then $f^*:C^{\infty}(M,N) \to C^{\infty}(M',N) $ is continuous in both the Whitney and refined Whitney topology;
        \item[$(ii)$] if $h:N\to N'$ is a smooth mapping, then $h_*:C^{\infty}(M,N) \to C^{\infty}(M,N') $ is continuous in both the Whitney and refined Whitney topology.
    \end{itemize}
\end{proposition}

\begin{proof}
    The proof of $(i)$ can be directly obtained by using Proposition 7.3 in \cite{michor2011manifolds} while keeping in mind that $f^*(\varphi)=\varphi\circ f$. For $(ii)$ consider a Whitney open subset $M'(U)=\{ \varphi\in C^{\infty}(M,N') : \ j^n\varphi(M\backslash K_n)\}\subset U_n \subset J^{n}(M,N') \forall n \in \mathbb N \}$ in $C^{\infty}(M,N')$. The mapping $j^nh : J^n(M,N) \to J^n(M,N')$ is smooth and hence continuous, thus set $V_n=(j^nh)^{-1}(U_n)$. Then $(h_*)^{-1}(M'(U))=M(V)$ which implies continuity of $h_*$ in the Whitney topology. For the refined Whitney topology one notes that if $\varphi\sim \varphi'$, then $h\circ \varphi \sim h \circ \varphi'$ as well, thus $h_*\mathcal{U}_\varphi=\mathcal{U}_{h_*\varphi}$, which in turn implies that $h_*$ remains continuous even when refining the topology. 
\end{proof}


\begin{theorem}[\textbf{Proposition 4.8 pp. 38 \cite{michor2011manifolds}}]\label{thm_A_Bastiani smooth_pushforward}
    Let $(E_i,\pi_i,M)$, $i=1,2$ be finite dimensional vector bundles, suppose that $\alpha: U\subset E_1\to E_2 $ is a smooth fibered morphism projecting to the identity of $M$ and let $\sigma_0\in \Gamma^{\infty}_c(M\leftarrow E_1)$ having $\sigma_0(M)\subset U$, $\alpha(\sigma_0)\in \Gamma^{\infty}_c(M\leftarrow E_2)$. Then the mapping $\alpha_{*}:\mathcal{U}=\{\sigma\in \Gamma^{\infty}_c(M\leftarrow E_1): \sigma(M)\subset U\} \to \Gamma^{\infty}_c(M\leftarrow E_2)$ is a Bastiani smooth mapping, moreover, if $d_v\alpha:VU\to E_2$ is the vertical derivative of $\alpha$, we have $d(\alpha_*)=(d_v\alpha)_*$.
\end{theorem}

We remark that given any connection on the vector bundle $E_1$ it induces a splitting $TE_1=HE_1\oplus VE_1$ into horizontal and vertical vector bundle, the latter is of course independent from the connection chosen. Thus if we use local fibered coordinates on $E_1$ and $E_2$ induced by local frames $e_i$, $f_i$ respectively and study $\alpha $ in a neighborhood of $p\in E_1$, $\alpha(p)= \sum \alpha^i(x,y)f_i$, then $d_v\alpha:V_pE_1\to E_2, \ (p;\sigma) =\sum \frac{\partial \alpha^i(x,y)}{\partial y^j}\sigma^jf_i \in E_2|_{x}$.

\begin{proof}
    Its clear that if $U$ is open, $\mathcal{U}=\{\sigma\in \Gamma^{\infty}_c(M\leftarrow E_1): \sigma(M)\subset U\}$ is open in $\Gamma^{\infty}_c(M\leftarrow E_1)$ with the Whitney topology as well by $(vi)$ of Proposition \ref{prop_1_WO^k_properties} taking $\widetilde{\mathcal{U}}=(\pi^{\infty})^{-1}(M\times U)$. Next we show that $\alpha_*$ is Bastiani differentiable. This is equivalent to show that $d(\alpha_*):\mathcal{U}\times \Gamma^{\infty}_c(M\leftarrow E_1)\to \Gamma^{\infty}_c(M\leftarrow E_2)$ exists and is continuous in the Whitney topology (see Definition \ref{def_1_WO^k_topology}). We thus claim that  
    \begin{equation}
        \lim_{t\to 0} \frac{\alpha_*(\sigma+t\sigma')-\alpha_*(\sigma)}{t} = (d_v\alpha)_*(\sigma;\sigma')
    \end{equation}
    in the Whitney topology. We start by showing that for any neighborhood $U$ of $x_0\in M$, $\frac{\alpha_*(\sigma+t\sigma')-\alpha_*(\sigma)}{t}$ converges uniformly to $(d_v\alpha)_*(\sigma;\sigma')(x)$ in $U$. This is a local problem and we can thus study it using local coordinates. Notice that if $U$ lies outside the support of $\sigma'$ then the claim is trivial. By an abuse of notation we set $\sigma(x)=(x,\sigma(x))$ and likewise for $\sigma'$. Then by Taylor theorem we have
    $$
        \begin{aligned}
            \big(\alpha_*(\sigma+t\sigma')-\alpha_*(\sigma)\big)^i(x)&= \alpha^i(x,\sigma(x)+t\sigma'(x))-\alpha^i(x,\sigma(x)) = t\partial_j\alpha^i(x,\sigma(x))\sigma'^j(x)\\ & \quad +t^2\int_0^1(1-\lambda)\partial_{jk}\alpha^i(x,\sigma(x)+t\lambda\sigma'(x))\sigma'^j(x)\sigma'^k(x)d\lambda,
        \end{aligned}
    $$
    We can then estimate in $U$
    $$
    \begin{aligned}
        \bigg|\frac{\big(\alpha_*(\sigma+t\sigma')-\alpha_*(\sigma)\big)^i(x)}{t}-\big((d_v\alpha)_*(\sigma;\sigma')\big)^i(x) 
    \end{aligned}\bigg| \leq |t| C_{\alpha,\sigma,\sigma',U}
    $$
    for each $x\in U$, establishing uniform convergence. Moreover, since $d_v\alpha: VU\subset VE_1 \to E_2$ remains a smooth fibered morphism, the mapping $(d_v\alpha)_*:\mathcal{U}\times \Gamma_c^{\infty}(M\leftarrow E_1)\to \Gamma_c^{\infty}(M\leftarrow E_2)$ is continuous by Proposition \ref{prop_1_WO^k_properties}. Moreover, if $d^k_v\alpha : \otimes^k_MV_pU$ is the mapping locally defined by
    $$
        d^k_v\alpha[p]: \otimes^k V_pE_1\ni (s_1,\ldots,s_k) \mapsto
        \partial_{j_1 \ldots j_k}\alpha^i(p) s_1^{j_1}\cdots s_k^{j_K},
    $$
    a similar argument to the one above shows that for each $x\in U$ the mapping $d^{(k-1)}(\alpha_*)[\sigma+t\sigma'](\sigma_1,\ldots,\sigma_{k-1})(x)-d^{(k-1)}(\alpha_*)[\sigma](\sigma_1,\ldots,\sigma_{k-1})(x)$ converges uniformly to $(d^k_v\alpha )_*(\sigma;\sigma',\sigma_1,\ldots,\sigma_{k-1})(x)$. Then again, Proposition \ref{prop_1_continuity_of_push_forward} implies the continuity of $d^{(k)}(\alpha_*):\mathcal{U}\times \Gamma_c^{\infty}(M\leftarrow E_1) \cdots \times \Gamma_c^{\infty}(M\leftarrow E_1)\to \Gamma_c^{\infty}(M\leftarrow E_2)$. Finally, by $(iii)$ in Proposition \ref{prop_1_WO^k_properties}, this shows that the mapping $\alpha_*:\mathcal{U} \to \Gamma_c^{\infty}(M\leftarrow E_2)$ is Bastiani smooth.
\end{proof}

\section{Manifolds of mappings and sections}

We begin with the definition of infinite dimensional manifolds. As we mentioned earlier we shall choose to model those on locally convex spaces in view of the results by \cite{ely}, \cite{ely2}, \cite{omori}.

\begin{definition}\label{def_1_infinite_dim_mfd}
    Let $\mathscr{M}$ be a Hausdorff topological space, we say that $\mathscr{M}$ admits a Bastiani smooth manifold structure if 
    \begin{itemize}
        \item[$(i)$] there is a family $\{(\mathcal{U}_i, u_i, E_i)\}_{i\in I}$ where $\{\mathcal{U}_i\}_{i\in I}$ is an open cover of $\mathscr{M}$, $\{E_i\}_{i\in I}$ is a family of complete locally convex spaces and $u_i:\mathcal{U}_i \to E_i$ a family of homeomorphisms onto the open subsets $u_i(\mathcal{U}_i)\subseteq E_i$;
        \item[$(ii)$] for all $i,j \in I$, having $\mathcal{U}_{ij}=\mathcal{U}_i\cap \mathcal{U}_j \neq \emptyset$, the mapping 
        $$
            u_{ij}=u_j\circ u^{-1}_i: u_i(\mathcal{U}_{ij})\subseteq E_i \to u_j(\mathcal{U}_{ij})\subseteq E_j
        $$
        and its inverse $u_{ji}=u_i\circ u^{-1}_j$ are Bastiani smooth. 
    \end{itemize}
    We then call \textit{charts} elements of the family $\{(\mathcal{U}_i, u_i, E_i)\}_{i\in I}$.
\end{definition}

It follows from condition $(ii)$ above that the locally convex spaces $E_i$ linearly isomorphic. A subset $\mathscr{N}\subset \mathscr{M}$ of a differentiable manifold is called a \textit{splitting submanifold} of $\mathscr{M}$ if for each $p\in \mathscr{N}$, there are charts $(\mathcal{U},u,E)$ of $\mathscr{M}$ such that $u(p)=0\in E$ and $U(\mathcal{U}\cap \mathscr{N})=u(\mathcal{U})\cap F$, where $F$ is a closed vector subspace of $E$ for which $E=F\oplus F^c$. The collection of charts $\{(\mathcal{U}_i\cap \mathscr{N}, u_i\vert_{\mathcal{U}_i\cap \mathscr{N}},F_i)\}$ then makes $\mathscr{N}$ a manifold itself as per Definition \ref{def_1_infinite_dim_mfd}. A weaker notion of submanifolds requires that $F$ is just a closed subspace of $E$, in this case we say that $\mathscr{N}$ is a \textit{non-splitting submanifold}. 

Next we define the tangent bundle. Let $\mathscr{M}$ be a Bastiani smooth manifold with atlas $\lbrace(\mathcal{U}_{i},u_{i},E_{i})\rbrace$. A \textit{tangent vector} is an equivalence class of elements $(p,v,U_{i},u_{i},E_{i})$, with $p\in \mathscr{M}$ and $v \in E_{i}$, where $(p,v,\mathcal{U}_{i},u_{i},E_{i})$ and $(p',w,U_{j},u_{j},E_{j})$ are equivalent if $p=p'$ and $d(u_{ij}[u_{i}(p)])(v)=w $. We denote by $T_p\mathscr{M}$ the set of all tangent vectors to $p$, moreover setting $T\mathscr{M} = \bigsqcup_{p \in \mathscr{M}} T_p\mathscr{M} $ we obtain the \textit{space of tangent vectors} of $\mathscr{M}$. It is easy to see that $T\mathscr{M}$ carries a natural structure of Bastiani smooth manifold. To wit, observe that we can always define a canonical projection $\tau : T\mathscr{M} \to \mathscr{M} $. For the family of charts set $\{ (\widetilde{\mathcal{U}}_{i}, \widetilde{u}_{i}, E_{i} \times E_{i}) \} $ where $( {\mathcal{U}}_{i}, {u}_{i}, E_{i}) $ is a chart of $\mathscr{M}$, $\widetilde{\mathcal{U}}_{i}=\tau^{-1}\big( \mathcal{U}_i\big)$, $\widetilde{u}_{i}: \tau^{-1}(\mathcal{U}_i)\ni \widetilde{p}  \mapsto (u_{i}(x),v) \in E_i\times E_i$.  

The topology on $T\mathscr{M}$ is the unique one making each $\widetilde{u}_{i}$ into a homeomorphism, also the transition mapping $\widetilde{u}_{ij}:(x,v) \mapsto (u_{ij}(y), du_{ij}[x](v))$ is Bastiani smooth since $u_{ij}$ is itself smooth in the first place. It is easily shown that $T\mathscr{M}$ is Hausdorff, thus $T\mathscr{M}$ is a differentiable manifold according to Definition \ref{def_1_infinite_dim_mfd}.\\

Next we give a manifold structure to $C^{\infty}(M,N)$ with $M, \ N$ smooth finite dimensional manifolds. We first recall that given any Riemannian $h$ on $N$ there exists the Riemannian exponential $\exp_y: U\subset T_yN \to N, w\mapsto \exp_y(w)$, where $\exp_y(w)$ is the value of the geodesic starting at $y$ with velocity $w$ at time $t=1$. Since $\exp_y(0)=y$, and $T_y\exp_y=id_{T_yN}$, $\exp_y$ is a local diffeomorphism, then we can define a local diffeomorphism $(\tau_N,\exp):\widetilde{U}\subset TN \to \mathcal{O }\subset N \times N : (y,w) \to (y, \exp_{y}(w))$ onto an open subset $\mathcal{O }$ of the diagonal of $N\times N$. 

\begin{theorem}\label{thm_1_mfd_mappings}
    Let $M$, $N$ be smooth finite dimensional manifolds, then $C^{\infty}(M,N)$ is a Bastiani smooth manifold according to Definition \ref{def_1_infinite_dim_mfd}, modelled on the nuclear locally convex space $\Gamma^{\infty}_c(M\leftarrow f^*TN)$.
\end{theorem}

\begin{proof}
    Let $\varphi\in C^{\infty}(M,N)$, then define $\mathcal{U}_{\varphi}$ to be the subset of all $g \in C^{\infty}(M,N)$ with compact support with respect to $\varphi$, such that $(\varphi,\psi)(M)\subset (\tau_N,\exp)\big(\widetilde{U}\big)$, then $\mathcal{U}_{\varphi}$ is an open subset for example by $(vii)$ in Proposition \ref{prop_1_WO^k_properties}. Let then
    $$
        u_{\varphi}: \mathcal{U}_{\varphi} \ni \psi \mapsto u_{\varphi}(\psi)\in \Gamma^{\infty}_c(M\leftarrow \varphi^*TN)
    $$
    defined as follows:
    \begin{equation}\label{eq_1_trivial_gamma_local_chart}
        u_{\varphi}(\psi)(x)=(\tau_N,\exp)^{-1} (\varphi(x),\psi(x))\simeq \big(\varphi(x),\exp^{-1}_{\varphi(x)}(\psi(x))\big) .
    \end{equation}
    It is clear that $u_{\varphi}(\psi)$ is a smooth mapping, its image is valued in $\varphi^*TN$, moreover since $\varphi\neq \psi$ only in a compact subset of $K\subset M$, then $\exp^{-1}_{\varphi(x)}(\psi(x))\neq 0 $ if and only if $x \in K$. Thus $u_{\varphi}$ is valued into $\big\{\vec{X} \in \Gamma^{\infty}_c(M\leftarrow \varphi^*TN) : \vec{X}(M) \subset (\tau_N,\exp)^{-1}\big(\varphi^*\widetilde{U}\big)\big\}$. Therefore the mapping $u_{\varphi}$ becomes a homeomorphism between $\mathcal{U}_{\varphi}$ with the trace of the refined Whitney topology and an open subset of $ \Gamma^{\infty}_c(M\leftarrow \varphi^*TN) $ with the usual limit Fréchet topology. If $\mathcal{U}_{\varphi\psi}= \mathcal{U}_{\varphi}\cap \mathcal{U}_{\psi}\neq \emptyset$, then we can consider the transition mapping $u_{\varphi\psi}\doteq u_{\psi}\circ u_{\varphi}^{-1}:\Gamma^{\infty}_c(M\leftarrow \varphi^*TN) \to  \Gamma^{\infty}_c(M\leftarrow \psi^*TN)$. This mapping can be constructed as the push forward of
    $$
        T_{\varphi(x)}N \ni ( x,w) \mapsto \big(x, \exp_{\psi(x)}^{-1}\big(\exp_{\varphi(x)}(w)\big) \big)\in T_{\psi(x)}N,
    $$
    which is a smooth global fibered isomorphism $\varphi^*\widetilde U\subset \varphi^*TN\to \psi^*\widetilde U\subset \psi^*TN$. Then by Theorem \ref{thm_A_Bastiani smooth_pushforward} we also have that $u_{fg}$ is a smooth mapping together with its inverse $u_{\varphi\psi}$. Finally if one chooses a different metric $h'$ on $N$ inducing the exponential $\exp'$ and new charts $u'_{\varphi}$, then again, the transition mapping $u'_{\varphi}\circ u_{\varphi}$ can be obtained as the push forward of the local fibered isomorphism
    $$
        \varphi^*\widetilde U\subset \varphi^*TN \ni ( x,w) \mapsto \big(x, (\exp'_{\varphi(x)})^{-1}\big(\exp_{\varphi(x)}(w)\big) \big)\in \varphi^*\widetilde U'\subset \varphi^*TN.
    $$
    which by Theorem \ref{thm_A_Bastiani smooth_pushforward} is smooth. Therefore the smooth structure on $C^{\infty}(M,N)$ does not depend on the choice of the exponential mapping.
\end{proof}

We remark that in \cite{michor2011manifolds}, the role of the mapping $(\tau_N,\exp)$ is played by the so-called \textit{local addition}, that is a mapping $A:TN\to N\times N$ which is a local diffeomorphism onto an open subset of the diagonal for which $A(0_y)=y$ for all $y\in N$. One can show \cite[Lemmas 10.1 and 10.2 pp. 90]{michor2011manifolds} that $(\tau_N,\exp)$ is a local addition, then the proof of Theorem \ref{thm_1_mfd_mappings} can be repeated along the same lines.

In the general case of a non-trivial bundle $\pi:B\to M$, $\Gamma^{\infty}(M\leftarrow B)$ can be topologized as follows: first we give $C^{\infty}(M,B)$ the refined Whitney topology, then we note that $\varphi \in \Gamma^{\infty}(M\leftarrow B)  \subset C^{\infty}(M,B)$ if and only if $\pi_{*}(\varphi)=\pi\circ \varphi = \mathrm{id}_M$. By (ii) in Proposition \ref{prop_1_continuity_of_push_forward}, $\pi_{*}$ is continuous, so the equation $\pi_{*}(\cdot) =\mathrm{id}_M $ in $C^{\infty}(M,B)$ defines a closed subset in the refined Whitney topology. We wish to show that $\Gamma^{\infty}(M\leftarrow B)$ is a splitting submanifold of $C^{\infty}(M,B)$. First, notice that if $\varphi\in \Gamma^{\infty}(M\leftarrow B)\subset C^{\infty}(M,B)$, then $\mathcal{U}_{\varphi}\cap \Gamma^{\infty}(M\leftarrow B)$ is the set of all $\psi \in \Gamma^{\infty}(M\leftarrow B)$ such that $\psi$, $\varphi$ differ only on a compact subset of $M$. Secondly, observe that $TB$ can be split by the choice of a connection as $HB \oplus VB$, this induces the splitting $\Gamma^{\infty}_c(M\leftarrow \varphi^*TB)=\Gamma^{\infty}_c(M\leftarrow  \varphi^*HB)\oplus \Gamma^{\infty}_c(M\leftarrow \varphi^*VB)$ at the level of locally convex spaces. Finally, if $\varphi\in \Gamma^{\infty}(M\leftarrow B)$ and $(\mathcal{U}_{\varphi},u_{\varphi})$ is a chart of $C^{\infty}(M,B)$, then 
$$
    u_{\varphi}\vert_{\mathcal{U}_{\varphi}\cap \Gamma^{\infty}(M\leftarrow B)}: \mathcal{U}_{\varphi}\cap \Gamma^{\infty}(M\leftarrow B)\to \Gamma^{\infty}_c(M\leftarrow \varphi^*TB)
$$
with
\begin{equation}\label{eq_1_gamma_local_chart}
    u_{\varphi}(\psi)(x)=\big(x, \exp_{\varphi(x)}^{-1}(\psi(x))  \big).
\end{equation}
Notice that even though the image of $x\in M$ through $\psi$, $\varphi$, lies in the same fiber $\pi^{-1}(x)$ we are not guaranteed that $\exp_{\varphi(x)}^{-1}(\psi(x))\in V_{\varphi(x)}B$, for $\pi^{-1}(x)$ might fail to be totally geodesic for the Riemannian metric chosen on $B$; therefore, in general, the geodesic joining $\varphi(x)$ and $\psi(x)$ might travel outside $\pi^{-1}(x)$. If we were able to solve this issue and show that $\exp_{\varphi(x)}^{-1}(\psi(x))\in V_{\varphi(x)}B$, we could then proceed by noticing that although the splitting depends on the connection chosen, the vertical subbundle $VB = \mathrm{ker}(\tau_B:TB\to B)$ does not, therefore $\Gamma^{\infty}_c(M\leftarrow \varphi^*TB)=\Gamma^{\infty}_c(M\leftarrow  \varphi^*HB)\oplus \Gamma^{\infty}_c(M\leftarrow \varphi^*VB)$ depends on the connection chosen just on the horizontal part. We are thus left with solving the issue of $\pi^{-1}(x)$ being not totally geodesic. By \cite[Lemma 10.9]{michor2011manifolds} to each section $\varphi$ we can find a tubular neighborhood, \textit{i.e.} a vector bundle $(E_{\varphi},\widetilde \pi, \varphi(M))$ with $E_{\varphi}\subset B$ and $\widetilde \pi = \pi|_{E_{\varphi}}$. Moreover, by \cite[Lemma 10.6]{michor2011manifolds}, we can modify (through smooth diffeomorphisms) the local addition $(\tau_B,\exp_B)$ defining the chart $(\mathcal{U}_{\varphi},u_{\varphi})$ of $C^{\infty}(M,B)$ to a local addition $(\tau_{E_{\varphi}}, \widetilde \exp)$ on $E_{\varphi}$ for which 
$$
    \widetilde \exp_{\varphi}^{-1}(\psi) \in \Gamma^{\infty}_c(M\leftarrow \varphi^*TE_{\varphi}).    
$$
By construction, if $\psi(x)\in E_{\varphi}$, for all $x\in M$
\begin{equation}\label{eq_1_tilde_exp}
    \widetilde \exp_{\varphi(x)}^{-1}(\psi(x)) \in T_{\varphi(x)}\big( E_{\varphi}|_{\varphi(x)} \big) \simeq V_{\varphi(x)}E_{\varphi} \simeq V_{\varphi(x)}B.
\end{equation}
 In the sequel we will write the charts of $\Gamma^{\infty}(M\leftarrow B)$ as $(\mathcal{U}_{\varphi},u_{\varphi})$ understanding that their are the slice charts induced above. We shall call them \textit{ultralocal charts}\footnote{The term ultralocal has been introduced in \cite{forger2004currents} to signify that the mapping $u_{\varphi}$ does just depend on the point values of the mappings $\psi$, $\varphi$ without dependence on higher derivatives of the two.} in order to differentiate them from the local chart of finite dimensional manifolds that were mentioned before.\\

\begin{lemma}\label{lemma_A_locality&bastiani_implies_w-reg}
    Let $P:\Gamma^{\infty}(M \leftarrow B)\to \Gamma^{\infty}(M \leftarrow C)$ be a differential operator and $\Gamma^{\infty}(M \leftarrow B),\ \Gamma^{\infty}(M \leftarrow C)$ be endowed with the infinite dimensional structure described in Theorem \ref{thm_1_mfd_mappings}; then if $P$ is smooth it is weakly regular\footnote{To define weak regularity, consider a jointly smooth family of mappings $\Phi:\mathbb{R}\times M \to B$ such that
    \begin{itemize}
        \item for each $t\in \mathbb{R}$ fixed, $\Phi_t:M\to B in \Gamma^{\infty}(M\leftarrow B)$;
        \item there is a compact subset $H \subset M$ such that for all $x \notin H$, $\Phi_t(x)$ is constant in $t$.
    \end{itemize}
    We call such a family of mappings a one parameter \textit{compactly supported variation}. We say that a differential operator $P:\Gamma^{\infty}(M \leftarrow B)\to \Gamma^{\infty}(M \leftarrow C)$ is \textit{weakly regular} if given any compactly supported variation $\Phi$, the mapping $P[\Phi]:\mathbb{R}\times M \to C$, defined by $P[\Phi](t,x)=P[\Phi_t](x)$, is again a compactly supported variation.}.
\end{lemma}

\begin{proof}
    Suppose that $\varphi_s$ is a compactly supported variation of $\varphi_0$. We suppose also that $s\in I\subset \mathbb{R}$ with $I$ compact, but the general case is a straightforward generalization. We claim that $s\mapsto \varphi_s$ is a smooth curve in $\Gamma^{\infty}(M \leftarrow B)$, then again by Lemma \ref{lemma_1_interpolation_of_sections}, we can assume that the image of this path lies in a chart $\mathcal{U}_{\varphi}$, therefore our claim is equivalent to smoothness of 
    $$
        s \mapsto u_{\varphi_0}(\varphi_s)\in \Gamma^{\infty}_c(M \leftarrow \varphi^*VB),
    $$
    where $u_{\varphi}$ is the chart mapping defined in \eqref{eq_1_gamma_local_chart}. If $K\subset M$ is the compact where $\varphi_s\neq \varphi_0$, then $u_{\varphi}(\varphi_s)\neq u_{\varphi}(\varphi_0)$ in $K$ as well. Then it is enough to test differentiability at each order in the Fréchet space $\Gamma^{\infty}_K(M \leftarrow \varphi^*VB)$. 
    \begin{itemize}
        \item If we see $u_{\varphi}$ as a mapping from a neighborhood of the diagonal $\mathcal{O}\subset B\times B $ to the tangent space of $B$, then it is smooth, thus by (ii) in Proposition \ref{prop_1_continuity_of_push_forward}, we conclude that $u_{\varphi}(\varphi_s)=(u_{\varphi})_{*}\circ\varphi_s$ is continuous.
        \item To show differentiability, note that $\varphi(\pi(\varphi_s(x)))=\varphi(x)$ for all $x\in M$, therefore $u_{\varphi}(\varphi_s)=\widetilde\exp_{\varphi}^{-1}(\varphi_s)$, then for all $x\in M$
        $$
            \frac{d}{ds}u_{\varphi}(\varphi_s)(x)= T_{\varphi_s(x)}\widetilde\exp_{\varphi(x)}^{-1}(\dot\varphi_s);
        $$
        thus the derivative $ \frac{d}{ds}u_{\varphi}(\varphi_s)$ exists and by (ii) in Proposition \ref{prop_1_continuity_of_push_forward} is continuous due to smoothness of $T_{\bullet}\widetilde\exp_{\bullet}^{-1}: T\mathcal{O} \to TTB$.
        \item Iterating this argument we have shown smoothness of $u_{\varphi_s}$. 
    \end{itemize}
    Since $P$ is smooth, then $P(\varphi_s)$ is a smooth curve in $\Gamma^{\infty}(M \leftarrow C)$, eventually shrinking $I$ we can 
    assume that $P(\varphi_s)\subset \mathcal{V}_{P(\varphi_0)}$ \textit{i.e.} it lies inside a chart $(\mathcal{V}_{P(\varphi_0)}, v_{P(\varphi_0)})$ of $\Gamma^{\infty}(M \leftarrow C)$. Then $v_{P(\varphi_0)}\big(P(\varphi_s)\big)\subset \Gamma^{\infty}_c(M \leftarrow \sigma^*VC)$ is smooth. By $(iii)$, $P(\varphi_s)\in C^{\infty}(M,C)$ is a smooth curve and there is a compact subset $K'$ such that $P(\varphi_s)\equiv P(\varphi_0)$ outside $K'$. Therefore 
    $$
        v_{P(\varphi_0)}\big(P(\varphi_s)\big)\subset \Gamma^{\infty}_{K'}(M \leftarrow \sigma^*VC),
    $$
    the latter being a Fréchet space we can apply \cite[(2) Lemma 3.9]{kriegl1997convenient} and conclude.
\end{proof}

The tangent space at each point $\varphi$ is $T_{\varphi}\Gamma^{\infty}(M\leftarrow B)\equiv \Gamma^{\infty}_c(M\leftarrow \varphi^{*}VB)$. The tangent bundle $(T\Gamma^{\infty}(M\leftarrow B),\tau_{\Gamma}, \Gamma^{\infty}(M\leftarrow B))$ is defined in analogy with the finite dimensional case, and carries a canonical infinite dimensional bundle structure with trivializations
$$
	t_{\varphi}: \tau^{-1}_{\Gamma}(\mathcal{U}_{\varphi})\rightarrow \mathcal{U}_{\varphi} \times \Gamma^{\infty}_c(M\leftarrow \varphi^{*}VB).
$$
As usual, we can identify points of $T\Gamma^{\infty}(M\leftarrow B)$ by elements $t^{-1}_{\varphi}\left(\varphi,\vec{X}_{\varphi}\right)$. With those trivializations a tangent vector to $\Gamma^{\infty}(M\leftarrow B)$, \textit{i.e.} an element of $T_{\varphi}\Gamma^{\infty}(M\leftarrow B)$, can equivalently be seen as a section of the vector bundle $\Gamma^{\infty}_c(M\leftarrow \varphi^{*}VB)$. When using the latter interpretation, we will write the section in local coordinates as $\vec{X}(x)=\vec{X}^i(x)\partial_i\big\vert_{\varphi(x)}$. 
Finally we will use Roman letters, \textit{e.g.} $(s,\vec{u}, \dots)$ to denote elements of the topological dual space $\Gamma^{-\infty}_c(M\leftarrow \varphi^{*}VB)\equiv\big(\Gamma^{\infty}_c(M\leftarrow \varphi^{*}VB)\big)'$.
\begin{definition}\label{def_1_connection}
A connection over the (possibly infinite dimensional) bundle $(C,\pi,X)$ is a vector-valued one form $\Phi \in \Omega^1(C;VC)$ satisfying
	\begin{itemize}
	\item[$(i)$] $\mathrm{Im}(\Phi) = VC$,
	\item[$(ii)$] $\Phi \circ \Phi=\Phi$.
	\end{itemize}
\end{definition}
The mapping $\Phi $ represents the projection onto the vertical subbundle of $TC$. Given a connection $\Phi$ it is always possible to associate its canonical Christoffel form $\Gamma\doteq \mathrm{id}_{TC}-\Phi$ which will define the projection onto the space of horizontal vector fields. In our case, we consider $C=T\Gamma^{\infty}(M\leftarrow B)$, the latter has canonical trivialization
$$
	Tt_{\varphi}: \tau^{-1}_{T\Gamma}\circ \tau^{-1}_{\Gamma}(\mathcal{U}_{\varphi})\rightarrow \mathcal{U}_{\varphi} \times \Gamma^{\infty}_c(M\leftarrow \varphi^{*V}B) \times \Gamma^{\infty}_c(M\leftarrow \varphi^{*}VB) \times \Gamma^{\infty}_c(M\leftarrow \varphi^{*}VB).
$$
Therefore given $ Tt_{\varphi}^{-1}\left(\vec{Y}_{\varphi}, s_{\vec{X}}\right) \in TT\Gamma^{\infty}(M\leftarrow B)$, we can write the connection locally as
$$
	t_{\varphi}^{*}\Phi\left(\vec{Y}_{\varphi}, s_{\vec{X}}\right)=\left(\vec{0}_{\varphi}, s_{\vec{X}}-\Gamma_{\varphi}(\vec{X},\vec{Y}) \right),
$$
where the Christoffel form
$$
	\Gamma_{\varphi}\equiv t_{\varphi}^{-1}\Gamma: \Gamma^{\infty}_c(M\leftarrow \varphi^{*}VB) \times \Gamma^{\infty}_c(M\leftarrow \varphi^{*}VB) \rightarrow\Gamma^{\infty}_c(M\leftarrow \varphi^{*}VB) 
$$
can be chosen to be linear in the first two entries. For additional details about connections see \cite[Chapter VI, $\S$ 37]{kriegl1997convenient}. Instead of using the abstract notion provided by Definition \ref{def_1_connection}, in the case of manifolds of mappings, there is a more intuitive way of generating a connection. For simplicity's sake we shall do the easier case of $C^{\infty}(M,N)$, since the generalization to general bundles is almost immediate. Let $\widetilde{\Gamma}$ be a connection on the finite dimensional manifold $TN$, then we induce a connection $\Phi$ on $TC^{\infty}(M,N)$ as follows: fix $f\in C^{\infty}(M,N)$, $\vec{X},\vec{Y}\in T_fC^{\infty}(M,N)\simeq \Gamma^{\infty}_c(M\leftarrow f^{*}TN)$, $s\in T_{\vec{X}}T_fC^{\infty}(M,N)\simeq\Gamma^{\infty}_c(M\leftarrow f^{*}TN)$, then
\begin{equation}\label{eq_1_def_connection}
\big(t_{f}^{-1}\Phi\big)\big(f,\vec{X},\vec{Y},s\big)\doteq \big( \vec{0}_f, s-\Gamma_f(\vec{X},\vec{Y})\big)   
\end{equation}
where $\Gamma_f(\vec{X},\vec{Y})\in \Gamma^{\infty}_c(M\leftarrow f^{*}TN)$ is defined by 
\begin{equation}\label{eq_1_connection_induced}
    \Gamma_f(\vec{X},\vec{Y})(x)\doteq \widetilde{\Gamma}_{jk}^i(f(x))\vec{X}^j(x)\vec{Y}^k(x) \partial_i\vert_{f(x)}.    
\end{equation}
Equivalently we are setting $\Gamma_f = \widetilde{\Gamma}_*$, by Theorem \ref{thm_A_Bastiani smooth_pushforward}, the mappings $\Gamma_f$, $\Phi_f$ are Bastiani smooth, moreover they induce a connection $\Phi$. In the sequel we shall use \eqref{eq_1_conn} to induce a connection as in \eqref{eq_1_def_connection}.

\printbibliography

\end{document}